\documentclass{amsart}

\textwidth=6.5in \hoffset=-.75in

\usepackage{graphicx} 
\usepackage{comment}
\usepackage{amsmath,amssymb,amsthm,textcomp, tabu}
\theoremstyle{plain}
\usepackage{float}
\usepackage{asypictureB}
\usepackage{setspace}
\usepackage{graphicx}
\usepackage{subcaption}
\newtheorem{theorem}{Theorem}[section]

\newtheorem{remark}[theorem]{Remark}
\newtheorem{lemma}[theorem]{Lemma}

\newtheorem*{statetheorem}{Theorem}
\theoremstyle{definition}

\theoremstyle{remark}

\numberwithin{equation}{section}

\newcommand{\dX}{dX}
\newcommand{\abs}[1]{\lvert#1\rvert}

\newcommand{\C}{\mathbb{C}}
\newcommand{\N}{\mathbb{N}}

\newcommand{\M}{\mathcal{M}}

\newcommand{\ol}{\overline}

\usepackage{xcolor}

\title{Moments of the derivative of the characteristic polynomial of unitary matrices}
\author{E. Alvarez, J.B. Conrey, M.O. Rubinstein and N.C. Snaith }

\begin{document}


\begin{abstract}
    Let $\Lambda_X(s)=\det(I-sX^{\dagger})$ be the characteristic
    polynomial of a Haar distributed unitary matrix $X$. It is believed that the distribution of values of $\Lambda_X(s)$ model the distribution of values of the Riemann zeta-function $\zeta(s)$. This principle motivates many avenues of study. Of particular interest is the behavior of $\Lambda_X'(s)$ and the   distribution of its zeros (all of which lie inside or on the unit circle). In this article we present several identities for the moments 
    of $\Lambda_X'(s)$ averaged over $U(N)$, for $s \in \C$ as well as specialized to $|s|=1$. Additionally, we prove, for positive integer $k$,  that the polynomial
    $\int_{U(N)} |\Lambda_X(1)|^{2k} \dX$ of degree $k^2$ in $N$ divides the polynomial $\int_{U(N)} |\Lambda_X'(1)|^{2k} \dX$ which is of degree $k^2+2k$ in $N$ and that the ratio, $f(N,k)$,  of these moments factors into linear factors modulo $4k-1$ if $4k-1$ is prime. We also discuss the relationship of these moments 
    to a solution of a second order non-linear Painl\'{e}ve differential equation. Finally we give some formulas in terms of the $_3F_2$
    hypergeometric series for the moments in the simplest case when $N=2$, and also study the radial distribution of the zeros of $\Lambda_X'(s)$ in that case.
\end{abstract}
    
\maketitle
 
\section{Introduction}

Ever since work at the turn of the millennium connecting mean values of the Riemann zeta function with averages of characteristic polynomials of unitary matrices selected at random with respect to Haar measure from the unitary group (see for example \cite{kn:keasna00a,kn:cfkrs}), much research has been done on averages of products and ratios of characteristic polynomials and their derivatives over $U(N)$ with Haar measure.  Here we define the characteristic polynomial associated to $X\in U(N)$ to be
\begin{equation}
    \Lambda_X(s)=\prod_{j=1}^N(1-se^{-i\theta_j})=\det(I-sX^{\dagger}),
\end{equation}
where $e^{i\theta_1}, e^{i\theta_2}, \ldots, e^{i\theta_N}$ are the eigenvalues of $X$ and $X^\dagger$ is the conjugate transpose of $X$.
This is not the usual way that we teach students to write the characteristic polynomial, but this way of writing the characteristic polynomial is akin to the Hadamard product of the zeta function.
Furthermore, statistics involving $\Lambda$ are easily related to statistics involving the traditional characteristic polynomial by pulling out exponentials from the product. For the purpose of the moments of the characteristic polynomial or its derivative, both our way and the traditional way of writing the characteristic polynomial lead to identical results.

The origin of the methods used in the current work is \cite{kn:crs06} where  in 2006 the authors introduced a $k$-fold contour integral expression for averages over $U(N)$ with respect to Haar measure of a product of $2k$ characteristic polynomials (reproduced in  Lemma \ref{lem:crs} below).  From this they obtain  an asymptotic formula for large $N$ with integer $k$:
\begin{equation}\label{eq:deriv}
    \int_{U(N)}|\Lambda_X'(1)|^{2k}\dX
    \sim b_k N^{k^2+2k},
\end{equation}
where $dX$ is Haar measure on $U(N)$ normalized so as to be a probability measure
(see~\eqref{eq:haar}), with
\begin{eqnarray}
b_k = (-1)^{k(k+1)/2}
 \sum_{h=0}^k  {k \choose h}
    \bigg(\frac  d {dx}\bigg)^{k+h} \bigg(e^{-x} x^{-k^2/2}\det_{k\times k}
\big( I_{i+j-1}(2\sqrt{x}) \bigg)\bigg|_{ x=0},
\end{eqnarray}
and $I_\nu(z)$ denotes the modified Bessel function of the first kind.
The logarithmic derivative of the above determinant was shown by Forrester and Witte \cite{kn:forwit06} to be a solution of a Painlev{\'e} differential equation.

In this current work we derive several identities for the $2k$-th moments of $|\Lambda_X'(x)|$ for complex $x$ and also specialized to $|x|=1$, i.e. on the unit circle. In the last section, also study the radial distribution of the zeros of $\Lambda_X'$.

Our first theorem, in Section \ref{sec:2}, gives our first 
formula for the moments of $\abs{\Lambda_X'(x)}$, for $x$ on the unit circle.  Importantly, this is an exact expression for finite $N$. 

Having exact formulae for finite $N$, rather than only leading asymptotic expansions, is valuable in that they can provide a model for the lower terms in the analogous statistics for the moments of the derivative of the Riemann zeta function, and can also reveal combinatorial identities that underpin the moments. While our focus here is on random matrices, in other contexts, such as the full asymptotic expansion of
the moments of $\zeta$, having a precise conjecture for the lower terms was crucial
in testing the conjectured number theoretic moments~\cite{kn:cfkrs}. 

\begin{statetheorem}[Theorem \ref{theo:sumofdets} below]  For $k$ a non-negative integer, and $x \in \C$ with $|x|=1$,
\begin{eqnarray}
   \label{eq:1st}
   &&  \int_{U(N)} \abs{\Lambda_X'(x)}^{2k} \dX \nonumber \\
    && = (-1)^{\binom{k}{2}} \sum_{m=0}^k \binom{k}{m}N^{k-m} (-1)^m  \sum_{\substack{\sum_{j=1}^k t_{j} = k+m}}\binom{k+m}{t_{1}, \dots, t_{k}}    \nonumber \\ && \times 
    \det\left[ \binom{N+k+i+j-2}{2k+t_j -1} \right]_{\substack{1 \leq i \leq k \\ 1 \leq j \leq k}}. 
\end{eqnarray}
\end{statetheorem}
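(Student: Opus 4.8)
The plan is to reduce the statement to the base point $x=1$, convert the modulus into a product of one-sided averages by means of the functional equation, and then extract the derivatives by differentiating a single Schur polynomial. First I would note that the left-hand side is independent of $x$ on the unit circle: writing $x=e^{i\phi}$ and $Y=e^{-i\phi}X$, the matrix $Y$ is again Haar-distributed, $\Lambda_Y(s)=\Lambda_X(e^{i\phi}s)$, and hence $\Lambda_X'(x)=e^{-i\phi}\Lambda_Y'(1)$, so that $\int_{U(N)}\abs{\Lambda_X'(x)}^{2k}\dX=\int_{U(N)}\abs{\Lambda_Y'(1)}^{2k}\,dY$. It therefore suffices to treat $x=1$.

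Next I would use the two elementary identities that follow from $\Lambda_X(s)=\det(I-sX^\dagger)$, namely $\overline{\Lambda_X(s)}=\Lambda_{X^\dagger}(\bar s)$ (so $\overline{\Lambda_X'(1)}=\Lambda_{X^\dagger}'(1)$) and the functional equation $\Lambda_X(s)=(-1)^N\det(X)^{-1}s^N\Lambda_{X^\dagger}(1/s)$. Differentiating the latter at $s=1$ gives $\Lambda_X'(1)=(-1)^N\det(X)^{-1}\big(N\Lambda_{X^\dagger}(1)-\Lambda_{X^\dagger}'(1)\big)$. Writing $\abs{\Lambda_X'(1)}^{2k}=\Lambda_X'(1)^k\,\Lambda_{X^\dagger}'(1)^k$ and expanding the $k$-th power by the binomial theorem produces exactly the factor $\sum_{m=0}^k\binom{k}{m}N^{k-m}(-1)^m$ appearing in the statement, reducing the problem to the one-sided averages $\int_{U(N)}\det(X)^{k}\Lambda_X(1)^{k-m}\Lambda_X'(1)^{k+m}\dX$ after the Haar-invariant substitution $X\mapsto X^\dagger$ (which converts $\det(X^\dagger)^k$ into $\det(X)^k$).

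The core step is to evaluate $G(\alpha_1,\dots,\alpha_{2k})=\int_{U(N)}\det(X)^k\prod_{i=1}^{2k}\Lambda_X(\alpha_i)\,\dX$ and then recover the derivative average by applying $\prod_{i=1}^{k+m}\partial_{\alpha_i}$ and setting all $\alpha_i=1$. Using Lemma \ref{lem:crs} (equivalently, the dual Cauchy identity together with Schur orthogonality on $U(N)$) I expect $G$ to collapse to a single Schur polynomial, $G=(-1)^{kN}s_{(N^k)}(\alpha_1,\dots,\alpha_{2k})$, attached to the rectangular partition with $k$ parts each equal to $N$; the sign $(-1)^{kN}$ will later cancel the one produced by the functional equation. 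The Jacobi--Trudi identity then writes $G$ as the $k\times k$ determinant $(-1)^{kN}\det\big[h_{N-i+j}(\alpha)\big]_{1\le i,j\le k}$ in the complete homogeneous symmetric functions $h_r$.

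Finally I would differentiate this determinant. By multilinearity the operator $\prod_{i=1}^{k+m}\partial_{\alpha_i}$ distributes the $k+m$ derivatives over the $k$ columns, and a distribution placing $t_j$ derivatives on column $j$ (so $\sum_j t_j=k+m$) occurs in $\binom{k+m}{t_1,\dots,t_k}$ ways, supplying the multinomial coefficient. Applying $\partial_{\alpha_a}h_r(\alpha)=h_{r-1}(\alpha\cup\{\alpha_a\})$ repeatedly and then setting every $\alpha_i=1$, the $(i,j)$ entry becomes $h_{N-i+j-t_j}(1^{2k+t_j})=\binom{N-i+j+2k-1}{2k+t_j-1}$; reversing the order of the $k$ rows rewrites this as $\binom{N+k+i+j-2}{2k+t_j-1}$ at the cost of the sign $(-1)^{\binom{k}{2}}$, and together with the cancellation of the two factors $(-1)^{kN}$ this yields precisely the stated right-hand side. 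The main obstacle I anticipate is the determinantal bookkeeping in this last step---justifying the column-by-column distribution of derivatives, tracking the degree shifts through the symmetric-function identities, and making the row-reversal sign and the index $2k+t_j-1$ line up exactly---rather than any conceptual difficulty; a secondary point requiring care is confirming the clean Schur-function evaluation of $G$ from the precise form in which Lemma \ref{lem:crs} is stated.
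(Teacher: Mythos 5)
Your proposal is correct, and it reaches the theorem by a genuinely different route from the paper. The paper works analytically: it starts from the $k$-fold contour integral of Lemma \ref{lem:crs}, differentiates with respect to the points $a_j$, introduces two auxiliary exponential parameters $s,t$ to separate variables, applies Andr\'eief's identity to obtain a $k\times k$ determinant of contour integrals, and produces the factor $\binom{k}{m}N^{k-m}(-1)^m$ by the product rule acting on the prefactor $e^{Nt}$; the binomial-coefficient entries then arise as residues $\frac{1}{2\pi i}\oint u^{N+k+i+j-2}(u-1)^{-2k-t_j}\,du$. You instead work algebraically: your $m$-sum comes from the functional equation $\Lambda_X(s)=(-1)^N\det(X)^{-1}s^N\Lambda_{X^\dagger}(1/s)$, which converts $\abs{\Lambda_X'(1)}^{2k}$ into the one-sided averages $\int_{U(N)}\det(X)^k\Lambda_X(1)^{k-m}\Lambda_X'(1)^{k+m}\,\dX$; these are evaluated through the dual Cauchy identity and Schur orthogonality as derivatives of the single Schur polynomial $(-1)^{kN}s_{(N^k)}(\alpha_1,\dots,\alpha_{2k})$, which Jacobi--Trudi turns into $\det[h_{N-i+j}(\alpha)]$, and the binomial entries come from the principal specialization $h_s(1^n)=\binom{n+s-1}{n-1}$, with the two factors $(-1)^{kN}$ cancelling as you predicted. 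The bookkeeping you flagged as the main risk does go through: $\partial_{\alpha_a}h_r(\alpha)=h_{r-1}(\alpha,\alpha_a)$ follows from the generating function $\prod_i(1-\alpha_i z)^{-1}$, iterating over distinct indices gives $h_{r-t}$ in $2k+t$ variables, and---crucially---the symmetry of $h_r$ in all its arguments ensures that after setting every $\alpha_i=1$ a term depends only on the counts $(t_1,\dots,t_k)$, which is exactly what justifies the multinomial coefficient; the row reversal indeed costs $(-1)^{\binom{k}{2}}$, and the entry check $h_{N-i+j-t_j}(1^{2k+t_j})=\binom{N-i+j+2k-1}{2k+t_j-1}$ is right. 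The final column-by-column differentiation is structurally parallel in both arguments, so the two proofs diverge only in how the determinant and the $m$-sum are manufactured. What the paper's analytic route buys is generality: the same contour machinery extends to $x$ off the unit circle (Theorem \ref{thm:t1 t2}), where your reduction via the functional equation at the point $1$ is unavailable. What your route buys is transparency and economy: no contours or Andr\'eief identity are needed, the determinant of binomial coefficients is exposed as a specialized Jacobi--Trudi determinant, and the $m$-sum acquires a clean structural origin.
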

Note that the right hand side does not depend on the argument of $x$ 
since Haar measure on $U(N)$ is invariant under rotation.
For instance, looking ahead to~\eqref{eq:N integral},
letting $x=r \exp(i\theta)$,
and changing variables $\omega_j=\theta_j-\theta$, gives the $2k$-th moment of $|\Lambda'(r)|$. 

While computationally challenging, (\ref{eq:1st}) is explicit and simple enough
to work out the first few moments, as functions of $N$,
by summing the terms.

Additionally, by studying the determinants in the inner sum, we
are able to prove the following theorem which shows a connection to the moments of $|\Lambda_X(1)|$.
\begin{statetheorem}[Theorem \ref{thm1} below]
For $k$ a non-negative integer, 
\begin{align}
    \int_{U(N)} \abs{\Lambda_X'(1)}^{2k}\dX
    =
    \int_{U(N)} \abs{\Lambda_X(1)}^{2k}\dX \times f(N,k), 
\end{align} where $f(N,k)$ is a polynomial in $N$ of degree $2k$. 
\end{statetheorem}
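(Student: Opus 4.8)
The plan is to start from the explicit formula in Theorem~\ref{theo:sumofdets} (equation~\eqref{eq:1st}) specialized to $x=1$, and to compare it directly with the known formula for $\int_{U(N)}\abs{\Lambda_X(1)}^{2k}\dX$, which by the work of Keating--Snaith is the polynomial $\prod_{j=0}^{k-1}\frac{j!}{(j+k)!}\cdot\prod_{j=1}^{N}\frac{(j+2k-1)!\,(j-1)!}{((j+k-1)!)^2}$ (a product whose value is a polynomial in $N$ of degree $k^2$). The key structural fact I would exploit is that both quantities are expressed through determinants of binomial coefficients of the form $\binom{N+\cdots}{\cdots}$, so that divisibility should emerge from factoring a common determinant out of each column.

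First I would examine the inner determinant $\det\big[\binom{N+k+i+j-2}{2k+t_j-1}\big]$ appearing in~\eqref{eq:1st}. Writing the binomial coefficient as a ratio of factorials, each column indexed by $j$ carries a factor depending on $t_j$, and each entry is a polynomial in $N$. I would try to show that, after the double summation over $m$ and over the compositions $(t_1,\dots,t_k)$ with $\sum t_j=k+m$ is carried out, the resulting polynomial in $N$ shares as a factor the Keating--Snaith product that equals $\int_{U(N)}\abs{\Lambda_X(1)}^{2k}\dX$. Concretely, I would hope to manipulate the determinant using column operations and the Vandermonde-type structure so as to pull out the product $\prod_{j=1}^N (j+2k-1)!/((j+k-1)!)^2$ (or an equivalent closed form), leaving behind a quotient that is itself a polynomial in $N$.

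The degree bookkeeping provides a strong sanity check and a second, more robust line of attack. Since the left side has degree $k^2+2k$ in $N$ (by~\eqref{eq:deriv}) and the moment of $\abs{\Lambda_X(1)}^{2k}$ has degree $k^2$, the quotient $f(N,k)$, if it exists as a polynomial, must have degree exactly $2k$; so the real content is showing the quotient is a \emph{polynomial} rather than merely a rational function. For this I would argue that the zeros of $\int_{U(N)}\abs{\Lambda_X(1)}^{2k}\dX$, viewed as a polynomial in $N$, all occur at negative-integer or half-integer values of $N$ coming from the factors $(j+k-1)!$ in the denominator, and then verify that the derivative moment vanishes to at least the same order at each such value. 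Equivalently, I would show that the determinant formula~\eqref{eq:1st}, regarded as a polynomial in $N$, is divisible by each linear factor of the characteristic-polynomial moment; establishing these vanishing conditions at the appropriate values of $N$ is cleanest if done via the factorial representation of the determinant entries, where the poles of the factorials dictate where factors must appear.

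I expect the main obstacle to be the manipulation of the determinant in~\eqref{eq:1st} into a form where the Keating--Snaith product factors out cleanly, since the summation over compositions $(t_1,\dots,t_k)$ mixes the columns and obstructs a naive column-by-column extraction. A promising way around this is to recognize the sum over compositions with the multinomial weight $\binom{k+m}{t_1,\dots,t_k}$ as the expansion of a differential or generating-function operator acting on a single determinant, thereby collapsing the double sum into one determinant (or a short combination of them) whose factorization can be read off directly; this is analogous to the generating-function packaging already used for $b_k$ in the introduction. Once the quotient is exhibited as a single well-defined polynomial of degree $2k$, identifying it with $f(N,k)$ and confirming its degree completes the argument.
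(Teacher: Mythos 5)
Your plan starts where the paper starts (Theorem~\ref{theo:sumofdets} at $x=1$) and ends where the paper ends (degree $2k$ follows from the known order $N^{k^2+2k}$ of the derivative moment minus the degree $k^2$ of the Keating--Snaith polynomial), but the pivotal middle step --- actually exhibiting the Keating--Snaith factor --- is left as a hope in your write-up, and your diagnosis of the difficulty is backwards. You write that the summation over compositions $(t_1,\dots,t_k)$ ``mixes the columns and obstructs a naive column-by-column extraction,'' and you propose repackaging the sum as a differential or generating-function operator acting on a single determinant to get around this. In fact the naive column-by-column extraction is exactly what works, and it works term by term, uniformly in $(t_1,\dots,t_k)$. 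The reason is the identity (valid because $t_j\ge 0$)
\begin{equation*}
\binom{N+k+i+j-2}{2k+t_j-1}
=
\binom{N+k+i+j-2}{2k-1}\,
\frac{(N+i+j-k-1)(N+i+j-k-2)\cdots(N+i+j-k-t_j)}{(2k+t_j-1)(2k+t_j-2)\cdots(2k)},
\end{equation*}
which says that raising the lower index of the binomial coefficient only appends additional consecutive linear factors \emph{below} those already present. Consequently the factors common to column $j$ of the base determinant $\det\bigl[\binom{N+k+i+j-2}{2k-1}\bigr]$, namely the degree-$k$ product $(N+j)(N+j+1)\cdots(N+k+j-1)$, remain common factors of column $j$ of \emph{every} determinant in the double sum of \eqref{eq:1st}, whatever the value of $t_j$. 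Pulling these out of each term extracts the full degree-$k^2$ product from the whole sum at once; the paper's explicit row/column-reduction evaluation of the base determinant identifies that product, up to an explicit constant, with $\int_{U(N)}\abs{\Lambda_X(1)}^{2k}\dX$, and what remains in each term is manifestly a polynomial in $N$. Your proposed generating-function collapse would essentially undo Theorem~\ref{theo:sumofdets} and return you to \eqref{eq:moment as derivative of det}, where no factorization is visible, so it points away from, not toward, the proof.

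Your fallback argument is also not a proof as stated. The roots of $\int_{U(N)}\abs{\Lambda_X(1)}^{2k}\dX$ as a polynomial in $N$ are the negative integers $N=-1,\dots,-(2k-1)$ (with multiplicities forming a tent shape), not half-integers, and the real work would be to show that the polynomial given by \eqref{eq:1st} vanishes at each of these points to at least the corresponding multiplicity; nothing in your sketch establishes those vanishing conditions, and they do not follow from the factorial representation of the entries any more readily than the direct factorization does. So the proposal identifies the correct strategy and the correct degree bookkeeping, but it is missing the one identity that makes the strategy go through.
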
 
Note that the $2k$-th moment of $|\Lambda_X(1)|$ is 
known explicitly~\cite{kn:keasna00a}. They prove:
\begin{eqnarray}
    \label{eq:ks product}
    \int_{U(N)} \abs{\Lambda_X(1)}^{2k}\dX
    =
    \prod_{j=1}^N \frac{\Gamma(j)\Gamma(2k+j)}{\Gamma(k+j)^2}
    =
    \prod_{j=0}^{k-1}
    \left(
        \frac{j!}{(j+k)!}
        \prod_{i=0}^{k-1} (N+i+j+1)
    \right).
\end{eqnarray}
The first equality is valid for  $\Re{k} > -1/2$ while the second is for non-negative integer $k$. The authors of~\cite{kn:agkw} observed a factorization property for general joint moments,
where the joint moment factors into the moment of the characteristic polynomial itself and
into a factor that is connected to Painlev{\'e} equations.

A related problem, that of moments of moments, yields interesting polynomials that can be understood through lattice point counting in polytopes \cite{kn:B} \cite{kn:aen23}. It would be interesting to see if our polynomials $F(N,k)$ can be similarly understood via counting points.

Equation~\eqref{eq:1st} is also explicit enough to discover a curious factorization (see Section \ref{sect:modulo}) in $\mathbb{Z}_p[N]$, if $p=4k-1$ is prime.
We discovered this theorem empirically by examining the first few examples.

\begin{statetheorem}[Theorem \ref{theo:mod4k-1} below]
For $k$ a non-negative integer, if $4k-1$ is prime
\begin{eqnarray}
    &&(4k-1) \int_{U(N)}|\Lambda_X'(1) |^{2k} \dX=  \notag \\
    &&=  \frac{(-2) (N-2k+1) (N-2k+2)\cdots N}{ (k-1)! (k-1)!}\int_{U(N)} \abs{\Lambda_X(1)}^{2k}\dX \mod 4k-1.
\end{eqnarray}
\end{statetheorem}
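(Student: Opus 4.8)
The plan is to read the congruence as the extraction of the \emph{simple pole at $p=4k-1$} of the rational-coefficient polynomial $\int_{U(N)}\abs{\Lambda_X'(1)}^{2k}\dX$: multiplying by $p$ and reducing mod $p$ keeps exactly the monomials in $N$ whose coefficient has $p$-adic valuation $-1$. I would take \eqref{eq:1st} (specialized to $x=1$) as the source. The only denominators there are the factorials $(2k+t_j-1)!$ coming from the binomial entries $\binom{N+k+i+j-2}{2k+t_j-1}$, alongside the $p$-integral multinomials $\binom{k+m}{t_1,\dots,t_k}$ (note $k+m\le 2k<p$). Since $2k+t_j-1\le 4k-1=p$, the prime $p$ divides $(2k+t_j-1)!$ \emph{exactly once}, and only when $t_j=2k$. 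As $\sum_j t_j=k+m\le 2k$, a single $t_{j_0}=2k$ forces $m=k$ and all other $t_j=0$. Each monomial of the determinant then contains at most one factor of $p$ in its denominator, so after multiplying by $p$ only these $k$ compositions (one per choice of the distinguished column $j_0$) survive modulo $p$.

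For such a term the $j_0$-th column has entries $\binom{N+k+i+j_0-2}{p}$. Writing $a=N+k+i+j_0-2$ and using Wilson's theorem $(p-1)!\equiv-1$ together with $\prod_{r\in\F_p}(X-r)=X^p-X$, I get
$$p\binom{a}{p}=\frac{a(a-1)\cdots(a-p+1)}{(p-1)!}\equiv a-a^p\equiv N-N^p \pmod p,$$
which is independent of the row index $i$. Thus that column collapses to $(N-N^p)$ times the all-ones vector, giving
$$p\int_{U(N)}\abs{\Lambda_X'(1)}^{2k}\dX\equiv(-1)^{\binom{k}{2}+k}(N-N^p)\sum_{j_0=1}^{k}\det C^{(j_0)}\pmod p,$$
where $M=\bigl[\binom{N+k+i+j-2}{2k-1}\bigr]_{1\le i,j\le k}$ and $C^{(j_0)}$ denotes $M$ with its $j_0$-th column replaced by $(1,\dots,1)^{T}$.

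To finish I would match this against the target. It is convenient to record the Hankel-determinant form of the Keating--Snaith moment \eqref{eq:ks product}, namely $\int_{U(N)}\abs{\Lambda_X(1)}^{2k}\dX=(-1)^{\binom{k}{2}}\det M$ (checked directly for small $k$, and following from Lemma~\ref{lem:crs} or a standard evaluation). Reducing \eqref{eq:ks product} mod $p$ shows its zeros are exactly $N\equiv 2k,\dots,4k-2$ (the reductions of $-1,\dots,-(2k-1)$), with multiplicities $m_u$ summing to $k^2$. Since $N-N^p=-\prod_{r=0}^{4k-2}(N-r)$, the factor $\prod_{r=2k}^{4k-2}(N-r)$ is available to recombine with the determinant sum, while the complementary factor $\prod_{r=0}^{2k-1}(N-r)$ is exactly the product $(N-2k+1)\cdots N$ appearing on the right of Theorem~\ref{theo:mod4k-1}. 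Concretely, $\sum_{j_0}\det C^{(j_0)}$ is the directional derivative $\frac{d}{d\epsilon}\det(M+\epsilon\,\mathbf 1\mathbf 1^{T})\big|_{\epsilon=0}$, and I would show it vanishes at each zero $N=u$ of $\det M$ to order exactly $m_u-1$; the degree count $\deg\sum_{j_0}\det C^{(j_0)}=(k-1)^2=\sum_u(m_u-1)$ then shows these are all its zeros. Consequently $\prod_{r=2k}^{4k-2}(N-r)\sum_{j_0}\det C^{(j_0)}$ is a constant multiple of $\det M$, and assembling the pieces yields $\frac{(-2)(N-2k+1)\cdots N}{(k-1)!(k-1)!}\int_{U(N)}\abs{\Lambda_X(1)}^{2k}\dX$, the constant $-2/((k-1)!)^2$ being fixed by comparing leading coefficients (Wilson's theorem once more).

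The hard part will be the vanishing-order claim for $\sum_{j_0}\det C^{(j_0)}$, and with it the exact constant. The lower bound (order $\ge m_u-1$) should come from controlling the adjugate of $M$ near $N=u$ --- e.g.\ via Smith normal form over the discrete valuation ring $\F_p[N]_{(N-u)}$, where it amounts to showing the largest invariant factor has only a simple zero --- or from a direct evaluation of the Hankel determinant with one constant column (which, by Pascal's rule, gives $\sum_{j_0}\det C^{(j_0)}=1$ and $N+2$ in the cases $k=1,2$). Because $2k-1=(p-1)/2$, an attractive alternative is to evaluate $\det M$ and its perturbation mod $p$ directly from known congruences for $\binom{n}{(p-1)/2}$, which may produce the constant $-2/((k-1)!)^2$ without a separate leading-coefficient computation.
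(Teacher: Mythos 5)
Your first half is correct and coincides with the paper's own opening moves: you identify exactly the same surviving terms ($t_{j_0}=2k$, forcing $m=k$ and all other $t_j=0$), and your identity $p\binom{a}{p}\equiv a-a^p\equiv N-N^p \pmod p$ is a clean repackaging of the paper's observation that every entry of the distinguished column reduces, modulo $p=4k-1$, to the same product of $p$ consecutive integers $(N+2k-1)(N+2k-2)\cdots(N-2k+1)$, so that the column collapses to a common factor times the all-ones vector. Up to the display $p\int_{U(N)}\abs{\Lambda_X'(1)}^{2k}\dX\equiv(-1)^{\binom{k}{2}+k}(N-N^p)\sum_{j_0}\det C^{(j_0)}$ your argument is sound and matches the paper.

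From that point on, however, what you have is a plan rather than a proof, and the part you defer is precisely where the paper does its real work. Your strategy needs three unproven ingredients: (i) that $\deg\sum_{j_0}\det C^{(j_0)}=(k-1)^2$ --- note the naive bound is $(k-1)(2k-1)$, since each $\det C^{(j_0)}$ is a determinant with $k-1$ columns of degree-$(2k-1)$ entries, so even the degree count you lean on requires an argument (cancellation analogous to the one that brings $\det M$ down from degree $k(2k-1)$ to $k^2$); (ii) that the vanishing order at each zero $N\equiv u$ of $\det M$ is exactly $m_u-1$, which via your adjugate/invariant-factor reduction amounts to showing the largest invariant factor of $M$ over $\F_p[N]_{(N-u)}$ has a simple zero --- a substantive claim about this particular binomial Hankel matrix that you do not establish; and (iii) the leading-coefficient computation producing $-2/((k-1)!)^2$, which you assert but never carry out. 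The paper sidesteps all three by evaluating the sum in closed form: Lemma \ref{lem:det} converts $\sum_{j_0}\det C^{(j_0)}$ into a single determinant with one column of ones and differenced columns; the telescoping identity \eqref{eq:subtraction} and repeated row/column subtraction reduce it to the $(k-1)\times(k-1)$ instance of the Hankel determinant already evaluated in \eqref{niceformdet} (with $N\to N+1$, $k\to k-1$), which yields the degree, the zeros with multiplicities, and the constant simultaneously; Wilson's theorem then converts the factorials into the stated $-2/((k-1)!)^2$ and the factor $4k-1$ on the left. To complete your route you would either have to prove (i)--(iii) directly, which looks at least as hard as the paper's reduction, or adopt the determinant identity and Pascal-type differencing --- at which point your proof becomes the paper's.
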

Note, as part of our proof, it will emerge that
that the rational coefficients of powers of $N$ of $\int_{U(N)}|\Lambda_X'(1) |^{2k} dX$ have a single power of $4k-1$
in their denominators. In interpreting the left hand side, the factor $4k-1$ on the left hand side should first be cancelled with the $4k-1$
in the denominators ahead of reducing $\mod 4k-1$.
Apart from that, all arithmetic in this expression is modulo $4k-1$. So all other integers appearing in denominators are to be interpreted as inverses mod $4k-1$.

In Section \ref{sect:three}, we derive formulas for the $2k$-th moment
of $|\Lambda_X'(x)|$ for any $x \in \C$.
\begin{statetheorem}[Theorem \ref{thm:t1 t2} below]
   Let $x \in \C$, and $k$ be a non-negative integer. Then
\begin{equation}
   \int_{U(N)} \abs{\Lambda_X'(x)}^{2k} \dX \nonumber \\
    =(-1)^\frac{(k+1)k}{2} 
    \frac{d^k}{dt_1^k}
    \frac{d^k}{dt_2^k} 
    e^{-t_1 N}
    \det\left[ F_{N+k+i+j-1,k}(t_1,t_2,x) \right]_{\substack{1 \leq i \leq k \\ 1 \leq j \leq k}} \bigg|_{t_1=t_2=0} 
\end{equation}
where
\begin{equation}
    F_{a,k}(t_1,t_2,x)
    =
    \frac{1}{2\pi i} \oint \frac{w^{a-1}}{(w-1)^k (w-|x|^2)^k}
    \exp\left(t_1/(w-1) + t_2/(w-|x|^2)\right) dw,
\end{equation}
and the contour is a circle centred on the origin enclosing the points $1$ and $|x|^2$.
To clarify, the right hand side of the above is evaluated,
after carrying out the derivatives, at $t_1=t_2=0$.
Furthermore, if $|x| \neq 1$ and $a$ is a positive integer, then $F_{a,k}(t_1,t_2,x)$ is also equal to
\begin{align}
    \sum_{0 \leq m+n+2k \leq a}
        \frac{t_1^m}{m!}\frac{t_2^n}{n!} &\Biggl(
            \frac{|x|^{2(a-n-k)}}{(|x|^2-1)^{m+k}}
            \sum_{l=0}^{n+k-1} {a-1 \choose n+k-1-l} { -m -k \choose l} \frac{|x|^{2l}}{(|x|^2-1)^{l}} \notag \\
            &+
            \frac{1}{(1-|x|^2)^{n+k}}
            \sum_{l=0}^{m+k-1} {a-1 \choose m+k-1-l} { -n -k \choose l} \frac{1}{(1-|x|^2)^{l}}
    \Biggr).
\end{align}
\end{statetheorem}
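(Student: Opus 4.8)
The plan is to derive the first (contour-integral) form from the Conrey--Rubinstein--Snaith formula of Lemma~\ref{lem:crs} by writing the derivative moment as a confluent limit of an average of $2k$ undifferentiated characteristic polynomials, and then to obtain the second (explicit) form of $F_{a,k}$ by a residue computation. First I would use the exact identity
\begin{equation}
    \abs{\Lambda_X'(x)}^{2k}
    =
    \Bigl(\prod_{i=1}^k \frac{\partial}{\partial\alpha_i}\Bigr)
    \Bigl(\prod_{j=1}^k \frac{\partial}{\partial\beta_j}\Bigr)
    \prod_{i=1}^k \Lambda_X(\alpha_i)\prod_{j=1}^k \overline{\Lambda_X(\beta_j)}\,\bigg|_{\alpha_i=x,\ \beta_j=\bar x},
\end{equation}
which holds because the $i$-th holomorphic factor depends only on $\alpha_i$ and each antiholomorphic factor, viewed as a polynomial in its (conjugated) argument, depends only on $\beta_j$. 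Since the Haar integral commutes with the shift-derivatives, the task reduces to differentiating the shifted average $\int_{U(N)}\prod_i\Lambda_X(\alpha_i)\prod_j\overline{\Lambda_X(\beta_j)}\,\dX$ and then sending $\alpha_i\to x$, $\beta_j\to\bar x$.

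Next I would insert Lemma~\ref{lem:crs} for this shifted average and pass to the confluent limit. The lemma expresses the average as a $k\times k$ determinant of contour integrals; the holomorphic arguments contribute, after the normalisation built into the lemma, a family of poles collected at $w=1$, while the confluence $\alpha_i\bar\beta_j\to|x|^2$ produces the second family at $w=|x|^2$, and the power $w^{N+k+i+j-2}=w^{a-1}$ with $a=N+k+i+j-1$ is the $i,j$ dependence coming from the confluent-Vandermonde structure. The mechanism turning the shift-derivatives into the generating derivatives $\tfrac{d^k}{dt_1^k}\tfrac{d^k}{dt_2^k}$ is that a single holomorphic shift-derivative inserts $-N+(w-1)^{-1}$ into the integrand (a bulk/degree-$N$ part together with a local part at the pole), whose exponential generating function is $e^{-t_1N}\exp\!\bigl(t_1/(w-1)\bigr)$, whereas a single antiholomorphic shift-derivative inserts only $(w-|x|^2)^{-1}$, with generating function $\exp\!\bigl(t_2/(w-|x|^2)\bigr)$; this is exactly why $e^{-t_1N}$ appears but no $e^{-t_2N}$ does. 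Applying $\tfrac{d^k}{dt_1^k}\tfrac{d^k}{dt_2^k}\big|_{t_1=t_2=0}$ then reproduces the $k$-fold shift-derivatives at coincidence and yields the first displayed formula, the prefactor $(-1)^{(k+1)k/2}$ collecting the sign of Lemma~\ref{lem:crs} together with the $(-1)^k$ from the $e^{-t_1N}$ normalisation.

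I expect the main obstacle to be precisely this confluent bookkeeping: verifying that the single-variable generating derivatives, acting across the whole determinant through the Leibniz rule, reproduce the full $\prod_i\partial_{\alpha_i}\prod_j\partial_{\beta_j}$ at coincidence (including the determinantal antisymmetrisation), and pinning down the exact insertions $-N+(w-1)^{-1}$ and $(w-|x|^2)^{-1}$ together with the global sign. Two internal checks guide this: the case $k=1$ gives $\int_{U(N)}|\Lambda_X'(x)|^2\,\dX = N\,I_1 - I_2$ with $I_1,I_2$ the two relevant contour integrals, which at $|x|=1$ collapses to $\tfrac{N(N+1)(2N+1)}{6}$; and more generally setting $|x|=1$ must merge the two pole families into a single pole of order $2k$ at $w=1$ and recover Theorem~\ref{theo:sumofdets}, consistent with its sign $(-1)^{\binom{k}{2}}$ since $\binom{k+1}{2}=\binom{k}{2}+k$.

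Finally, for the explicit form of $F_{a,k}$ when $|x|\neq1$ and $a$ is a positive integer, I would expand both essential singularities, $\exp(t_1/(w-1))=\sum_m\tfrac{t_1^m}{m!}(w-1)^{-m}$ and $\exp(t_2/(w-|x|^2))=\sum_n\tfrac{t_2^n}{n!}(w-|x|^2)^{-n}$, so that the coefficient of $\tfrac{t_1^m}{m!}\tfrac{t_2^n}{n!}$ is the rational integral $\tfrac{1}{2\pi i}\oint w^{a-1}(w-1)^{-(m+k)}(w-|x|^2)^{-(n+k)}\,dw$; with the contour enclosing the two distinct poles this equals the sum of the residues at $w=|x|^2$ and $w=1$. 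Each residue is evaluated by Taylor-expanding the regular factor about the pole via the binomial theorem and reading off the coefficient of $(w-|x|^2)^{n+k-1}$ (resp. $(w-1)^{m+k-1}$), which produces the two inner sums over $l$ with the binomials $\binom{a-1}{\cdot}$ and $\binom{-m-k}{l}$ (resp. $\binom{-n-k}{l}$) and the stated powers of $|x|^2$ and $|x|^2-1$. The truncation $m+n+2k\le a$ then follows from a residue-at-infinity argument: once $m+n+2k>a$ the integrand decays faster than $w^{-1}$, its residue at infinity vanishes, and hence the two finite-pole residues sum to zero. This last step is routine once the reduction is in place.
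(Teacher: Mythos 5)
Your proposal follows essentially the same route as the paper: the moment is obtained by differentiating the shifted average of Lemma~\ref{lem:crs} with respect to the shifts and setting them to $x$ and $\bar{x}$, encoding the resulting insertions $-N+(w-1)^{-1}$ and $(w-|x|^2)^{-1}$ via the exponential generating variables $t_1,t_2$ so that Andr\'eief's identity produces the $k\times k$ determinant, and then evaluating $F_{a,k}$ by expanding the exponentials and summing residues at $w=1$ and $w=|x|^2$, with the truncation $m+n+2k\le a$ following from decay at infinity. The items you flag as the main obstacle (the confluent bookkeeping, the signs, and the $e^{-t_1 N}$ prefactor arising from the change of variables $w=ux$ and the replacement $t_i\to -t_i$) are exactly the manipulations the paper carries out, and your description of them is correct.
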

Here, we use the usual formula for binomial coefficients, including when the first argument is negative:
${M \choose l} = M(M-1)\ldots(M-l+1)/l!$, for all $l,M\in \mathbb{Z}$,
$l\geq 0$,  with the numerator taken to be 1 if $l=0$.

\begin{statetheorem}[Theorem \ref{thm:x=1} below]
Let $|x|=1$, and $k$ be a non-negative integer. Then
\begin{eqnarray}
    &&
    \int_{U(N)} \abs{\Lambda_X'(x)}^{2k} \dX \nonumber \\
    && \qquad = 
    (-1)^{k} \sum_{h=0}^k {k \choose h} N^{k-h} (d/dt)^{k+h}
    \det_{k\times k} \left[L_{N+i-j}^{(2k-1)}(t) \right]
    \bigg|_{t=0}.
\end{eqnarray}
\end{statetheorem}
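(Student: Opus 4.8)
The plan is to specialize the general complex-$x$ result of Theorem~\ref{thm:t1 t2} to the case $|x|=1$ and convert the two-variable contour integral $F_{a,k}(t_1,t_2,x)$ into a single-variable expression whose contour integral is a Laguerre polynomial. First I would set $|x|^2=1$, so that the two factors $(w-1)$ and $(w-|x|^2)$ in the denominator of $F_{a,k}$ coalesce into $(w-1)^{2k}$, and the exponential argument $t_1/(w-1)+t_2/(w-|x|^2)$ becomes $(t_1+t_2)/(w-1)$. Because the two differentiation variables only ever appear in the combination $t_1+t_2$, the $k$-fold derivatives in $t_1$ and in $t_2$ evaluated at $t_1=t_2=0$ should collapse, via the chain rule, into a manageable expression in a single variable $t$; this is where the mysterious single sum $\sum_{h=0}^k \binom{k}{h} N^{k-h}(d/dt)^{k+h}$ and the prefactor $(-1)^k$ (replacing $(-1)^{k(k+1)/2}$) will be produced.

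Concretely, after setting $|x|=1$ the integrand becomes
\begin{equation}
    F_{N+k+i+j-1,k}(t_1,t_2,x)\Big|_{|x|=1}
    =
    \frac{1}{2\pi i}\oint
    \frac{w^{N+k+i+j-2}}{(w-1)^{2k}}
    \exp\!\left(\frac{t_1+t_2}{w-1}\right)\,dw .
\end{equation}
The next step is to recognize this contour integral as a classical Laguerre polynomial. Substituting $u=1/(w-1)$ (so $w=1+1/u$), one expects the integral to produce, up to a power of $t$ and an exponential, a generalized Laguerre polynomial $L_{a-2k}^{(2k-1)}$ in the variable $t=t_1+t_2$; this follows from the standard contour/generating-function representation $L_n^{(\alpha)}(t)=\frac{1}{2\pi i}\oint \frac{e^{-tz/(1-z)}}{(1-z)^{\alpha+1}z^{n+1}}\,dz$ and its relatives, so that $F_{N+k+i+j-1,k}$ reduces to a constant times $L_{N+i+j-1}^{(2k-1)}(t_1+t_2)$ or an index-shifted version thereof. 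Matching the index in the claimed determinant $\det_{k\times k}[L_{N+i-j}^{(2k-1)}(t)]$ will require care with the $i+j$ versus $i-j$ bookkeeping, which can be absorbed by a column (or row) reversal of the determinant that contributes a sign $(-1)^{\binom{k}{2}}$; combined with the global $(-1)^{(k+1)k/2}$ from Theorem~\ref{thm:t1 t2} this should account for the final $(-1)^k$.

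Having expressed the determinant entries as Laguerre polynomials in the single variable $t=t_1+t_2$, the final step is to carry out the derivatives. Since every entry depends on $t_1,t_2$ only through $t_1+t_2$, the operator $\frac{d^k}{dt_1^k}\frac{d^k}{dt_2^k} e^{-t_1 N}$ acting at $t_1=t_2=0$ can be evaluated by the Leibniz rule: the factor $e^{-t_1 N}$ contributes its derivatives $(-N)^{k-h}$ while the remaining derivatives hit the Laguerre determinant $D(t):=\det_{k\times k}[L_{N+i-j}^{(2k-1)}(t)]$, producing $(d/dt)^{k+h}D(t)\big|_{t=0}$ with binomial weights $\binom{k}{h}$. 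This is precisely the structure of the claimed formula. I expect the main obstacle to be the careful identification of the contour integral with the correct Laguerre polynomial and the precise index shift $N+i+j-1 \mapsto N+i-j$: tracking the powers of $t$, the residual exponential and polynomial prefactors from the substitution, and ensuring that they either cancel or are correctly absorbed when the determinant is reassembled. Once that identification and the attendant sign/index bookkeeping are settled, the collapse of the two-variable derivatives into the single-variable operator and the appearance of the $N^{k-h}$ terms follow routinely from the Leibniz rule.
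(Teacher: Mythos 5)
Your proposal follows essentially the same route as the paper: specialize Theorem~\ref{thm:t1 t2} to $|x|=1$, note that the entries depend on $t_1,t_2$ only through $t_1+t_2$ so the two-variable derivatives collapse to a single variable, identify the contour integral with a generalized Laguerre polynomial, reverse the columns of the determinant (sign $(-1)^{\binom{k}{2}}$, which against $(-1)^{\binom{k+1}{2}}$ gives $(-1)^{k^2}=(-1)^k$), and apply the Leibniz rule to the prefactor $e^{-t_1N}$. That is exactly the paper's proof, and your index hedging is resolved there as $F_{a,k}(t)=L_{a-2k}^{(2k-1)}(-t)$, i.e.\ index $N-k+i+j-1$ for $a=N+k+i+j-1$.

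One concrete point in your sign bookkeeping does not close as written. The Laguerre identification produces entries that are functions of $-(t_1+t_2)$, not of $+(t_1+t_2)$: with the paper's normalization,
\begin{equation}
    F_{a,k}(t)=\frac{1}{2\pi i}\oint \frac{w^{a-1}}{(w-1)^{2k}}\exp\left(\frac{t}{w-1}\right)dw
    = L_{a-2k}^{(2k-1)}(-t).
\end{equation}
If, as in your final paragraph, the determinant entries were $L^{(2k-1)}_{\cdot}(t)$ and the only signs came from differentiating $e^{-t_1N}$, you would end up with coefficients $(-N)^{k-h}$, which does not match the stated $N^{k-h}$. The missing ingredient is that replacing $-t$ by $t$ inside the entries costs a factor $(-1)^{k+h}$ when the $k+h$ derivatives $(d/dt)^{k+h}$ are carried out, and $(-1)^{k+h}\cdot(-1)^{k-h}=1$; it is precisely this cancellation that turns $(-N)^{k-h}$ into $N^{k-h}$ in the final formula. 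With that single correction your argument coincides with the paper's proof.
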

Related formulae featuring this determinant first appeared in~\cite{kn:winn12}
and, later, in~\cite{kn:basor_et_al18}, but for the characteristic polynomial
$\Lambda$ rotated so as to be real on the unit circle.

Curiously, if instead of proceeding via the $k$-fold contour integral of \cite{kn:crs06} we start with the Weyl integration formula for Haar measure on $U(N)$ (an $N$-fold integral) and proceed with similar steps, a closely related expression appears, but featuring an $N\times N$ determinant instead of $k\times k$, exposing a duality between the parameters $k$ and $N$.

\begin{statetheorem}[Theorem \ref{thm:x=1 N version} below]
For positive integer $k$ and $|x|=1$
\begin{eqnarray}
    &&
    \int_{U(N)} \abs{\Lambda_X'(1)}^{2k} \dX \nonumber \\
    &&\qquad  =
    (-1)^{kN} \sum_{h=0}^k {k \choose h} (-1)^{h} N^{k-h} (d/dt)^{k+h}
    \det_{N\times N} \left[L_{k+i-j}^{(-2k-1)}(t) \right]
    \bigg|_{t=0}.
\end{eqnarray}
\end{statetheorem}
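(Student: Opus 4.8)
The plan is to derive this $N \times N$ determinant formula by mimicking the proof of the $k \times k$ version (Theorem~\ref{thm:x=1}), but starting from the Weyl integration formula rather than the CRS $k$-fold contour integral. The excerpt already signals that this is the intended route: the remark immediately preceding the statement says that starting from the Weyl integration formula "and proceed[ing] with similar steps" produces a closely related expression with an $N \times N$ determinant, "exposing a duality between the parameters $k$ and $N$." So I would first recall the Weyl integration formula, writing
\begin{equation}
\int_{U(N)} |\Lambda_X'(1)|^{2k} \dX
= \frac{1}{N!} \int_{[0,2\pi]^N}
|\Lambda_X'(1)|^{2k}
\prod_{1 \le p < q \le N} |e^{i\theta_p} - e^{i\theta_q}|^2
\, \frac{d\theta_1 \cdots d\theta_N}{(2\pi)^N},
\end{equation}
and expand $|\Lambda_X'(1)|^{2k}$ in a form amenable to the same generating-function manipulation used for Theorem~\ref{thm:x=1}.

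The key mechanism, I expect, is the identity
$\Lambda_X'(1) = -\Lambda_X(1)\sum_j e^{-i\theta_j}/(1 - e^{-i\theta_j})$ (or an equivalent rewriting), which lets me introduce auxiliary variables $t$ and write $|\Lambda_X'(1)|^{2k}$ as a $(d/dt)^{k+h}$ derivative at $t=0$ acting on an exponential generating function, exactly paralleling the $\sum_{h=0}^k \binom{k}{h} N^{k-h}(d/dt)^{k+h}$ structure that appears in the target. Carrying out the $N$ angular integrations against the Vandermonde squared should, via the standard Andr\'eief/Gram identity (the same tool that converts Weyl integrals into determinants of moment matrices), collapse the $N$-fold integral into an $N \times N$ determinant whose entries are single integrals. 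I would then identify each such entry as a Laguerre polynomial $L_{k+i-j}^{(-2k-1)}(t)$, using the integral/contour representation of the associated Laguerre polynomials and matching indices and parameters; the appearance of the \emph{negative} parameter $(-2k-1)$ and shifted index $k+i-j$ (versus $(2k-1)$ and $N+i-j$ in Theorem~\ref{thm:x=1}) is precisely the manifestation of the $k \leftrightarrow N$ duality. The overall sign $(-1)^{kN}$ and the alternating $(-1)^h$ inside the sum would be tracked through these manipulations, reversing the sign conventions relative to the $k \times k$ case because the roles of the two parameters are swapped.

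The main obstacle I anticipate is the bookkeeping in the Andr\'eief step: matching the integrand against the correct moment structure so that each matrix entry is \emph{exactly} a Laguerre polynomial with the stated index and parameter, rather than merely a Laguerre-type integral up to normalization. In particular, the associated Laguerre polynomials $L_n^{(\alpha)}$ are usually indexed by $n \ge 0$, whereas here the index $k+i-j$ ranges over negative as well as non-negative values as $i,j$ run from $1$ to $N$; I would need the convention (and a contour or generating-function definition) under which $L_n^{(\alpha)}$ makes sense for negative $n$, and verify that the identity persists there. A clean way to sidestep the indexing issue is to work directly at the level of generating functions: use the contour-integral representation analogous to $F_{a,k}$ from Theorem~\ref{thm:t1 t2}, show that the relevant single integral equals $\frac{1}{2\pi i}\oint w^{\,k+i-j-1}(w-1)^{-2k-1}e^{t/(w-1)}\,dw$ up to the tracked sign, and recognize the coefficient extraction as the defining contour integral for $L_{k+i-j}^{(-2k-1)}(t)$. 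Finally, I would confirm the formula against a small case (say $N=2$ or $k=1$) to pin down the constants and signs, since the duality swap makes the sign conventions the easiest place to slip.
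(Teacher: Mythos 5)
Your proposal follows essentially the same route as the paper's own proof: starting from the Weyl integration formula, writing $\Lambda_X'(1)=-\Lambda_X(1)\sum_j e^{-i\theta_j}/(1-e^{-i\theta_j})$, converting the $k$-th powers into derivatives of exponentials in auxiliary variables, applying Andr\'eief's identity to obtain an $N\times N$ determinant, and identifying each entry as $L_{k+i-j}^{(-2k-1)}$ via the contour-integral representation (which, as you note, automatically handles the negative-index entries by giving zero), with the $(-1)^{kN}$, $(-1)^h$, and $N^{k-h}$ factors arising from pulling $(-1)^k e^{-t}$ out of each row and then applying the product rule to the exponential prefactor. The plan and all the key mechanisms match the paper's argument.
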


We can obtain more explicit expressions, see Section \ref{sect:four}, if we restrict to $N=2$: 

\begin{statetheorem}[Theorem \ref{thm:B1} below] For positive integer $k$ and any complex $x$ we have
\begin{eqnarray*}
\int_{U(2)} |\Lambda_X'(x)|^{2k} ~\dX&=&
\sum_{m=0}^k \frac{\binom{k}{m}^2 (4\abs{x}^2)^m\binom{2k-2m}{k-m}}{k-m+1}\\
&=&
\frac{\binom{2k}{k}}{k+1} {}_3F_2(-1-k,-k,-k;1,\frac 12 -k; |x|^2).
\end{eqnarray*}
\end{statetheorem}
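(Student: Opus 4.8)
The plan is to evaluate the $N=2$ moment directly from the Weyl integration formula. Since the left-hand side depends only on $|x|$ (Haar measure on $U(2)$ is rotation invariant, as already noted), I may take $x=r\ge 0$ real. Writing the eigenvalues as $e^{i\theta_1},e^{i\theta_2}$ and differentiating $\Lambda_X(s)=(1-se^{-i\theta_1})(1-se^{-i\theta_2})$ gives $\Lambda_X'(r)=-(e^{-i\theta_1}+e^{-i\theta_2})+2re^{-i(\theta_1+\theta_2)}$. Passing to the sum and difference angles $\alpha=(\theta_1+\theta_2)/2$, $\beta=(\theta_1-\theta_2)/2$, this factors as $\Lambda_X'(r)=2e^{-i\alpha}(re^{-i\alpha}-\cos\beta)$, so that $\abs{\Lambda_X'(r)}^2=4(r^2-2r\cos\alpha\cos\beta+\cos^2\beta)$, while the Weyl weight becomes $\abs{e^{i\theta_1}-e^{i\theta_2}}^2=4\sin^2\beta$ (using $\theta_1-\theta_2=2\beta$). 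Tracking the Jacobian of $(\theta_1,\theta_2)\mapsto(\alpha,\beta)$ together with a valid fundamental domain (e.g. $\alpha\in[0,2\pi)$, $\beta\in[0,\pi)$), the Weyl formula reduces the moment to
\[
\int_{U(2)}\abs{\Lambda_X'(r)}^{2k}\dX=\frac{4^k}{\pi^2}\int_0^{\pi}\int_0^{2\pi}\sin^2\beta\,(r^2-2r\cos\alpha\cos\beta+\cos^2\beta)^k\,d\alpha\,d\beta .
\]

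Next I would carry out the two one-dimensional integrals in turn. Expanding $(r^2+\cos^2\beta-2r\cos\beta\cos\alpha)^k$ by the binomial theorem and integrating over $\alpha$, only even powers of $\cos\alpha$ survive, contributing $\int_0^{2\pi}\cos^{2l}\alpha\,d\alpha=2\pi\binom{2l}{l}4^{-l}$; the factor $4^{-l}$ exactly cancels the $4^{l}$ from $(2r\cos\beta)^{2l}$. Expanding the remaining $(r^2+\cos^2\beta)^{k-2l}$ and using the standard Beta integral $\int_0^{\pi}\sin^2\beta\cos^{2m}\beta\,d\beta=\tfrac{\pi}{2}\binom{2m}{m}/(4^m(m+1))$ then leaves a finite double sum over the two expansion indices.

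The key step is to reorganize this double sum by the exponent of $r$. Collecting the terms with $r^{2(k-l-p)}=r^{2m}$ isolates the factor $(4r^2)^m\binom{2k-2m}{k-m}/(k-m+1)$, with coefficient the inner sum $\sum_l\binom{k}{2l}\binom{2l}{l}\binom{k-2l}{\,(k-m)-l\,}$. Rewriting each product of binomials in factorials turns this into $\binom{k}{m}\sum_l\binom{m}{l}\binom{k-m}{l}$, whereupon Vandermonde's identity collapses the sum to $\binom{k}{m}\binom{k}{k-m}=\binom{k}{m}^2$, giving the first claimed equality. I expect this collapse (recognizing the Vandermonde structure after the factorial rewriting) to be the main substantive point; everything preceding it is bookkeeping, the one place demanding care being the normalization constant coming from the change of variables and its fundamental domain.

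Finally, for the hypergeometric form I would match the series term by term. The $m=0$ term of $\sum_m\binom{k}{m}^2(4|x|^2)^m\binom{2k-2m}{k-m}/(k-m+1)$ is $\binom{2k}{k}/(k+1)$, agreeing with the stated prefactor. Computing the ratio of consecutive summands gives $\tfrac{2|x|^2(k-m)^2(k-m+1)}{(m+1)^2(2k-2m-1)}$, which is precisely the ratio of consecutive terms of ${}_3F_2(-1-k,-k,-k;1,\tfrac12-k;|x|^2)$ at index $j=m$; the repeated parameter $-k$ forces termination at $m=k$ and the half-integer lower parameter $\tfrac12-k$ never vanishes, so no term is singular. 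This identifies the sum with the ${}_3F_2$ and completes the proof. As an independent check, one could instead specialize Theorem~\ref{thm:t1 t2} to $N=2$ and evaluate the resulting $k\times k$ determinant and contour integral, but the direct route above is cleaner for exhibiting the closed form.
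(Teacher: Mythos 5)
Your proof is correct, and although it starts from the same place as the paper (the Weyl integration formula for $U(2)$, with rotation invariance used to control the argument of $x$), the computational core is genuinely different. The paper keeps $x$ complex and works directly in $(\theta_1,\theta_2)$: it expands $(2x-(e^{i\theta_1}+e^{i\theta_2}))^k(2\overline{x}-(e^{-i\theta_1}+e^{-i\theta_2}))^k$ binomially and reduces everything to the integrals
\[
F(A,B)=\frac{1}{8\pi^2}\int_{[0,2\pi]^2}(e^{i\theta_1}+e^{i\theta_2})^A(e^{-i\theta_1}+e^{-i\theta_2})^B\,|e^{i\theta_1}-e^{i\theta_2}|^2\,d\theta_1\,d\theta_2 ,
\]
which Fourier orthogonality forces to vanish unless $A=B$, with $F(A,A)=\binom{2A}{A}-\binom{2A}{A+1}=\frac{1}{A+1}\binom{2A}{A}$; that off-diagonal vanishing collapses the double sum over $(m,n)$ to the single sum over $m$ for free, and no reduction to real $x$ is needed. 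You instead pass to sum/difference angles and evaluate real trigonometric integrals by Wallis/Beta formulas, at the cost of a genuine double sum that must then be collapsed by hand via the factorial identity $\binom{k}{2l}\binom{2l}{l}\binom{k-2l}{k-m-l}=\binom{k}{m}\binom{m}{l}\binom{k-m}{l}$ and Vandermonde. I verified your normalization $4^k/\pi^2$ (fundamental domain $\alpha\in[0,2\pi)$, $\beta\in[0,\pi)$ with Jacobian $2$ is legitimate), both classical integrals, the grouping by $m=k-l-p$ (which is exactly what makes the Catalan-type factor depend on $m$ alone), and the Vandermonde collapse to $\binom{k}{m}^2$: all correct. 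For the ${}_3F_2$ identification, the paper equates coefficients of powers of $|x|^2$ and checks a Pochhammer--factorial identity, while you match the initial term and the consecutive-term ratio $\frac{2|x|^2(k-m)^2(k-m+1)}{(m+1)^2(2k-2m-1)}$ with termination coming from the repeated parameter $-k$ and no vanishing lower parameters; these are equivalent verifications, yours packaged slightly more cleanly. In sum: the paper's orthogonality route is slicker and handles complex $x$ directly, while yours is more hands-on combinatorially but equally rigorous and stays within standard real-variable identities.
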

Finally,  allowing $k$ to be non-integer,

\begin{statetheorem}[Theorem \ref{thm:b2} below]
For all $k \in \C$, and $x \in \C$ with $|x|>1$ we have
   \begin{eqnarray*}
   \int_{U(2)} |\Lambda_X'(x)|^{2k} dX
   =
   2^{2 k} |x|^{2 k} \, _3F_2\left(\frac{1}{2},-k,-k;1,2;|x|^{-2}\right).
\end{eqnarray*}
Additionally, this formula extends to $|x|=1$ if $\Re k > -1$.

\end{statetheorem}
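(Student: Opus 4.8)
The plan is to compute the $U(2)$ average directly from the Weyl integration formula, rather than from the determinantal theorems above, since those were only established for non‑negative integer $k$ whereas the present statement must hold for all $k\in\C$. Writing the eigenvalues of $X$ as $e^{i\theta_1},e^{i\theta_2}$ and differentiating $\Lambda_X(s)=(1-se^{-i\theta_1})(1-se^{-i\theta_2})$, I first record the algebraic identity
\begin{equation}
  \Lambda_X'(x)=e^{-i(\theta_1+\theta_2)}\bigl(2x-e^{i\theta_1}-e^{i\theta_2}\bigr),
  \qquad\text{so}\qquad
  \abs{\Lambda_X'(x)}=\abs{2x-\operatorname{tr}X}.
\end{equation}
Factoring out $2x$ gives $\abs{\Lambda_X'(x)}^{2k}=2^{2k}\abs{x}^{2k}(1-w)^k(1-\bar w)^k$ with $w=\operatorname{tr}X/(2x)$, and the hypothesis $\abs{x}>1$ forces $\abs{w}\le \abs{\operatorname{tr}X}/(2\abs{x})\le 1/\abs{x}<1$ uniformly in $\theta_1,\theta_2$; this uniform bound, bounded away from $1$, is exactly what fixes the branch of the complex power and legitimizes the expansions to follow.

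Next I would expand both factors by the generalized binomial theorem, $(1-w)^k=\sum_{m\ge0}\binom{k}{m}(-w)^m$ and similarly for $(1-\bar w)^k$, and integrate term by term; the interchange is justified because the double series is dominated by $\bigl(\sum_{m}\abs{\binom{k}{m}}\abs{x}^{-m}\bigr)^2<\infty$ uniformly in $X$. The decisive simplification is rotation invariance of Haar measure: under $X\mapsto e^{i\phi}X$ the monomial $(\operatorname{tr}X)^m(\overline{\operatorname{tr}X})^n$ scales by $e^{i(m-n)\phi}$, so its average vanishes unless $m=n$ — which is also why the answer depends only on $\abs{x}$. This reduces everything to the diagonal trace moments $S_m:=\int_{U(2)}\abs{\operatorname{tr}X}^{2m}\,\dX$, which I would evaluate from the one–variable Weyl integral: with $\abs{\operatorname{tr}X}^2=2+2\cos(\theta_1-\theta_2)$ and Vandermonde weight $2-2\cos(\theta_1-\theta_2)$, a Beta–integral computation yields the Catalan number $S_m=\tfrac{(2m)!}{m!\,(m+1)!}=\tfrac{1}{m+1}\binom{2m}{m}$.

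Assembling the surviving ($m=n$) terms gives
\begin{equation}
  \int_{U(2)}\abs{\Lambda_X'(x)}^{2k}\,\dX
  =2^{2k}\abs{x}^{2k}\sum_{m\ge0}\binom{k}{m}^{2}\frac{1}{m+1}\binom{2m}{m}\frac{1}{(4\abs{x}^2)^m},
\end{equation}
and the last step is to recognize the series as a ${}_3F_2$. Using $\binom{k}{m}=(-1)^m(-k)_m/m!$, $\binom{2m}{m}=4^m(1/2)_m/m!$, and $\tfrac{1}{m+1}=(1)_m/(2)_m$, the general coefficient collapses to $\tfrac{(1/2)_m(-k)_m^2}{(1)_m(2)_m\,m!}\,\abs{x}^{-2m}$, which is precisely the $m$-th term of ${}_3F_2\!\bigl(\tfrac12,-k,-k;1,2;\abs{x}^{-2}\bigr)$, establishing the identity for all $k\in\C$ when $\abs{x}>1$.

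For the extension to $\abs{x}=1$ with $\Re k>-1$ I would argue by continuity as $\abs{x}\downarrow1$, using that the average depends only on $r=\abs{x}$ so one may take $x=r$ real. On the series side, at $r=1$ the general term is $O\bigl(m^{-2\Re k-7/2}\bigr)$, so the hypothesis $\Re k>-1$ comfortably secures absolute convergence, and since each factor $r^{-2m}\le1$ the convergence is uniform for $r\ge1$, giving continuity of the right–hand side up to the boundary. On the integral side the key inequality is $\abs{2r-\operatorname{tr}X}\ge\abs{2-\operatorname{tr}X}$ for all $r\ge1$ (because $\Re(2-\operatorname{tr}X)\ge0$ and one is adding the nonnegative quantity $2(r-1)$ to the real part), which furnishes a single $r$-independent dominating function; the only possible singularity of $\abs{2-\operatorname{tr}X}^{2k}$ sits at $\theta_1=\theta_2=0$, where together with the Vandermonde weight it remains integrable for $\Re k>-1$. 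Dominated convergence then makes the left–hand side continuous as well, and equality for $r>1$ passes to $r=1$. I expect this boundary analysis — matching the local integrability of the integrand at $\abs{x}=1$ with the tail of the hypergeometric series, uniformly as $\abs{x}\downarrow1$ — to be the only genuinely delicate point, everything for $\abs{x}>1$ reducing to the clean termwise computation above.
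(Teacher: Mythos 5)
Your proposal is correct, and its skeleton matches the paper's: factor out $(2x)^k$, expand $\bigl(1-\operatorname{tr}X/(2x)\bigr)^k$ as a binomial series (absolutely convergent since $|\operatorname{tr}X|/(2|x|)\le 1/|x|<1$), integrate term by term, recognize a ${}_3F_2$, and pass to $|x|=1$ by continuity; the paper fixes the branch with an explicit factor $e^{2\pi i\beta k}$ that cancels against its conjugate, while your observation that $|w|<1$ keeps $1-w$ in the right half-plane does the same job.

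Where you differ, your organization buys something in two places. First, the paper expands $(e^{i\theta_1}+e^{i\theta_2})^m$ further by the multinomial theorem and integrates in two stages: first against flat measure, producing $4^k|x|^{2k}\,{}_3F_2\left(\tfrac12,-k,-k;1,1;|x|^{-2}\right)$, and then against the Vandermonde correction $1-\tfrac12\left(e^{i(\theta_1-\theta_2)}+e^{i(\theta_2-\theta_1)}\right)$, producing a second term $-4^{k-1}k^2|x|^{2k-2}\,{}_3F_2\left(\tfrac32,1-k,1-k;2,3;|x|^{-2}\right)$; these two series must then be merged by hand into the stated ${}_3F_2\left(\tfrac12,-k,-k;1,2;|x|^{-2}\right)$. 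Your route via rotation invariance (killing the off-diagonal terms) and the trace moments $S_m=\tfrac{1}{m+1}\binom{2m}{m}$ -- which is a correct evaluation; it also follows from the fact that $\int_{U(2)}|\operatorname{tr}X|^{2m}\,dX$ counts pairs of standard Young tableaux with at most two rows -- lands on the single ${}_3F_2$ in one pass, and makes transparent that the theorem is the term-by-term continuation of Theorem \ref{thm:B1}, whose finite sum is exactly yours after $m\mapsto k-m$. Second, for the boundary case the paper's remark only sketches a comparison of the integrand near its singularity with $\int(\theta_1+\theta_2)^{2k}\,d\theta_1\,d\theta_2$; your dominated-convergence argument, resting on the pointwise monotonicity $|2r-\operatorname{tr}X|\ge|2-\operatorname{tr}X|$ for $r\ge1$ (valid because $\Re(2-\operatorname{tr}X)\ge0$) together with the M-test on the series side, is a more complete justification of the limit $r\downarrow1$. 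One small remark: both your argument and the paper's give only sufficient conditions at the boundary; a sharper local analysis near $\theta_1=\theta_2=0$ (including the directions where the eigenvalues approach $1$ from opposite sides) shows that the integral and the series at $|x|=1$ in fact both converge for $\Re k>-5/4$.
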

Interestingly, when $k$ is not an integer, numerical investigation shows that these expressions do not agree, see Section \ref{sect:four}.

It would be worthwhile to investigate and understand the complex moments of $\Lambda_X'$ for finite, but we have not done so here.
A formula was given by Simm and Wei in~\cite{kn:simmwei25} for the leading asymptotic of complex moments, as $N \to \infty$, for $|x|<1$.

Finally, we end with a theorem for the logarithmic  average of $\Lambda'_X(r)$ for $N=2$. 
\begin{statetheorem}[Theorem \ref{thm:log lambda-prime} below]
For $0 \leq r<1$ we have
  \begin{eqnarray*}  \int_{U(2)} \log|\Lambda_X'(r)| ~dX
  =\frac{2 r \,
   _3F_2\left(\frac{1}{2},\frac{1}{2},\frac{1}{2};\frac{3}{2},\frac{3}{2};r^2\right)
   +r \sqrt{1-r^2} +\sin ^{-1}(r)}{\pi }-\frac{1}{2}.
    \end{eqnarray*}
\end{statetheorem}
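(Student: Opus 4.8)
The plan is to reduce the average to an explicit two-dimensional angular integral via Weyl's integration formula, to evaluate the integral over the ``sum'' angle in closed form, and then to identify the remaining one-dimensional integral with the claimed expression by differentiating in $r$. I begin by making $\Lambda_X'$ explicit for $N=2$: with $a=e^{-i\theta_1}$ and $b=e^{-i\theta_2}$ we have $\Lambda_X(s)=(1-sa)(1-sb)$, so $\Lambda_X'(s)=-(a+b)+2sab$. Introducing the sum and difference angles $\phi=(\theta_1+\theta_2)/2$ and $\psi=(\theta_1-\theta_2)/2$ gives $a+b=2\cos\psi\,e^{-i\phi}$ and $ab=e^{-2i\phi}$, whence $\Lambda_X'(r)=2e^{-i\phi}\bigl(re^{-i\phi}-\cos\psi\bigr)$ and therefore
\[
|\Lambda_X'(r)|^2=4\bigl(r^2-2r\cos\phi\cos\psi+\cos^2\psi\bigr).
\]
Weyl's formula reads $\int_{U(2)}F\,\dX=\tfrac{1}{2(2\pi)^2}\int_0^{2\pi}\!\!\int_0^{2\pi}F\,|e^{i\theta_1}-e^{i\theta_2}|^2\,d\theta_1\,d\theta_2$ with $|e^{i\theta_1}-e^{i\theta_2}|^2=4\sin^2\psi$. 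The integrand is invariant under the lattice generated by $(\pi,\pi)$ and $(\pi,-\pi)$, so taking the fundamental domain $\phi\in[0,2\pi)$, $\psi\in[0,\pi)$ (where $d\theta_1\,d\theta_2=2\,d\phi\,d\psi$) and using $\int_{U(2)}\dX=1$, I arrive at
\[
I(r):=\int_{U(2)}\log|\Lambda_X'(r)|\,\dX=\log 2+\frac{2}{(2\pi)^2}\int_0^{\pi}\!\!\int_0^{2\pi}\sin^2\psi\,\log\bigl(r^2-2r\cos\phi\cos\psi+\cos^2\psi\bigr)\,d\phi\,d\psi.
\]

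Next I would carry out the $\phi$-integral using the classical identity $\tfrac{1}{2\pi}\int_0^{2\pi}\log(A-B\cos\phi)\,d\phi=\log\tfrac{A+\sqrt{A^2-B^2}}{2}$ valid for $A\ge|B|$. With $A=r^2+\cos^2\psi$ and $B=2r\cos\psi$ one has $A^2-B^2=(r^2-\cos^2\psi)^2$, so the $\phi$-average equals $2\log\max(r,|\cos\psi|)$; the single value $\cos\psi=r$, $\phi\equiv 0$ where $\Lambda_X'$ vanishes is an integrable logarithmic singularity and causes no trouble. Using the symmetry $\psi\mapsto\pi-\psi$ this collapses the double integral to
\[
I(r)=\log 2+\frac{4}{\pi}\int_0^{\pi/2}\sin^2\psi\,\log\max(r,\cos\psi)\,d\psi.
\]
Writing $T(r)$ for the claimed right-hand side, the cleanest finish is to prove $I'(r)=T'(r)$ on $(0,1)$ together with $I(0)=T(0)$. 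Since $\partial_r\log\max(r,\cos\psi)=1/r$ when $\cos\psi<r$ and $0$ otherwise, differentiating under the integral sign (the kink at $\psi=\arccos r$ is measure zero) gives
\[
I'(r)=\frac{4}{\pi r}\int_{\arccos r}^{\pi/2}\sin^2\psi\,d\psi=\frac{2\arcsin r}{\pi r}+\frac{2\sqrt{1-r^2}}{\pi},
\]
where the last step is elementary and uses $1-\tfrac{2}{\pi}\arccos r=\tfrac{2}{\pi}\arcsin r$.

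For the stated formula I would use the series expansion ${}_3F_2(\tfrac12,\tfrac12,\tfrac12;\tfrac32,\tfrac32;r^2)=\sum_{n\ge0}\binom{2n}{n}\tfrac{r^{2n}}{4^n(2n+1)^2}$, obtained from $(\tfrac32)_n=(2n+1)(\tfrac12)_n$, together with the classical $\arcsin r=\sum_{n\ge0}\binom{2n}{n}\tfrac{r^{2n+1}}{4^n(2n+1)}$, to recognize that $\tfrac{d}{dr}\bigl[2r\,{}_3F_2\bigr]=\tfrac{2\arcsin r}{r}$. A short computation with $\tfrac{d}{dr}[r\sqrt{1-r^2}]=\tfrac{1-2r^2}{\sqrt{1-r^2}}$ then gives $T'(r)=\tfrac{2\arcsin r}{\pi r}+\tfrac{2\sqrt{1-r^2}}{\pi}=I'(r)$. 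Finally, at $r=0$ one has $|\Lambda_X'(0)|=2|\cos\psi|$, and the remaining integral $\int_0^{\pi/2}\sin^2\psi\log\cos\psi\,d\psi=-\tfrac{\pi}{8}(1+\log 4)$ (evaluated by integration by parts against $v=\tfrac{\psi}{2}-\tfrac{\sin 2\psi}{4}$) yields $I(0)=-\tfrac12=T(0)$, which pins the constant of integration and completes the proof. The main obstacle is the middle step: setting up the change of variables and fundamental domain so that the $\phi$-average produces the clean $\max(r,|\cos\psi|)$, and then realizing that differentiating in $r$ — rather than attempting to evaluate the single integral directly, which leads to the termwise-divergent $\int_0^{\arccos r}\psi\tan\psi\,d\psi$ — is exactly what reduces the hypergeometric identification to the elementary $\arcsin$ series.
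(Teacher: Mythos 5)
Your proof is correct, but it takes a genuinely different route from the paper's. The paper integrates Jensen's formula over $U(2)$: the circular average of $\log|\Lambda_X'|$ equals $\log|\Lambda_X'(0)|$ plus $\int_0^r N_{\Lambda'}(u)\,du/u$, so the theorem splits into two computations --- the constant $\int_{U(2)}\log|\Lambda_X'(0)|\,dX=-1/2$, and the mean radial counting function of the single critical point $(e^{i\theta_1}+e^{i\theta_2})/2$, namely $\int_{U(2)}N_{\Lambda_X'}(u)\,dX=\bigl(2u\sqrt{1-u^2}+\cos^{-1}(1-2u^2)\bigr)/\pi$ --- after which the order of $dX$ and $du$ is swapped and the $u$-integral is expanded termwise into the ${}_3F_2$ series. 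You instead compute the Haar average head-on in sum/difference angles and perform the $\phi$-integral with the classical identity $\frac{1}{2\pi}\int_0^{2\pi}\log(A-B\cos\phi)\,d\phi=\log\frac{A+\sqrt{A^2-B^2}}{2}$, which here is exactly Jensen's formula for the one-zero polynomial $z\mapsto z-\cos\psi$; in effect you apply Jensen pointwise in the eigenvalue-difference variable, where the paper applies it at the level of the ensemble and keeps the averaged zero-counting function as an intermediate object. Your finish also differs: you show $I'(r)=T'(r)=\frac{2\arcsin r}{\pi r}+\frac{2\sqrt{1-r^2}}{\pi}$ via the $\arcsin$ series and pin the constant at $r=0$, instead of integrating a series term by term. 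The paper's route buys the explicit averaged radial distribution of the critical points, which is the stated motivation of that section and of independent interest; your route buys a more self-contained argument (no Stieltjes integral, no interchange of $dX$ and $du$) and replaces the termwise integration by an elementary derivative check.

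One small point of polish: in your parenthetical evaluation of $\int_0^{\pi/2}\sin^2\psi\log\cos\psi\,d\psi=-\frac{\pi}{8}(1+\log 4)$, the antiderivative should be normalized as $v=\frac{\psi}{2}-\frac{\sin 2\psi}{4}-\frac{\pi}{4}$, so that the boundary term at $\psi=\pi/2$ is of the form $(\psi-\pi/2)\log|\psi-\pi/2|\to 0$; with $v$ as you wrote it, both the boundary term and the compensating integral diverge and only their difference is finite. The stated value of the integral is correct, so this affects only the justification, not the result.
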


The history of these questions in random matrix theory originated  in the thesis of Hughes \cite{kn:hug01} and also includes derivatives of a closely related function (the analogue of the Hardy's $Z$-function in number theory) which is real on the unit circle and defined by
\begin{equation}
    Z_X(s)=e^{i\pi N/2}e^{i\sum_{n=1}^N\theta_n/2} s^{-N/2}\Lambda_X(s).
\end{equation}

Hughes (although with slightly different notation) looked at moments of the form (note that for $k=h$ this reduces to (\ref{eq:deriv}))
\begin{equation}\label{eq:Lderiv}
    \int_{U(N)}|\Lambda_X(1)|^{2k-2h}|\Lambda_X'(1)|^{2h} \dX
\end{equation}
and
\begin{equation} \label{eq:Zderiv}
    F_N(h,k):= \int_{U(N)}|Z_X(1)|^{2k-2h}|Z_X'(1)|^{2h} \dX,
\end{equation}
with $h>-1/2$ and $k>h-1/2$.  Hughes shows that the limit 
\begin{equation}\label{eq:Flimit}
    \lim_{N\rightarrow \infty} \frac{1}{N^{k^2+2h}}F_N(h,k)=F(h,k)
\end{equation}
exists and conjectured a form for $F(h,k)$ that continues to non-integer $k$ but not non-integer $h$. This work was extended and proved by Dehaye \cite{kn:dehaye10}. The method of \cite{kn:crs06} was extended in \cite{kn:bbbcprs} to the mixed moment (\ref{eq:Lderiv}), including the connection to a Painlev{\'e} equation via a determinant of Bessel functions.  The first step towards anything other than an even integer exponent on the derivative of the characteristic polynomial is the work of Winn \cite{kn:winn12}, who obtained a concrete formula for $F_N(h,k)$ when $h=(2m-1)/2$ for $m\in\mathbb{N}$.  Assiotis, Keating and Warren \cite{kn:akw22}, however, establish that the limit (\ref{eq:Flimit}) exists for real $h$ and $k$ and relate the leading order coefficient to the expectation value of a particular random variable.

The papers~\cite{kn:abgs}~\cite{kn:Forrester22} compute the limiting moments $F(h,k)$ of~\eqref{eq:Flimit} in the case $k \in \mathbb{Z}_{\geq 0}$
and $h$ complex (with restrictions so that the moments exist).
However, their work does not apply here if one wishes to obtain the complex moments of $\Lambda_X'$ alone, since one cannot set $h=k$ if $k$ is an integer and $h$ is a non-integer complex number.

Averages of higher derivatives, quantities of the form of (\ref{eq:Zderiv}), but instead of the zeroth and first derivative, with two arbitrary orders of differentiation, $n_1$ and $n_2$, are addressed by Keating and Wei for integer $k$ and $h$ in \cite{kn:keawei1,kn:keawei2} where they find the moment is asymptotically, for large $N$, of order $N^{k^2+2(k-h)n_1+2hn_2}$ and they give explicit forms for the leading order coefficients. Barhoumi-Andr{\'e}ani \cite{kn:barhoumi} addresses combinations of more than two different derivatives, but still asymptotically and always evaluated at the point 1 and expresses the leading order coefficient as a multiple contour integral. 

Since the first release of the current work, there has been further development, where Assiotis, Gunes, Keating and Wei \cite{kn:agkw} have established the leading order behaviour for any number of derivatives of any order, of either $\Lambda$ or $Z$, evaluated at the point 1,  importantly with real exponents, thus significantly extending the work described in the previous paragraph. They relate the leading order coefficient to the expectation value of a random variable. 
Additionally, they give a recursive method that allows one to
explicitly compute, for fixed matrix size $N$, arbitrary joint moments of
derivatives of any order, with integer exponents. But carrying out their
recursion to obtain a closed form formula is difficult and not done in their
paper.

There are a wealth of interesting results concerning the average (\ref{eq:Zderiv}) in \cite{kn:basor_et_al18}. They express $F_N(h,k)$, for integer $h$ and $k$, in terms of the same determinant of Laguerre polynomials that arises in our Theorem~\ref{thm:x=1}, giving a result for finite $N$.
They relate the logarithmic derivative of this determinant to the solution of a Painlev{\'e} differential equation. This allows them to solve recursively for $F_N(h,k)$ resulting in expressions that continue to non-integer $k$. They explicitly write out the first few $F_N(h,k)$ for small integer $h$ values and general $k$, in the form $F_N(0,k)$ multiplied by what looks like a polynomial in $N$. 

In Section \ref{subsection:diffeq} we describe the differential equation and its use in our setting
for the fast calculation of (\ref{eq:deriv}) for specific values of $k$.

There is also literature studying average values of the logarithmic derivative of the characteristic polynomial of $X\in U(N)$,
$\Lambda_X'(s)/\Lambda_X(s)$.
Whereas most of the results in the literature for moments (as opposed to ratios with characteristic polynomials in the denominator) including derivatives of characteristic polynomials evaluate the derivative at the point 1,  
averages of
$\Lambda_X'(s)/\Lambda_X(s)$
must be evaluated away from the unit circle ($|s|\neq 1$) due to singularities at the eigenvalues of $X$ from the characteristic polynomial in the denominator, see \cite{kn:bbbcprs} and \cite{kn:alvsna20}.  

One of the motivations for studying the moments of derivatives of characteristic polynomials of random matrices drawn from $U(N)$ with Haar measure is because from the moments information can be retrieved about the distribution of the zeros of derivatives of characteristic polynomials.  With the analogy relating the characteristic polynomial to the Riemann zeta function, this has the potential to shed light on the distribution of zeros of the derivative of the zeta function \cite{kn:mezzadri03,kn:dffhmp} and hence on the Riemann Hypothesis by the ideas of Levinson \cite{kn:levinson,kn:conrey89}.  In Section \ref{sect:zerodist} this motivates an investigation into the radial distribution of the zeros of the derivative of the characteristic polynomial through moments of the derivative evaluated on the interior of the unit circle. 

All of the above has prompted similar calculations in other ensembles of unitary matrices. For matrices in the classical compact groups $SO(2N)$, $SO(2N+1)$ and $USp(2N)$, Snaith and Alvarez \cite{kn:alvsna20} consider asymptotic formulae for moments of the logarithmic derivative, where the exponent is an integer, evaluated at a point on the real axis approaching 1 faster than $1/N$ as the matrix size $N$ goes to infinity. This is extended to non-integer exponent for $SO(2N+1)$ in \cite{kn:alvbousna} and by Ge \cite{Ge4} to the other classical compact groups.  Asymptotic formulae for low derivatives of characteristic polynomials evaluated at the point 1 can be found in \cite{kn:abprw14}, and for joint higher derivatives (in analogy to Keating and Wei) in \cite{kn:andbes}. 
Mixed moments of characteristic polynomials and their derivatives for the
CUE, circular $\beta$-ensemble, and circular Jacobi $\beta$-ensemble are
studied, respectively, in \cite{kn:abgs} \cite{kn:Forrester22} \cite{kn:ags22}.

\section{A finite-$N$ determinant formula for moments of the derivative $\Lambda'(1)$}
\label{sec:2}
In this section we will prove the following theorem on derivatives of characteristic polynomials over $U(N)$ with Haar measure. The result is exact for finite matrix size $N$ and, as can be seen from the review of the literature above, there currently exist few results that give more than just the leading order behaviour in $N$. 
\begin{theorem} \label{theo:sumofdets} For $k$ a positive integer, 
\begin{eqnarray}
   &&  \int_{U(N)} \abs{\Lambda_X'(1)}^{2k} \dX \nonumber \\
    && = (-1)^{\binom{k}{2}} \sum_{m=0}^k \binom{k}{m}N^{k-m} (-1)^m  \sum_{\substack{\sum_{j=1}^k t_{j} = k+m}}\binom{k+m}{t_{1}, \dots, t_{k}}    \nonumber \\ && \times 
    \det\left[ \binom{N+k+i+j-2}{2k+t_j -1} \right]_{\substack{1 \leq i \leq k \\ 1 \leq j \leq k}}. 
\end{eqnarray}
\end{theorem}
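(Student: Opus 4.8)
The plan is to trade the modulus $|\Lambda_X'(1)|^{2k}=\Lambda_X'(1)^k\,\overline{\Lambda_X'(1)}^{\,k}$ for a one-sided average of characteristic polynomials all of the \emph{same} type, twisted by $\det(X)^k$. The device that accomplishes this is the functional equation forced by the eigenvalues lying on the unit circle. Writing $\Lambda_X^*(s)=\det(I-sX)=\prod_j(1-se^{i\theta_j})$, one has $\overline{\Lambda_X(s)}=\Lambda_X^*(s)$ for real $s$, and from $1-se^{i\theta}=-se^{i\theta}(1-s^{-1}e^{-i\theta})$ the identity $\Lambda_X^*(s)=(-s)^N\det(X)\,\Lambda_X(1/s)$. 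Differentiating in $s$ and setting $s=1$ gives
\begin{equation*}
\overline{\Lambda_X'(1)}=(-1)^N\det(X)\bigl[N\Lambda_X(1)-\Lambda_X'(1)\bigr].
\end{equation*}
I would raise this to the $k$th power, multiply by $\Lambda_X'(1)^k$, and binomially expand $[N\Lambda_X(1)-\Lambda_X'(1)]^k$. This is exactly where the factor $\sum_{m=0}^k\binom{k}{m}N^{k-m}(-1)^m$ of the theorem is born, reducing the problem to evaluating $\int_{U(N)}\det(X)^k\,\Lambda_X(1)^{k-m}\Lambda_X'(1)^{m+k}\,\dX$ for each $m$.

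Next I would introduce $2k$ shift parameters and set $M(\alpha_1,\dots,\alpha_{2k})=\int_{U(N)}\det(X)^k\prod_{i=1}^{2k}\Lambda_X(\alpha_i)\,\dX$, so that the integral above is $\prod_{l=1}^{m+k}\partial_{\alpha_l}M$ evaluated once every $\alpha_i=1$ (differentiate the first $m+k$ slots, matching the $m+k$ derivatives, and leave the remaining $k-m$ alone). The closed form for $M$ I would get from the dual Cauchy identity together with Schur orthogonality over $U(N)$: expanding $\prod_i\det(I-\alpha_iX^\dagger)=\sum_\lambda(-1)^{|\lambda|}s_\lambda(\alpha)\,s_{\lambda'}(X^\dagger)$ and integrating against $\det(X)^k=s_{(k^N)}(X)$ leaves only $\lambda'=(k^N)$, i.e.\ $\lambda=(N^k)$, so $M=(-1)^{Nk}s_{(N^k)}(\alpha_1,\dots,\alpha_{2k})$. (This is the same object the contour integral of Lemma~\ref{lem:crs} encodes; either route is fine.) The Jacobi--Trudi identity then presents this as the $k\times k$ determinant $M=(-1)^{Nk}\det\!\bigl[h_{N-i+j}(\alpha)\bigr]_{1\le i,j\le k}$, and I would write each complete homogeneous polynomial as a residue, $h_r(\alpha)=\frac{1}{2\pi i}\oint z^{\,r+2k-1}\big/\prod_{i=1}^{2k}(z-\alpha_i)\,dz$.

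The heart of the argument is the differentiation. Because every entry of $\det[h_{N-i+j}]$ shares the denominator $\prod_i(z-\alpha_i)$, applying $\prod_{l=1}^{m+k}\partial_{\alpha_l}$ through the Leibniz rule for determinants amounts to assigning each of the $m+k$ derivative operators to one of the $k$ columns; a column receiving $t_j$ operators has each entry differentiated $t_j$ times, raising $t_j$ of its poles. Setting every $\alpha_i=1$ then collapses the $(i,j)$ denominator to $(z-1)^{2k+t_j}$, and the residue is $\binom{N-i+j+2k-1}{2k+t_j-1}$. Since the determinant depends only on the column profile $(t_1,\dots,t_k)$ with $\sum_j t_j=m+k$, the number of assignments realising a given profile is the multinomial $\binom{k+m}{t_1,\dots,t_k}$, reproducing the inner sum. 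Finally I would reconcile the two determinants: reversing the row order $i\mapsto k+1-i$ sends my Jacobi--Trudi entry $\binom{N-i+j+2k-1}{2k+t_j-1}$ to the stated $\binom{N+k+i+j-2}{2k+t_j-1}$ and contributes the sign $(-1)^{\binom{k}{2}}$, exactly the prefactor in the statement, while the two copies of $(-1)^{Nk}$ (from the functional equation and from $M$) cancel, leaving no $N$-parity.

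I expect the main obstacle to be the bookkeeping of this Leibniz expansion: confirming that the pole-raising depends only on the column and not the row, that differentiating with respect to distinct shifts never collides in a way that alters the count, and that the resulting weight is precisely $\binom{k+m}{t_1,\dots,t_k}$; the sign tracking through the row reversal and the two $(-1)^{Nk}$ factors is the other place an error could slip in. As a sanity check I would run the whole computation at $k=1$, where $M=(-1)^N h_N$ and the formula collapses to $\tfrac{1}{6}N(N+1)(2N+1)=\sum_{j\le N}j^2$, agreeing with a direct evaluation of both sides.
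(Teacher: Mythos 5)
Your proposal is correct, and it reaches the theorem by a genuinely different route from the paper's. The paper starts from the Conrey--Rubinstein--Snaith contour-integral formula (Lemma \ref{lem:crs}) for $\int_{U(N)}\prod_j\Lambda_X(1/a_j)\Lambda_{X^\dagger}(a_{j+k})\,\dX$, differentiates in the $a_j$'s, introduces auxiliary variables $s,t$ to package the powers $\bigl(N-\sum_m (u_m-1)^{-1}\bigr)^k\bigl(\sum_m(u_m-1)^{-1}\bigr)^k$ as derivatives of exponentials, and only then collapses the $k$-fold integral to a $k\times k$ determinant via Andr\'eief's identity. You instead eliminate the conjugate factors at the outset through the functional equation $\overline{\Lambda_X'(1)}=(-1)^N\det(X)\bigl[N\Lambda_X(1)-\Lambda_X'(1)\bigr]$ --- so for you the outer sum $\sum_m\binom{k}{m}N^{k-m}(-1)^m$ is born from a binomial expansion of that bracket, where in the paper it comes from the product rule applied to $e^{Nt}$ times the determinant --- and you evaluate the resulting twisted average $\int_{U(N)}\det(X)^k\prod_i\Lambda_X(\alpha_i)\,\dX$ purely algebraically: dual Cauchy plus Schur orthogonality gives $(-1)^{Nk}s_{(N^k)}(\alpha)$, and Jacobi--Trudi gives the $k\times k$ determinant of $h_{N-i+j}$'s, each a single residue integral. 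From there the two arguments merge: Leibniz differentiation of a determinant, multinomial counting of column profiles $(t_1,\dots,t_k)$, the residue $\binom{N-i+j+2k-1}{2k+t_j-1}$, and a row reversal supplying $(-1)^{\binom{k}{2}}$. I verified your sign bookkeeping (the two factors of $(-1)^{Nk}$ do cancel) and your worried-about Leibniz step; the latter is unproblematic precisely because distinct shifts each raise their own simple pole once, so after setting every $\alpha_i=1$ an entry depends only on the column count $t_j$, and the contour representation of $h_r$ even handles negative indices $N-i+j<0$ correctly by vanishing. The trade-off between the approaches: the paper's route generalizes --- the same CRS lemma with unrestricted $a_j$'s powers Section \ref{sect:three} and yields the moments of $\Lambda'(x)$ for arbitrary complex $x$ (Theorem \ref{thm:t1 t2}), whereas your functional-equation trick is intrinsically tied to the point $1$ (equivalently, by rotation invariance, to $|x|=1$); your route, in exchange, is more self-contained and structural --- no multiple contour integrals, no Andr\'eief, no auxiliary exponentials --- with the determinant forced by Jacobi--Trudi and the only analysis a one-variable residue computation.
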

\begin{proof}[{\bf Proof of Theorem \ref{theo:sumofdets}}]
The proof starts with a variant of Lemma 3 from \cite{kn:crs06}.
\begin{lemma}\label{lem:crs}
\begin{eqnarray}\label{eq:lotsofa}
    &&\int_{U(N)}\prod_{j=1}^k \Lambda_X(1/a_j)\Lambda_{X^{\dagger}}(a_{j+k})\dX\nonumber \\
    &=& \frac{1}{k!(2\pi i)^k} \oint \prod_{j=1}^k \left(\frac{u_j}{a_j}\right)^N \prod_{\substack{1\leq i \leq k\\ 1\leq j \leq 2k}} \frac{1}{1-\frac{a_j}{u_i}}\prod_{i\neq j} \left( 1-\frac{u_j}{u_i} \right) \prod_{j=1}^k\frac{1}{u_j}du_j \nonumber \\
    &=& \frac{(-1)^{\binom{k}{2}}}{k!(2\pi i)^k} \oint \prod_{j=1}^k \frac{u_j^{N-k}}{a_j^N} \prod_{\substack{1\leq i\leq k\\1\leq j\leq 2k}} \frac{u_i}{u_i-a_j} \prod_{i<j}(u_j-u_i)^2 \prod_{j=1}^k du_j.
\end{eqnarray}
with each of the $k$ contours in this $k$-dimensional integral being simple closed contours enclosing all the poles of the integrand.
\end{lemma}
The precise contours are not important since
we will eventually compute our integrals via residues, but one can take them to be sufficiently large circles centred on the origin.
The proof of this Lemma is the same as in~\cite{kn:crs06}. The difference here is that we are stating the Lemma with the variable $u$ which corresponds to their $e^w$. This makes no difference if one restricts to sufficiently small $a_j$, as they do, by eventually setting all the $a_j=1$. However, in Section~\ref{sect:three}, we will need to allow the $a_j$'s to be more general, and
we would run into complications if we were to use $e^w$, due to the periodic nature of the exponential function counting the same poles repeatedly. Hence we have stated our Lemma in this form.

The idea is the same as previous work on moments of derivatives of characteristic polynomials, or joint moments of derivatives and the characteristic polynomial itself that start from Lemma \ref{lem:crs} (for example  \cite{kn:crs06,kn:bbbcprs,kn:keawei1,kn:keawei2}) with the difference that we do not scale variables with $1/N$ and make a large $N$ approximation.  This allows us to calculate exactly for finite $N$. We will take one derivative with respect to each $a_j$, then set them all equal to 1, thus finding the $2k^{th}$ moment of the derivative of the characteristic polynomial evaluated at 1. After taking the derivative, we have a product of two multinomial expansions, so we introduce two new parameters, $t$ and $s$ to factorise those terms as the derivative of an exponential. We then apply Andr\'eief's identity to express the multiple contour integral as a determinant. For simplicity, we now replace $a_j$ with $1/a_j$ for $1 \leq j \leq k$. These first $k$ of the $a_j$ appear in the factors
\begin{equation}\label{preder}
    a_j^N \prod_{1 \leq i \leq k} \frac{u_i}{u_i - \frac{1}{a_j}} = a_j^N \prod_{1 \leq i \leq k} \frac{u_ia_j}{u_ia_j - 1} 
\end{equation} whose derivative with respect to $a_j$ is 
\begin{align}
    Na_j^{N-1}\prod_{1 \leq i \leq k} \frac{u_ia_j}{u_ia_j - 1} + a_j^N \sum_{n=1}^k \frac{-u_n}{(u_na_j-1)^2}\prod_{m\neq n} \frac{u_ma_j}{u_ma_j-1}.
\end{align} Evaluated at $a_j=1$, each of these $k$ factors gives
\begin{equation}
    \label{eq:postder}
    \prod_{1\leq i \leq k} \frac{u_i}{u_i -1} \left( N - \sum_{m=1}^k \frac{1}{u_m-1}\right).
\end{equation} Next, we look at the derivatives with respect to $a_j$ for $k < j \leq 2k$. These $a_j$s appear only in the factors
\begin{equation}
    \prod_{1 \leq i \leq k} \frac{u_i}{u_i - a_j}
\end{equation} and each of these derivatives is given by 
\begin{align}
\sum_{n=1}^k \frac{u_n}{(u_n-a_j)^2}\prod_{m \neq n}\frac{u_m}{u_m - a_j}
\end{align} which, evaluated at $a_j = 1$, yields
\begin{equation}
    \label{eq:postder2}
    \prod_{1 \leq i \leq k} \frac{u_i}{u_i-1}\left( \sum_{m=1}^k \frac{1}{u_m-1} \right).
\end{equation}
We combine all $2k$ derivatives to get
\begin{eqnarray}
    &&\int_{U(N)} \abs{ \Lambda'_X(1)}^{2k} \dX  \nonumber \\
    &&\qquad= \frac{(-1)^{\binom{k}{2}}}{k!(2\pi i)^k}\oint \left( \prod_{j=1}^k u_j^{N-k} \left( \frac{u_j}{u_j-1} \right)^{2k} \right) \left( N- \sum_{m=1}^k \frac{1}{u_m-1}\right)^k \left(\sum_{m=1}^k \frac{1}{u_m-1} \right)^k \Delta^2({\bf u}) \prod_{j=1}^k du_j \label{alt}\nonumber\\
    &&\qquad= \frac{d^k}{dt^k} \frac{d^k}{ds^k} \frac{(-1)^{\binom{k}{2}}}{k!(2\pi i)^k}\oint \left( \prod_{j=1}^k  \frac{u_j^{N+k}}{(u_j-1)^{2k}}\right) \exp \left( Nt- (t-s)\sum_{m=1}^k \frac{1}{u_m-1} \right) \Delta^2({\bf u}) \prod_{j=1}^k du_j \Bigg|_{t=s= 0}
    , \label{eq:s and t}
\end{eqnarray} 
where the Vandermonde determinant is
 \begin{equation}
    \label{absvan}
    \Delta({\bf u}) = 
    \Delta(u_1,\ldots,u_k)
    :=
    \prod_{1\le i<j \le k}(u_j - u_i)
    =
    \det_{k \times k} \left[ u_j^{i-1}\right].
\end{equation}
The last equality in~\eqref{eq:s and t} uses
\begin{equation}
    \frac{d}{dt}\exp\left( Nt-t\sum_{m=1}^k\frac{1}{u_m-1}\right)\Bigg|_{t=0}=\left(N-\sum_{m=1}^k\frac{1}{u_m-1}\right)
\end{equation}
and is designed to move the $k$-th powers involving
the sum over $m$ into an exponent so that the variables can be  separated.

A form of Andr\'eief's identity says
\begin{eqnarray}
    \label{Aint}
    &&\frac{1}{n!} \int_{J^n} \left(\prod_{j=1}^n f(u_j)\right) \det_{n
    \times n}(\psi_i(u_j)) \det_{n \times n} (\phi_i(u_j)) d u_1
    \cdots d u_n\nonumber \\
    &&=\frac{1}{n!} \int_{J^n} \det_{n \times
    n}\left(f(u_j)\psi_i(u_j)\right)
    \det_{n \times n} (\phi_i(u_j)) d u_1 \cdots d u_n
    = \det_{n \times n} \left( \int_J f(u)\psi_i(u) \phi_j(u) du
    \right),
\end{eqnarray}
for some interval or contour $J$ and a sequence of functions $\phi$ and $\psi$. 

With the introduction of the variables $s$ and $t$ in (\ref{eq:s and t}), we have worked the integrand multiplying the Vandermonde determinants into the multiplicative form $\prod_{j=1}^kf(u_j)$,
so applying Andr\'eief, we now have
\begin{eqnarray}
   && \int_{U(N)} \abs{ \Lambda'_X(1)}^{2k} \dX  \nonumber \\
    && = (-1)^{\binom{k}{2}}\frac{d^k}{dt^k} \frac{d^k}{ds^k} \exp(Nt) \det\left[ \frac{1}{2\pi i} \oint \frac{u^{N+k+i+j-2}}{(u-1)^{2k}}\exp\left(\frac{s-t}{u-1}\right) du\right]_{\substack{1 \leq i \leq k \\ 1 \leq j \leq k}}\Bigg|_{t=s=0}.
\end{eqnarray}
Next, we apply the product rule to the differentiation with respect to $t$, on account of the extra $\exp(Nt)$ in front of the determinant, and the above becomes
\begin{eqnarray}
    && \int_{U(N)} \abs{ \Lambda'_X(1)}^{2k} \dX =(-1)^{\binom{k}{2}}\sum_{m=0}^k
    \binom{k}{m}
    N^{k-m}
    \nonumber \\
    &&\qquad\times\left(\frac{d^m}{dt^m}\frac{d^k}{ds^k}
    \det\left[ \frac{1}{2\pi i} \oint \frac{u^{N+k+i+j-2}}{(u-1)^{2k}}\exp\left(\frac{s-t}{u-1}\right) du\right]_{\substack{1 \leq i \leq k \\ 1 \leq j \leq k}}\right)\Bigg|_{t=s=0}\nonumber \\
    &&=(-1)^{\binom{k}{2}}\sum_{m=0}^k
    \binom{k}{m}
    N^{k-m} \nonumber \\
    &&\qquad\times(-1)^m \left(\frac{d^{k+m}}{dt^{k+m}}
    \det\left[ \frac{1}{2\pi i} \oint \frac{u^{N+k+i+j-2}}{(u-1)^{2k}}\exp\left(\frac{t}{u-1}\right) du\right]_{\substack{1 \leq i \leq k \\ 1 \leq j \leq k}}\right)\Bigg|_{t=0},
    \label{eq:moment as derivative of det}
\end{eqnarray}
where in the final line we have combined the $s$ and $t$ derivatives, noting that they have exactly the same effect on the integrand, collecting the extra minus signs from the $t$ differentiations in the factor $(-1)^m$.

Finally, when differentiating a determinant one time with respect to the variable $t$ we get, by the product rule, a sum of determinants
\begin{equation}\label{detid}
    \frac{d}{dt} \det(a_1, a_2, \dots, a_n) = \det\left(\frac{d}{dt}a_1, a_2, \dots, a_n\right) + \dots + \det\left(a_1, a_2, \dots, \frac{d}{dt} a_n\right).
\end{equation}
Here, the $a_j$'s are column vectors.
Note that, by writing $u=(u-1)+1$ in the numerator and using the binomial expansion in the middle expression (or else by Cauchy's Integral Formula for derivatives), we compute the residue
\begin{equation}
    \frac{d}{dt}\frac{1}{2\pi i} \oint \frac{u^{N+k+i+j-2}}{(u-1)^{2k}}\exp\left(\frac{t}{u-1}\right) du\Bigg|_{t=0}=\frac{1}{2\pi i} \oint \frac{u^{N+k+i+j-2}}{(u-1)^{2k+1}}du=\binom{N+k+i+j-2}{2k}.
\end{equation}
Repeating this process, differentiating $k+m$ times with respect to $t$, results in a sum involving many determinants. We index according to how many times we have differentiated the $\ell$-th column with respect to $t$,  say $t_\ell \geq 0$, with $\sum t_\ell = k+m$. This results in
the following equation.
\begin{eqnarray}
    && \int_{U(N)} \abs{ \Lambda'_X(1)}^{2k} \dX =(-1)^{\binom{k}{2}}
    \sum_{m=0}^k \binom{k}{m}N^{k-m}
    (-1)^m
    \sum_{\substack{\sum_{\ell=1}^k t_\ell = k+m}}
    \binom{k+m}{t_1, \dots , t_k}
    \nonumber \\
    && \qquad \times \det\left[ \frac{1}{2\pi i}
    \oint \frac{u^{N+k+i+j-2}}{(u-1)^{2k+t_j}} du\right]_{\substack{1 \leq i \leq k \\ 1 \leq j \leq k}}\nonumber \\
    &&=(-1)^{\binom{k}{2}}
    \sum_{m=0}^k \binom{k}{m}N^{k-m}
    (-1)^m
    \sum_{\substack{\sum_{\ell=1}^k t_\ell = k+m}}
    \binom{k+m}{t_1, \dots, t_k}
    \nonumber \\
    && \qquad \times \det\left[ \binom{N+k+i+j-2}{2k+t_j -1} \right]_{\substack{1 \leq i \leq k \\ 1 \leq j \leq k}}. \label{sumofdets1}
\end{eqnarray}
The multinomial coefficient arises because we can arrive at the same number of differentiations of each column in multiple ways by doing the differentiations in different orders, eg. first differentiating the first column and then the second, versus differentiating the second and then the first.  So in the sum arising from applying (\ref{detid}) $k+m$ times, we group all terms with the same number of differentiations of column 1, column 2, etc, and account for the multiplicity using the multinomial coefficient.  
\end{proof}

\subsection{Factoring out moments of the characteristic polynomial}

Using equation \eqref{sumofdets1}, we can obtain the exact moment values for small $k$. Doing so, we can clearly see that the $2k^{th}$ moment of
$\abs{\Lambda_X'(1)}$
 factors into two parts: the $2k^{th}$ moment of $\abs{\Lambda_X(1)}$, i.e. the non-differentiated characteristic polynomial, as well as some polynomial in $N$ of degree $2k$.
We therefore define
\begin{equation}
    f(N,k) = 
    \frac{
        \int_{U(N)} \abs{\Lambda_X'(1)}^{2k}\dX
    }
    {
        \int_{U(N)} \abs{\Lambda_X(1)}^{2k}\dX
    },
\end{equation}
and list these functions, below, for $k \leq 6$:
\begin{align}
    f(N,1) &=
        N (1 + 2 N)/6
    \notag \\
    f(N,2) &=
        N (12 + 27 N + 40 N^2 + 61 N^3)/840
    \notag \\
    f(N,3) &=
        N (840 + 2174 N + 2829 N^2 + 2980 N^3 + 3933 N^4 + 6648 N^5)/388080
    \notag \\
    f(N,4) &=
        N \left( 211680 + 605724 N + 828464 N^2 + 835627 N^3 + 831344 N^4   \right. \notag \\
         & \quad \quad \left.  + 915970 N^5 + 1279520 N^6 + 2275447 N^7 \right)/544864320
     \notag \\
     f(N,5) &=
         N
         \left( 544864320 + 1680129432 N + 2440884600 N^2 + 2498415180 N^3 + 2320167235 N^4 + 2266635142 N^5 \right. \notag \\
         &
         \quad \quad \left. + 2448916150 N^6 + 2872062460 N^7 + 4060136575 N^8 + 7401505546 N^9 \right)/7190496593280
     \notag \\
     f(N,6) &= N \left( 222615993600 + 727617496320 N + 1115985182112 N^2  + 1176700689444 N^3 + 1073389052700 N^4   \right. \notag\\
      & \quad \quad \left. + 988586333095 N^5  + 978075305136 N^6 + 1034426527167 N^7 +  1167375408300 N^8 \right. \notag \\
      & \quad \quad  \left. + 1398326972685 N^9 + 1974154070952 N^{10} + 3654712923689 N^{11}\right)/14333056542604800 \notag \\
\end{align}
We will prove the following:
\begin{theorem}\label{thm1}
For $k\in \N$, $f(N,k)$ is a polynomial in $N$ of degree $2k$. 
\end{theorem}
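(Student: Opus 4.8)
The plan is to combine the explicit determinantal formula \eqref{sumofdets1} for the numerator with the Keating--Snaith product \eqref{eq:ks product} for the denominator, and to verify divisibility one root at a time. Write
\[
\int_{U(N)}\abs{\Lambda_X^{'}(1)}^{2k}\dX
=(-1)^{\binom k2}\sum_{m=0}^k\binom km N^{k-m}(-1)^m\sum_{\sum t_j=k+m}\binom{k+m}{t_1,\dots,t_k}\,\Phi_{\mathbf t}(N),
\]
where $\Phi_{\mathbf t}(N):=\det\left[\binom{N+k+i+j-2}{2k-1+t_j}\right]_{1\le i,j\le k}$ is manifestly a polynomial in $N$; hence the numerator is a polynomial. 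By \eqref{eq:ks product} the denominator equals $c\prod_{i,j=0}^{k-1}(N+i+j+1)=c\prod_{s=1}^{2k-1}(N+s)^{\mu_s}$, with $c\neq0$ constant and $\mu_s:=\min(s,2k-s)$ the number of pairs with $i+j+1=s$; it has degree $k^2$ and no factor of $N$. It therefore suffices to prove that $\prod_{s=1}^{2k-1}(N+s)^{\mu_s}$ divides every $\Phi_{\mathbf t}(N)$: the quotient $f(N,k)$ is then a polynomial, and since $f\cdot(\text{denominator})$ equals the numerator, its degree is $(k^2+2k)-k^2=2k$, using that the numerator has degree $k^2+2k$ with nonzero leading coefficient $b_k$ as recorded in \eqref{eq:deriv}.

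Because the linear factors $(N+s)$ for distinct $s\in\{1,\dots,2k-1\}$ are coprime, it is enough to show $(N+s)^{\mu_s}\mid\Phi_{\mathbf t}(N)$ for each fixed $s$ and each $\mathbf t$. The mechanism is that at $N=-s$ the $(i,j)$ entry becomes the integer binomial coefficient $\binom{k-s+i+j-2}{2k-1+t_j}$, and whole rows of the evaluated matrix vanish. For $1\le s\le k$ the top $k-s+i+j-2$ is nonnegative for all $i,j\ge1$, and it is strictly less than the bottom $2k-1+t_j$ whenever $i\le s$; hence rows $1,\dots,s$ vanish identically at $N=-s$. These $s=\mu_s$ rows are vectors of polynomials vanishing at $N=-s$, so each is divisible by $(N+s)$, and multilinearity of the determinant in its rows gives $(N+s)^{\mu_s}\mid\Phi_{\mathbf t}(N)$. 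For $k\le s\le 2k-1$ the complementary regime occurs: the top may be negative, and one checks that the only nonzero entries sit in the range $i+j\le s-k+1$, so rows $s-k+1,\dots,k$ — exactly $2k-s=\mu_s$ of them — vanish at $N=-s$, again yielding $(N+s)^{\mu_s}\mid\Phi_{\mathbf t}(N)$. At $s=k$ the two descriptions coincide and the whole matrix vanishes, consistent with $\mu_k=k$.

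Assembling the two regimes over all $s$ shows $\prod_{s=1}^{2k-1}(N+s)^{\mu_s}\mid\Phi_{\mathbf t}(N)$ for every $\mathbf t$; summing, the denominator divides the numerator, and $f(N,k)$ is a polynomial of degree $2k$. I expect the main obstacle to be the bookkeeping that isolates the vanishing rows: one must run the two inequality regimes (nonnegative top for $s\le k$, negative top for $s\ge k$) and confirm that in each case precisely $\mu_s$ rows die, uniformly in the column shifts $t_j$. The only external input beyond \eqref{sumofdets1} and \eqref{eq:ks product} is the degree $k^2+2k$ of the numerator together with its nonvanishing leading term; this is supplied by \eqref{eq:deriv}, and can alternatively be checked directly, since the coefficient of $N^{k^2+2k}$ reduces to a nonvanishing Hankel determinant of reciprocal factorials.
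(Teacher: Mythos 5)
Your proof is correct, and it establishes the key divisibility by a genuinely different mechanism than the paper, although the overall architecture is the same: both arguments start from \eqref{sumofdets1}, show that the degree-$k^2$ moment polynomial $\int_{U(N)}\abs{\Lambda_X(1)}^{2k}\dX$ divides every determinant $\Phi_{\mathbf{t}}(N)$ in the sum, and then obtain the degree $2k$ of the quotient from the externally known degree $k^2+2k$ of the numerator (the paper cites \cite{kn:akw22}; you invoke \eqref{eq:deriv} --- the same reliance on prior literature). The difference is in the divisibility step. The paper works entry-by-entry: it writes $\binom{N+k+i+j-2}{2k+t_j-1}=\binom{N+k+i+j-2}{2k-1}\times(\text{polynomial in }N)$, so each column of $\Phi_{\mathbf{t}}$ inherits the common factors of the corresponding column of the base matrix \eqref{termtofactor}, and the product \eqref{eq:idontknow} --- identified with the moment, up to a constant, via the row-reduction computation \eqref{niceformdet} --- is pulled out of every determinant. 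You instead read off the roots and multiplicities of the denominator from the Keating--Snaith product \eqref{eq:ks product} and verify $(N+s)^{\mu_s}\mid\Phi_{\mathbf{t}}(N)$ by locating $\mu_s=\min(s,2k-s)$ rows that vanish identically at $N=-s$; your row counts check out (for $s\le k$, rows $i\le s$ vanish since $0\le k-s+i+j-2\le 2k-2<2k-1+t_j$ uniformly in $t_j\ge 0$; for $s\ge k$, rows $i\ge s-k+1$ vanish by the same inequalities; the two regimes agree at $s=k$), and factoring $(N+s)$ out of each such row by multilinearity over $\mathbb{Q}[N]$, together with coprimality of distinct linear factors, is sound. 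What your route buys is independence from the paper's explicit determinant evaluation \eqref{niceformdet}: you need only the product form of the Keating--Snaith moment, and the root-counting argument is arguably cleaner. What the paper's route buys is the exact identification of the extracted factor with the determinant \eqref{termtofactor}, a computation it reuses in the proof of Theorem \ref{theo:mod4k-1}.
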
 
The data above together with the plot of the roots below in Figure \ref{fig:roots} suggests that $f(N,k)$ may be irreducible over the rationals.
\begin{proof}[{\bf Proof of Theorem \ref{thm1}}]
As in the proof of Theorem \ref{theo:sumofdets},
we start with the average of the characteristic polynomial itself,  letting $a_j = 1$ for all $j$ in \eqref{eq:lotsofa}, we have, using the methods of the previous section, 
\begin{eqnarray}
&&    \int_{U(N)} \abs{\Lambda_X(1)}^{2k}\dX = \frac{(-1)^{\binom{k}{2}}}{k!(2\pi i)^k}\oint \prod_{j=1}^k u_j^{N-k} \prod_{\substack{1 \leq i \leq k}} \frac{u_i^{2k}}{(u_i-1)^{2k}} \prod_{i < j} (u_j-u_i)^2 \prod_{j=1}^kdu_j \nonumber \\
&&    = (-1)^{\binom{k}{2}} \det_{k\times k} \left[ \frac{1}{2\pi i} \oint \frac{u^{N+k+i+j-2}}{(u-1)^{2k}} du \right]\nonumber \\
 &&   = (-1)^{\binom{k}{2}} \det_{k\times k} \left[ \binom{N+k+i+j-2}{2k -1} \right]. \label{termtofactor}
\end{eqnarray} 

The quantity on the left hand side is the familiar $2k$th moment of the characteristic polynomial, many forms for which are well-known, but we continue to evaluate it using the above determinant as the method will help us complete this proof of Theorem \ref{thm1}. First, we note how one can compute this determinant simply by factoring out common terms and using row reduction. Indeed, 
\begin{eqnarray}
&& \det \left[ \binom{N+k+i+j-2}{2k -1} \right] = \det
\begin{bmatrix}
\binom{N+k}{2k-1} & \binom{N+k+1}{2k-1} & \binom{N+k+2}{2k-1} & \dots & \binom{N+2k-1}{2k-1} \\
\binom{N+k+1}{2k-1} & \binom{N+k+2}{2k-1} & \binom{N+k+3}{2k-1} & \dots & \binom{N+2k}{2k-1} \\
\binom{N+k+2}{2k-1} & \binom{N+k+3}{2k-1} & \binom{N+k+4}{2k-1} & \dots & \binom{N+2k+1}{2k-1} \\
\vdots & \vdots & \vdots & \vdots & \vdots \\
\binom{N+2k-1}{2k-1} & \binom{N+2k}{2k-1} & \binom{N+2k+2}{2k-1} & \dots & \binom{N+3k-2}{2k-1} 
\end{bmatrix} \label{factdet1}\\
&& = \frac{1}{((2k-1)!)^k} \times \\
&& \det 
\begin{bmatrix}
(N+k)\dots(N-k+2) & (N+k+1)\dots(N-k+3) & \dots & (N+2k-1) \dots (N+1) \\
(N+k+1)\dots(N-k+3) & (N+k+2)\dots(N-k+4) & \dots & (N+2k) \dots (N+2) \\
\vdots & \vdots & \vdots & \vdots \\
(N+2k-1)\dots(N+1) & (N+2k)\dots (N+2) & \dots & (N+3k-2)\dots (N+k)
\end{bmatrix}. \nonumber 
\end{eqnarray}
We notice that, for every column, the first $k$ factors of the entry in the first row are factors of every row therefore we can pull them out; from the first column we pull out $(N+k)(N+k-1)\dots(N+2)(N+1)$, from the second column we pull out $(N+k+1)(N+k)\dots(N+3)(N+2)$ and so on until the last column where we pull out $(N+2k-1)\dots(N+k+1)(N+k)$. Therefore, \eqref{factdet1} equals
\begin{eqnarray}
&&\frac{(N+1)\dots(N+k-1)^{k-1}(N+k)^k(N+k+1)^{k-1}(N+k+2)^{k-2}\dots(N+2k-1)}{((2k-1)!)^k} \times \det \M,
\end{eqnarray}
where
\begin{eqnarray*}
 \det \M= \det \footnotesize{\begin{bmatrix}
N(N-1)\dots(N-k+2) & (N+1)N\dots(N-k+3) & \dots & (N+k-1)\dots (N+1) \\
(N+k+1)N\dots(N-k+3) & (N+k+2)(N+1)\dots(N-k+4) & \dots & (N+2k)(N+k-1)\dots (N+2) \\
\vdots & \vdots & \vdots & \vdots \\
(N+2k-1)\dots(N+k+1) & (N+2k)\dots(N+k+2) & \dots & (N+3k-2)\dots(N+2k-1)
\end{bmatrix}.}&&
\end{eqnarray*}
We pause here to explain the structure of the above matrix, which we will call $\M$: the first row contains the last $k-1$ original factors, since we factored out the first $k$. The second row however, will still have its first original factor followed by its $k-2$ last factors, since we factored out $k$ factors starting with the second factor. The third row will have its first 2 factors followed by its $k-3$ last factors, since we factored out $k$ factors starting from the third factor, and so on until the last row which simply has its first $k-1$ original factors since we factored out the last $k$ factors. We illustrate this as well as the row reduction process with an example where $k=4$ below. For the row reduction, the important observation is that every pair of rows $\M_j$ and $\M_{j+1}$ will have $k-2$ factors in common; the only difference will be between the last factor in $\M_j$ and the first factor in $\M_{j+1}$, and their difference will always be $2k-1$. Therefore, starting with the $k^{th}$ row, we remove the $k-1^{th}$ row from the $k^{th}$, then remove the $k-2^{th}$ row from the $k-1^{th}$, and so on, leaving the first row fixed. Now we can pull out a factor of $(2k-1)$ from every row except the first, so we pull out $(2k-1)^{k-1}$. We note that we are not changing the determinant since we are simply taking linear combinations of rows. We repeat this process, this time fixing both the first and second rows, and now the difference between entries from one row to the one above will be $2k-2$, therefore we pull out a factor of $(2k-2)^{k-2}$. By reiterating this process, each time fixing one more row, we eventually have a matrix that looks like
$$\footnotesize{\begin{bmatrix}
N(N-1)\dots(N-k+2) & (N+1)N\dots(N-k+3) & \dots & (N+k-1)\dots (N+1) \\
N\dots(N-k+3) & (N+1)\dots(N-k+4) & \dots & (N+k-1)\dots (N+2) \\
\vdots & \vdots & \vdots & \vdots \\
N & N+1 & \dots & N+k-1 \\
1 & 1 & \dots &  1
\end{bmatrix},}\nonumber$$ and we've pulled out 
\begin{eqnarray}
(2k-1)^{k-1}(2k-2)^{k-2}\dots(k+1)
\end{eqnarray} from the determinant.
Now, we will repeat this process but on the columns. We first subtract the $(k-1)^{th}$ column from the $k^{th}$, then the $(k-2)^{th}$ from the $(k-1)^{th}$, and so on fixing the first column. We repeat the process but fixing the second column, then we repeat again, each time fixing one more column. The result is a rotated upper triangular matrix, which looks like 
$$\footnotesize{\begin{bmatrix}
N(N-1)\dots(N-k+2) & (k-1)N\dots(N-k+3) & \dots & & (k-1)(k-2)\dots 2 \\
N\dots(N-k+3) & (k-2)N\dots(N-k+4) & \dots & (k-2)\dots 2 &  0 \\
\vdots & \vdots & \vdots & &\vdots \\
N & 1 & \dots & 0 & 0\\
1 & 0 & \dots & 0& 0
\end{bmatrix},}\nonumber$$
and whose determinant is 
\begin{equation}
    (-1)^{\binom{k}{2}}(k-1)(k-2)^2(k-3)^3 \dots 2^{k-2}.
\end{equation} Finally, after all this manipulation, we have that 
\begin{equation}
 \det \M = (-1)^{\binom{k}{2}}(2k-1)^{k-1}(2k-2)^{k-2}\dots(k+1)(k-1)(k-2)^2(k-3)^3 \dots 2^{k-2},
\end{equation} therefore
\begin{align}\label{niceformdet}
&\int_{U(N)} \abs{\Lambda_X(1)}^{2k}\dX =(-1)^{\binom{k}{2}}\det \left[ \binom{N+k+i+j-2}{2k -1} \right] = \\
& \frac{(N+1)\dots(N+k-1)^{k-1}(N+k)^k(N+k+1)^{k-1}\dots(N+2k-1)}{((2k-1)!)^k} \det \M =\nonumber \\
& \frac{2!\cdot3!\dots(k-1)!(N+1)\dots(N+k-1)^{k-1}(N+k)^k(N+k+1)^{k-1}\dots(N+2k-1)}{(2k-1)!(2k-2)!\dots k!}, \nonumber
\end{align}
where here we have arrived at the familiar polynomial form of the moment, i.e. equation~\eqref{eq:ks product}.

Now, with this method in mind, we will factor the determinant \eqref{niceformdet} out of \eqref{sumofdets1}. We do this by showing it is a factor of every determinant in the sum. This is true for the simple fact that $\binom{N}{a+b} = \binom{N}{a}f(N,a,b)$, where $f(N,a,b)$ is a polynomial in $N$ of order $b$.  Since $ t_j \geq 0$, we can rewrite
\begin{eqnarray}
   && \binom{N+k+i+j-2}{2k +t_j -1}  \\
   && =  \binom{N+k+i+j-2}{2k -1} \frac{(N+i+j-k-1)(N+i+j-k-2)\dots (N+i+j-k-t_j)}{(2k + t_j -1)(2k+t_j - 2) \dots (2k)}.  \nonumber 
\end{eqnarray} Therefore, every entry in the matrices of the determinant from \eqref{termtofactor} is a factor of the corresponding entry of matrices from \eqref{sumofdets1}. Therefore, we can pull out the factors of 
\begin{equation} \label{eq:idontknow}
    \frac{(N+1)\dots(N+k-1)^{k-1}(N+k)^k(N+k+1)^{k-1}(N+k+2)^{k-2}\dots(N+2k-1)}{((2k-1)!)^k}
\end{equation} as before. Equation (\ref{eq:idontknow}) is the $2k$th moment (\ref{niceformdet}) up to a constant factor.   What is left is a polynomial in $N$. It is of order  $2k$ because we know from previous literature (for example \cite{kn:akw22}, but see the introduction for the full history) that the $2k$th moment of the derivative is order $k^2+2k$ in $N$. Therefore what is left (after we factor out the order $k^2$ polynomial that is the moment (\ref{niceformdet})) is a polynomial in $N$ of order $2k$. 
\end{proof}

\subsection{Roots of $f(N,k)$}
Finally, it is interesting to note that $f(N,k)$ has small roots, which are distributed in an ellipse-like shape very near to the origin. In Figure \ref{fig:roots} we include the plots of the roots of $f(N,k)$ for small $k$. 

\begin{figure}[]
\centering
\begin{subfigure}[b]{.49\linewidth}
\includegraphics[width=\linewidth]{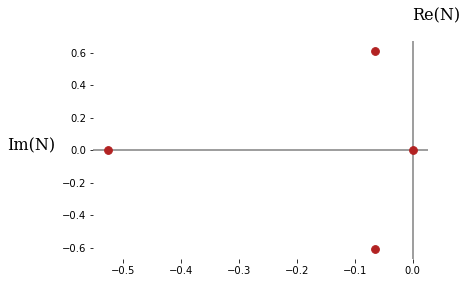}
\caption{Roots of $f(N,2)$}
\end{subfigure} 
\begin{subfigure}[b]{.49\linewidth}
\includegraphics[width=\linewidth]{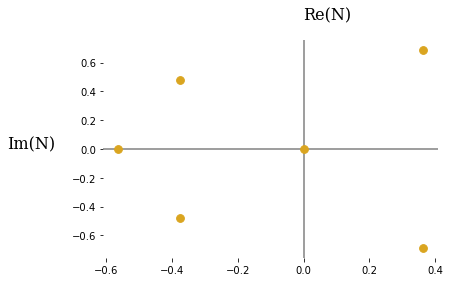}
\caption{Roots of $f(N,3)$}
\end{subfigure} 

\begin{subfigure}[b]{.49\linewidth}
\includegraphics[width=\linewidth]{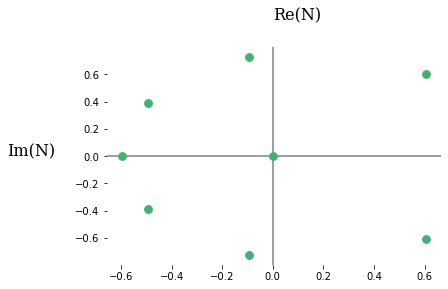}
\caption{Roots of $f(N,4)$}
\end{subfigure} 
\begin{subfigure}[b]{.49\linewidth}
\includegraphics[width=\linewidth]{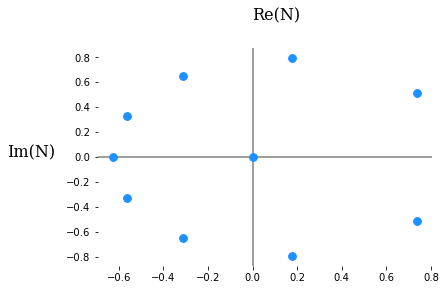}
\caption{Roots of $f(N,5)$}
\end{subfigure} 

\begin{subfigure}[b]{.49\linewidth}
\includegraphics[width=\linewidth]{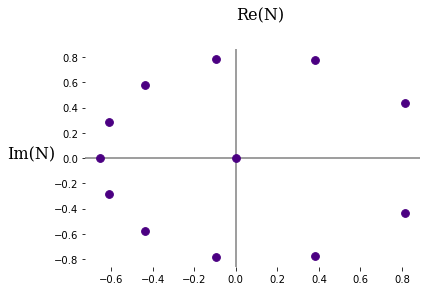}
\caption{Roots of $f(N,6)$}
\end{subfigure}  
\begin{subfigure}[b]{.49\linewidth}
\includegraphics[width=\linewidth]{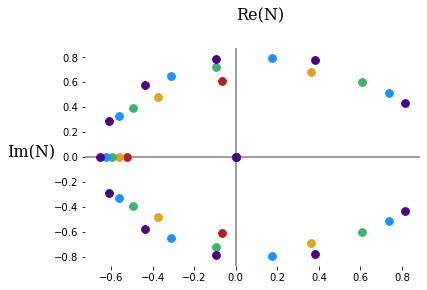}
\caption{Roots of $f(N,k)$ for $k = 2,3,4,5,6$}
\end{subfigure} 
\caption{Roots of $f(N,k)$}\label{fig:roots}
\end{figure}

In order to align the roots onto roughly the same ellipse, we find the appropriate scaling for the real roots, and then scale the complex roots accordingly. That is, since every $f(N,k)$ has two real roots, one equal to zero and the other non zero, denote the non zero real root of $f(N,k)$ by $r(k)$. Then, we find the coefficient call it $C_k$ that is the ratio of $r(k)/r(6)$; we scale by the root of $f(N,6)$ simply because it is the largest one in our data, otherwise we scale by the largest available. Then, we scale both the real and imaginary parts of the roots of each $f(N,k)$ by $C(k)$: then we see empirically in Figure \ref{fig:scaledroots} that the arguments of the zeros of $f(N,k)$s are pairwise interlacing in the upper and lower half planes.

\begin{figure}[]
\centering
\begin{subfigure}[b]{.49\linewidth}
\includegraphics[width=\linewidth]{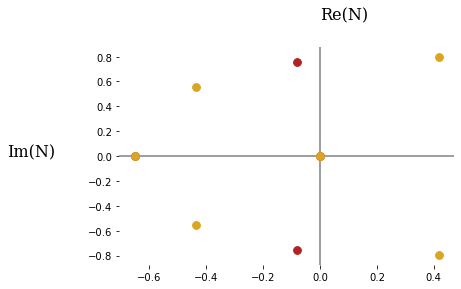}
\caption{Roots of $f(N,2)$ and $f(N,3)$}
\end{subfigure} 
\begin{subfigure}[b]{.49\linewidth}
\includegraphics[width=\linewidth]{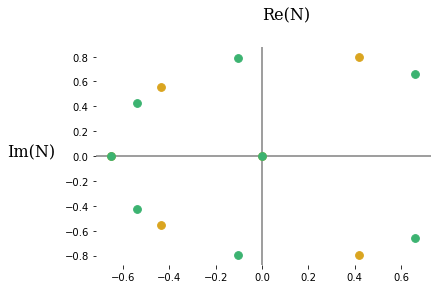}
\caption{Roots of $f(N,3)$ and $f(N,4)$}
\end{subfigure} 

\begin{subfigure}[b]{.49\linewidth}
\includegraphics[width=\linewidth]{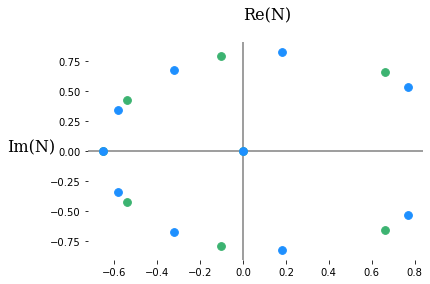}
\caption{Roots of $f(N,4)$ and $f(N,5)$}
\end{subfigure}  
\begin{subfigure}[b]{.49\linewidth}
\includegraphics[width=\linewidth]{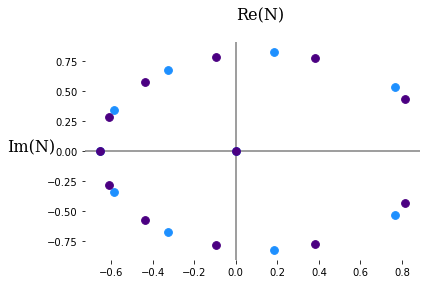}
\caption{Roots of $f(N,5)$ and $f(N,6)$}
\end{subfigure} 
\caption{Roots of $f(N,k)$ and $f(N,k+1)$}
\label{fig:scaledroots}
\end{figure}

\subsection{Moments modulo $4k-1$}\label{sect:modulo}

In the case where $4k-1$ is a prime we can determine the polynomial $f(N,k) \mod 4k-1$ explicitly.

\begin{theorem} \label{theo:mod4k-1}
For $k$ a positive integer, if $4k-1$ is prime
\begin{eqnarray}
    &&(4k-1) \int_{U(N)}|\Lambda_X'(1) |^{2k} dX =  \notag \\
    &&=  \frac{(-2) (N-2k+1) (N-2k+2)\cdots N}{ (k-1)! (k-1)!}\int_{U(N)} \abs{\Lambda_X(1)}^{2k}\dX \mod 4k-1.
\end{eqnarray}
\end{theorem}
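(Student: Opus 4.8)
The plan is to start from the exact finite-$N$ identity \eqref{sumofdets1} and track which terms carry a factor of the prime $p=4k-1$ in the denominators of their rational coefficients. Viewed as a polynomial in $N$, each entry $\binom{N+k+i+j-2}{2k+t_j-1}$ has denominator $(2k+t_j-1)!$, and $p=4k-1$ divides this factorial precisely when $2k+t_j-1\ge 4k-1$, i.e. when $t_j\ge 2k$. Since the inner sum is constrained by $\sum_\ell t_\ell=k+m$ with $m\le k$, we have $t_j\le 2k$, so the only way to reach $t_j\ge 2k$ is $m=k$ together with a single column carrying all the weight: $t_{j_0}=2k$ and $t_\ell=0$ for $\ell\ne j_0$. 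For these $k$ surviving terms the outer factors collapse, since $\binom{k}{m}=1$, $N^{k-m}=1$, $(-1)^m=(-1)^k$, and the multinomial coefficient $\binom{2k}{0,\dots,2k,\dots,0}=1$. Every other term has a $p$-integral coefficient, so after multiplying the whole identity by $p$ and reducing modulo $p$ it vanishes. This also justifies the remark following the theorem: the coefficients of $\int_{U(N)}|\Lambda_X'(1)|^{2k}\dX$ carry exactly one power of $p$ in their denominators, coming solely from the distinguished column $j_0$.

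Next I would evaluate the surviving determinants modulo $p$. Multiplying through by $p$ places the factor on column $j_0$, whose entries become $p\binom{N+k+i+j_0-2}{4k-1}$. Writing this as $(N+c)(N+c-1)\cdots(N+c-p+1)/(p-1)!$ with $c=k+i+j_0-2$, the numerator is a product of $p$ consecutive linear factors, so by $\prod_{a=0}^{p-1}(X-a)\equiv X^p-X$ it is $\equiv (N+c)^p-(N+c)\pmod p$; Frobenius together with Fermat gives $(N+c)^p\equiv N^p+c$, and Wilson's theorem gives $(p-1)!\equiv -1$, whence $p\binom{N+c}{p}\equiv N-N^p\pmod p$, independently of $c$. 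Thus, modulo $p$, the entire column $j_0$ equals the constant vector $(N-N^p,\dots,N-N^p)^{T}$, and I would factor out $N-N^p$, reducing $p\int_{U(N)}|\Lambda_X'(1)|^{2k}\dX$ modulo $p$ to $(-1)^{\binom{k}{2}+k}(N-N^p)\sum_{j_0=1}^k\det[B^{(j_0)}]$, where $B^{(j_0)}$ is the moment matrix $[\binom{N+k+i+j-2}{2k-1}]$ of \eqref{termtofactor} with its $j_0$-th column replaced by the all-ones vector.

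The main work is then to evaluate $\sum_{j_0}\det[B^{(j_0)}]$ modulo $p$. Here I would perform the bottom-up row reductions $R_i\to R_i-R_{i-1}$ used in the proof of Theorem \ref{thm1}: Pascal's rule sends $\binom{N+k+i+j-2}{2k-1}-\binom{N+k+i-1+j-2}{2k-1}=\binom{N+k+i+j-3}{2k-2}$ in the polynomial columns, while the constant column is annihilated except in its top entry. Expanding along that column and summing over $j_0$ reassembles, via a Laplace expansion along the first row, a single $k\times k$ determinant $\det C$ with an all-ones top row and entries $\binom{N+k+i+j-3}{2k-2}$ below, so that $\sum_{j_0}\det[B^{(j_0)}]=\det C$. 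One evaluates $\det C$ in closed form by iterating the row and column reductions of Theorem \ref{thm1}, and compares with the known moment \eqref{eq:ks product} written as $(-1)^{\binom{k}{2}}\det A$. The $p=4k-1$ factors of $N-N^p\equiv -N(N-1)\cdots(N-4k+2)$ then collapse against the reduced determinant using the crucial modular shift $(N-2k)(N-2k-1)\cdots(N-4k+2)\equiv (N+1)(N+2)\cdots(N+2k-1)\pmod p$, leaving exactly $\tfrac{-2}{((k-1)!)^2}(N-2k+1)(N-2k+2)\cdots N$ times $\int_{U(N)}|\Lambda_X(1)|^{2k}\dX$. The step I expect to be the main obstacle is precisely this closed-form evaluation of $\det C$ modulo $p$ and the matching of the accumulated constants, including the signs $(-1)^{\binom{k}{2}}$, $(-1)^k$ and the Wilson/Frobenius contributions; by contrast the conceptual input, namely isolating the unique $p$-divisible composition and the congruence $p\binom{N+c}{p}\equiv N-N^p$, is comparatively clean.
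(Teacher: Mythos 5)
Your proposal is correct and follows essentially the same route as the paper's proof: you isolate the unique compositions with $t_{j_0}=2k$ (forcing $m=k$), factor the now-constant column out of each surviving determinant, collapse the sum of $k$ determinants into a single one, reduce that to the $(k-1)\times(k-1)$ moment determinant of the form \eqref{niceformdet} with $N\to N+1$, and finish the constant-matching with Wilson's theorem. Your two local substitutions --- the Frobenius/Fermat identity $p\binom{N+c}{p}\equiv N-N^{p} \pmod{p}$ in place of the paper's direct observation that products of $4k-1$ consecutive integers wrap around to the same polynomial mod $p$, and the bottom-up row reduction plus Laplace expansion in place of the paper's Lemma \ref{lem:det} --- are equivalent to the paper's steps and change nothing structural, and the deferred determinant evaluation and sign bookkeeping do work out to the stated congruence.
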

See the introduction for a comment about the factor of $4k-1$ on the left hand side.

Before starting the proof, we develop a determinantal identity. 
\begin{lemma}\label{lem:det} We consider $k\times k$ determinants. Here $a_1, \ldots, a_k$ and $A$ are column vectors of length $k$. The notation $\det(a_1,a_2,\ldots,a_k)$ represents the determinant of a matrix that has columns $a_1$ to $a_k$. The following relation is true:
\begin{eqnarray}&& \det(A,a_2-a_1,a_3-a_2,\ldots,a_k-a_{k-1})=\det(A,a_2,a_3,\ldots a_k) + \det(a_1, A,a_3, \ldots, a_k) \nonumber \\
   &&\qquad \qquad+\det(a_1,a_2,A,a_4,\ldots,a_k) +\cdots + \det(a_1, a_2, a_3, \ldots, a_{k-1},A)
\end{eqnarray}
\end{lemma}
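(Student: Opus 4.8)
The plan is to prove the identity by exploiting the multilinearity and alternating property of the determinant as a function of its columns: I would expand the left-hand side, discard the terms that vanish, and match the survivors to the summands on the right one at a time.

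First I would expand $\det(A, a_2 - a_1, a_3 - a_2, \ldots, a_k - a_{k-1})$ by multilinearity in columns $2$ through $k$, splitting each column $a_i - a_{i-1}$ into its two pieces. This produces $2^{k-1}$ terms, each of the form $(\text{sign})\det(A, a_{s_2}, \ldots, a_{s_k})$, where for every $i \in \{2, \ldots, k\}$ one selects an index $s_i \in \{i-1, i\}$, contributing a factor $+1$ when $s_i = i$ and $-1$ when $s_i = i-1$. Because the determinant is alternating, any term in which two of the selected columns $a_{s_2}, \ldots, a_{s_k}$ coincide vanishes, so only the selections with $s_2, \ldots, s_k$ pairwise distinct survive. (As the whole statement is a polynomial identity in the entries, it suffices to reason as if $A, a_1, \ldots, a_k$ were independent generic columns.)

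Next I would analyze which selections survive. The indices $s_2, \ldots, s_k$ form $k-1$ distinct values drawn from $\{1, \ldots, k\}$ subject to the staircase constraint $s_i \in \{i-1, i\}$, so exactly one value $j \in \{1, \ldots, k\}$ is omitted. The key combinatorial step is to show that for each such $j$ there is precisely one admissible selection, namely $s_i = i-1$ for $2 \le i \le j$ and $s_i = i$ for $j < i \le k$; one verifies this by propagating the constraint upward from $s_2$. This selection uses the columns $a_1, \ldots, a_{j-1}, a_{j+1}, \ldots, a_k$ in their natural order and carries sign $(-1)^{j-1}$, since exactly $j-1$ of its choices (those with $2 \le i \le j$) are of the form $s_i = i-1$.

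Finally I would reconcile this surviving term with the $j$-th summand on the right-hand side. Moving the column $A$ from the first position past $a_1, \ldots, a_{j-1}$ into the $j$-th slot costs $j-1$ adjacent transpositions, hence a factor $(-1)^{j-1}$, which exactly cancels the sign $(-1)^{j-1}$ recorded above; thus the $j$-th surviving term equals $\det(a_1, \ldots, a_{j-1}, A, a_{j+1}, \ldots, a_k)$, and summing over $j = 1, \ldots, k$ reproduces the right-hand side. I expect the only genuine obstacle to be the bookkeeping in the third step—establishing the bijection between surviving terms and the omitted index $j$, and confirming that the two factors of $(-1)^{j-1}$ cancel—while the rest is a routine consequence of multilinearity. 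An alternative but more tedious route is induction on $k$, peeling off the last difference column $a_k - a_{k-1}$ by multilinearity, but the direct expansion above is cleaner to state.
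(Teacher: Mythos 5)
Your proof is correct, and it takes a genuinely different route from the paper's. You expand the left-hand side by multilinearity into $2^{k-1}$ signed terms, kill the ones with a repeated column via the alternating property, and then classify the survivors: the staircase constraint $s_i\in\{i-1,i\}$ forces, for each omitted index $j$, the unique selection $s_i=i-1$ for $i\le j$ and $s_i=i$ for $i>j$, carrying sign $(-1)^{j-1}$, which is exactly cancelled by the $(-1)^{j-1}$ cost of sliding $A$ into the $j$-th slot. I checked the bijection and both sign counts; they are right, and in fact your parenthetical appeal to generic columns is unnecessary, since a determinant with two equal columns vanishes identically over any commutative ring. The paper instead works from the right-hand side toward the left: it pairs consecutive summands, using column swaps such as $\det(a_1,A,a_3,\ldots)=\det(A,-a_1,a_3,\ldots)$ and linearity to merge them one at a time, arriving first at $\det(A,a_2-a_1,a_3-a_1,\ldots,a_k-a_1)$ and then passing to consecutive differences by subtracting adjacent columns. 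The paper's telescoping avoids the exponential expansion and needs no combinatorial classification, but requires tracking the intermediate merged forms; your argument is a single direct expansion whose only real content is the survivor-counting step, and it makes the sign cancellation completely explicit. Either proof is acceptable.
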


\begin{proof}[Proof of Lemma \ref{lem:det}]
    We will start with the $3\times 3$ example and then prove the lemma in general. Using standard properties of determinants:
    \begin{eqnarray}
        && \det(A,b,c)+\det(a,A,c)+\det(a,b,A)\nonumber \\
        &&\qquad = \det(A,b,c)+\det(A,-a,c)+\det(a,b,A)\nonumber \\
        &&\qquad =\det(A,b-a,c)+\det(a,b,A)\nonumber \\
        &&\qquad= \det(A,b-a,c)+\det(a,b-a,A)\nonumber \\
        &&\qquad= \det(A,b-a,c)+\det(A,b-a,-a)\nonumber \\
        &&\qquad = \det(A, b-a, c-a) = \det(A,b-a,c-b)
    \end{eqnarray}
    And now the same process in general:
    \begin{eqnarray}
        && \det(A,a_2,a_3,a_4,\ldots,a_k) + \det(a_1,A,a_3,a_4,\ldots,a_k) \nonumber \\
       && \qquad\qquad + \det(a_1,a_2,A,a_4,\ldots,a_k)+\cdots 
        +\det(a_1,a_2,a_3,a_4,\ldots,A) \nonumber \\
        &&=\det(A,a_2,a_3,a_4,\ldots,a_k) + \det(A,-a_1,a_3,a_4,\ldots,a_k)\nonumber \\
       && \qquad\qquad + \det(a_1,a_2,A,a_4,\ldots,a_k) +\cdots 
        +\det(a_1,a_2,a_3,a_4,\ldots,A) \nonumber \\
        &&=\det(A,a_2-a_1,a_3,a_4,\ldots,a_k) +  \det(A,a_2-a_1,-a_1,a_4,\ldots,a_k) \nonumber \\
       && \qquad\qquad+\det(a_1,a_2,a_3,A,a_5\ldots,a_k)+ \cdots 
        +\det(a_1,a_2,a_3,a_4,\ldots,A) \nonumber \\
        &&=\det(A,a_2-a_1,a_3-a_1,a_4,\ldots,a_k) \nonumber \\
       && \qquad\qquad + \det(A,a_2-a_1,a_3-a_1,-a_1,a_5,\ldots,a_k)+\cdots \nonumber\\
       && \qquad \qquad
        +\det(a_1,a_2,a_3,a_4,\ldots,A) \nonumber \\
        &&\vdots \nonumber \\
        &&=\det(A,a_2-a_1,a_3-a_1,\ldots,a_{k-1}-a_1,a_k)
        +\det(a_1,a_2,a_3,a_4,\ldots,A) \nonumber \\
         &&=\det(A,a_2-a_1,a_3-a_1,\ldots,a_{k-1}-a_1,a_k)\nonumber \\
         &&\qquad\qquad
        +\det(A,a_2-a_1,a_3-a_1,a_4-a_1,\ldots,-a_1) \nonumber \\
        &&= \det(A,a_2-a_1,a_3-a_1,\ldots,a_{k-1}-a_1,a_k-a_1)\nonumber \\
        &&=\det(A,a_2-a_1,a_3-a_2,\ldots,a_{k-1}-a_{k-2},a_k-a_{k-1})
    \end{eqnarray}
    where the final line follows by subtracting the $(k-1)$th column from the $k$th column, then the $(k-2)$th column from the $(k-1)$th, and so on. 
\end{proof}

Now we are ready to prove the main theorem of this section. 
\begin{proof}[Proof of Theorem \ref{theo:mod4k-1}]

We start with the moment of the determinant of the derivative of the characteristic polynomial written as a sum of determinants from Theorem \ref{theo:sumofdets}
\begin{eqnarray}\label{eq:fullsum}
   &&  \int_{U(N)} \abs{\Lambda_X'(1)}^{2k} \dX \nonumber \\
    && = (-1)^{\binom{k}{2}} \sum_{m=0}^k \binom{k}{m}N^{k-m} (-1)^m  \sum_{\substack{\sum_{j=1}^k t_{j} = k+m}}\binom{k+m}{t_{1}, \dots, t_{k}}    \nonumber \\ && \times 
    \det\left[ \binom{N+k+i+j-2}{2k+t_j -1} \right]_{\substack{1 \leq i \leq k \\ 1 \leq j \leq k}},
\end{eqnarray}
where the $t$ sum is over all possible non-negative integer values of the variables $t_j$, $j=1,\ldots,k$, that sum to $k+m$.

We are going to work modulo $4k-1$ with this polynomial in $N$.
Note that $k \choose m$ and $k+m \choose t_{1}, \dots, t_{k}$ are integers and do not involve $N$. However, the binomial coefficients in the determinants, when viewed as polynomials in $N$, have
rational coefficients, and some of these entries have $4k-1$ as a factor of the denominator, so some care is needed with these.
%
%
%
Specifically, if $t_{j}=2k$ for some $j$, then each entry of the $j$th column of the determinant for that term will be of the form $\binom{N+\nu}{4k-1}$ for some integer $\nu$ (where $\nu$ varies from row to row). Thus, as a polynomial in $N$,
the relevant determinant
in (\ref{eq:fullsum}) has rational coefficients that all have exactly one power of $4k-1$ in the denominator coming from 
the $(4k-1)!$ of the binomial coefficient.
Unless $t_j=2k$, a given entry
will not have any $4k-1$'s in their denominators.
Furthermore, we are assuming in this theorem that $4k-1$ is prime,
and so cannot be constructed as products of other factors. 

The condition that  $t_{j}=2k$ for some $j$ implies that we are only considering terms where $m=k$, where all the $t$ are zero except for $t_{j}=2k$. Therefore, multiplying the determinants by $((2k-1)!)^{k-1}(4k-1)!$ to clear the
denominators of the corresponding binomial coefficients,
the only terms which survive $\mod 4k-1$ are the terms on the right hand side of the following:
\small
\begin{eqnarray} \label{eq:modded}
    &&\qquad\qquad((2k-1)!)^{k-1}(4k-1)!\int_{U(N)}|\Lambda_X'(1) |^{2k} dX
    = (-1)^{\binom{k}{2}}    (-1)^k \\
    &&\times \sum_{n=1} ^k
\det \bigl[(N+k+i+j-2)(N+k+i+j-3)\dots(N-k+i+j-t_j)\bigr]_{%
  \begin{array}{l}
    1 \leq i \leq k,\\
    1 \leq j \leq k,\\
    t_j = 0, \text{ except } t_n = 2k
  \end{array}
}
\hspace{-.6in}
\mod 4k-1, \nonumber
\end{eqnarray}
\normalsize
since all the other determinants vanish $\mod 4k-1$ when multiplied by $(4k-1)!$.

As an example, when $k=2$, the various  $t$'s that occur in the sums in (\ref{eq:fullsum}) are
\begin{center}
\begin{tabular} {|c|c|}
\hline 
m&  $(t_1,t_2)$\\\hline\hline
0& (2,0)  \\
&(1,1)\\
&(0,2)\\\hline
1&(3,0)\\
&(2,1)\\
&(1,2)\\
&(0,3)\\\hline
2 &(4,0)\\
&(3,1)\\
&(2,2)\\
&(1,3)\\
&(0,4)\\\hline
\end{tabular}
\end{center}
We get a determinant in (\ref{eq:fullsum}) for every  $(t_1, t_2)$ pair. From this example, we see that the only place that we get  $t_j=4$  is when $m=2$ and $(t_1,t_2)=(4,0)$ or $(0,4)$. Thus the only determinants that will survive the process of clearing the denominator and calculating mod $4k-1=7$ are
\begin{equation}\label{eq:det11}
    \det\left[ \begin{array}{cc} (N+2)(N+1)N(N-1)(N-2)(N-3)(N-4) & (N+3)(N+2)(N+1) \\  (N+3)(N+2)(N+1)N(N-1)(N-2)(N-3) &(N+4)(N+3)(N+2)\end{array}\right] 
    \end{equation}
    and
    \begin{equation}\label{eq:det22}\det \left[\begin{array}{cc}  (N+2)(N+1)N&(N+3)(N+2)(N+1)N(N-1)(N-2)(N-3) \\  (N+3)(N+2)(N+1) &(N+4)(N+3)(N+2)(N+1)N(N-1)(N-2)\end{array}\right].
    \end{equation}

Note that because $N+3=N-4 \mod 7$ and $N+4=N-3\mod 7$, 
\begin{eqnarray}
  &&  (N+2)(N+1)N(N-1)(N-2)(N-3)(N-4)\nonumber \\
  && \qquad= (N+3)(N+2)(N+1)N(N-1)(N-2)(N-3)\mod 7\nonumber\\
  &&\qquad=(N+4)(N+3)(N+2)(N+1)N(N-1)(N-2)\mod 7
\end{eqnarray}
and thus both entries of the first column of (\ref{eq:det11}) and both entries of the second column of (\ref{eq:det22}) are all identical,
meaning that they can be factored out, leaving a column of ones. 

In general,
\small
\begin{eqnarray}
&&((2k-1)!)^{k-1}(4k-1)!\int_{U(N)}|\Lambda_X'(1) |^{2k} dX= (-1)^{\binom{k}{2}}    (-1)^k(N+2k-1)(N+2k-2)\ldots(N-2k+1) \nonumber \\
&& \qquad \times \sum_{n=1} ^k\det \left[ (N+k+i+j-2)(N+k+i+j-3)\ldots (N-k+i+j)\right]_{  \begin{array}{l}
    1 \leq i \leq k,\\
    1 \leq j \leq k,\\
    \text{column}\; n \to 1
  \end{array}
  }
\hspace{-.2in}
\mod 4k-1,
\end{eqnarray}
\normalsize
where the $n$th determinant in the sum has its $n$th column replaced by a column with 1 in every entry. 

Now we are in a setting where we can apply Lemma \ref{lem:det}. In that Lemma, we end up subtracting one column from an adjacent column, so we need the simplification that:
\begin{eqnarray}\label{eq:subtraction}
    && (N+m)(N+m-1)\cdots(N+m-2k+2)\nonumber\\
    &&\qquad \qquad-(N+m-1)(N+m-2)\cdots(N+m-2k+2)(N+m-2k+1)\nonumber \\
    &&\qquad = \big(N+m-(N+m-2k+1)\big)\times(N+m-1)\cdots(N+m-2k+2)\nonumber \\
    && \qquad =  (2k-1)(N+m-1)\cdots(N+m-2k+2).
\end{eqnarray}

So we have 
\footnotesize
\begin{eqnarray} \label{eq:deter1}
\\
&&((2k-2)!)^{k-1}(4k-1)!\int_{U(N)}|\Lambda_X'(1) |^{2k} dX= (-1)^{\binom{k}{2}}    (-1)^k (N+2k-1)(N+2k-2)\ldots(N-2k+1) \nonumber \\
&&  \times \det \left(\begin{array}{ccccc}1&(N+k)\cdots (N-k+3)&(N+k+1)\cdots(N-k+4)& \cdots & (N+2k-2)\cdots (N+1)\\ 1&(N+k+1)\cdots(N-k+4)&(N+k+2)\cdots (N-k+5)&\cdots &(N+2k-1) 
\cdots (N+2)\\\vdots&\vdots&\vdots&\ddots&\vdots\\
1& (N+2k-1)\cdots (N+2) &(N+2k)\cdots(N+3) &\cdots &(N+3k-3)\cdots (N+k)\end{array}\right)\mod 4k-1,\nonumber 
\end{eqnarray}
\normalsize

If $k=1$, the determinant is just equal to 1 and we have
\begin{equation}
    3!\int_{U(N)}|\Lambda_X'(1) |^2 dX=-(N+1)N(N-1)=-N(N-1)\int_{U(N)}|\Lambda_X(1) |^2 dX \mod 3.
\end{equation}

If $k>1$, we can return to (\ref{eq:deter1}) and subtract the $k-1$th row from the $k$th, then row $k-2$ from row $k-1$ and so forth until we have subtracted the first row from the second, noting that the subtractions collapse to a single product of consecutive integers, in exactly the same way as (\ref{eq:subtraction}):
\begin{equation}
    (N+m+1)\cdots (N+m-2k+4)-(N+m)\cdots(N+m-2k+3)=(2k-2)(N+m)\cdots(N+m-2k+4),
\end{equation}
where $m$ is an integer.

\footnotesize
\begin{eqnarray}\label{eq:reducedimension}
\\ \nonumber
&&((2k-3)!)^{k-1}(4k-1)!\int_{U(N)}|\Lambda_X'(1) |^{2k} dX= (-1)^{\binom{k}{2}}    (-1)^k (N+2k-1)(N+2k-2)\ldots(N-2k+1)\\
&&  \times \det \left(\begin{array}{ccccc}1&(N+k)\cdots (N-k+3)&(N+k+1)\cdots(N-k+4)& \cdots & (N+2k-2)\cdots (N+1)\\ 0&(N+k)\cdots(N-k+4)&(N+k+1)\cdots (N-k+5)&\cdots &(N+2k-2) 
\cdots (N+2)\\\vdots&\vdots&\vdots&\ddots&\vdots\\
0& (N+2k-2)\cdots (N+2) &(N+2k-1)\cdots(N+3) &\cdots &(N+3k-4)\cdots (N+k)\end{array}\right)\mod 4k-1.\nonumber 
\end{eqnarray}
\normalsize
Here we have pulled a factor of $(2k-2)$ from row 2 to $k$ and we note that the first column is now an identity column, effectively reducing the size of the determinant by one. 

A determinant of this form has been evaluated at (\ref{niceformdet}), once a factor of $1/(2k-1)!$ has been pulled from each of the $k$ rows of that determinant.  In (\ref{eq:reducedimension}), the matrix size is $k-1$, so we replace $k$ in (\ref{niceformdet}) by $k-1$ and replace $N$ in (\ref{niceformdet}) with $N+1$, allowing us to use that result without further modification:
\footnotesize
\begin{eqnarray}
   \\
   && (-1)^{\binom{k-1}{2}} \det \left(\begin{array}{cccc} (N+k)\cdots(N-k+4)&(N+k+1)\cdots (N-k+5)&\cdots &(N+2k-2) 
\cdots (N+2)\\\vdots&\vdots&\ddots&\vdots\\
(N+2k-2)\cdots (N+2) &(N+2k-1)\cdots(N+3) &\cdots &(N+3k-4)\cdots (N+k)\end{array}\right)\nonumber \\
\nonumber
&& \qquad =((2k-3)!)^{k-1} \frac{2\cdot 3! \cdots (k-2)!(N+2)\cdots (N+k-1)^{k-2}(N+k)^{k-1}(N+k+1)^{k-2} \cdots (N+2k-2)}{(2k-3)! (2k-4)!\cdots (k-1)!}.
\end{eqnarray}
\normalsize
Substituting this into~\eqref{eq:reducedimension} and cancelling $((2k-3)!)^{k-1}$ yields
\begin{eqnarray}\label{eq:reducedimension2}
    &&(4k-1)!\int_{U(N)}|\Lambda_X'(1) |^{2k} dX= -    (N+2k-1)(N+2k-2)\cdots (N-2k+1) \\
    &&  \times \frac{2\cdot 3! \cdots (k-2)!(N+2)\cdots (N+k-1)^{k-2}(N+k)^{k-1}(N+k+1)^{k-2} \cdots (N+2k-2)}{(2k-3)! (2k-4)!\cdots (k-1)!}.
\end{eqnarray}
In anticipation of our final formula, we gather factors on the right hand side:
\begin{eqnarray}\label{eq:tidyitup}
    &&-  \frac{2\cdot 3! \cdots (k-2)!}{(2k-3)! (2k-4)!\cdots (k-1)!} (N-2k+1) (N-2k+2)\cdots N \nonumber \\
    && \times (N+1)(N+2)^2\cdots (N+k-1)^{k-1}(N+k)^{k}(N+k+1)^{k-1} \cdots (N+2k-2)^2(N+2k-1).
\end{eqnarray}
By \eqref{eq:ks product}, the above equals
\begin{eqnarray}\label{eq:finalproduct}
    - \frac{(2k-1)!(2k-2)!(N-2k+1) (N-2k+2)\cdots N}{ (k-1)! (k-1)}
    \int_{U(N)} \abs{\Lambda_X(1)}^{2k}\dX \mod 4k-1.
\end{eqnarray}
Finally, since $4k-1$ is prime, we can use Wilson's Theorem to simplify on the left hand side: $(4k-2)! = -1 \mod 4k-1$, cancelling with the $-1$ on the right hand side and leaving $4k-1$ as a factor on the left (in the above proof, this factor of $4k-1$ cancels, ahead of reducing mod $4k-1$, with a factor of $4k-1$ that appears in the denominator of the $2k^{th}$ moment of $|\Lambda'_X(1)|$).

Also, by Wilson's Theorem, $(2k-1)!^2 = 1 \mod 4k-1$ so that, using $(2k-1)^{-1} = -2 \mod 4k-1$, we simplify on the right hand side: $(2k-1)! (2k-2)! = -2 \mod 4k-1$.

\end{proof}


\section{Determinantal formulas for moments of the derivatives $\Lambda'(1)$ and $\Lambda'(x)$}\label{sect:three}

In this section, we give several formulas for the moments of $\Lambda'(1)$ and $\Lambda'(x)$ and also describe a related differential equation satisfied by the main function that appears in the moments of $\Lambda'(1)$.

Our first theorem expresses the moments of $\Lambda'(x)$ in terms of the derivatives of a $k\times k$ determinant.

\begin{theorem}\label{thm:t1 t2}
   Let $x \in \C$, and $k$ be a non-negative integer. Then

\begin{equation}
   \label{thm:y moment cleaner}
   \int_{U(N)} \abs{\Lambda_X'(x)}^{2k} \dX \nonumber \\
    =(-1)^\frac{(k+1)k}{2} 
    \frac{d^k}{dt_1^k}
    \frac{d^k}{dt_2^k} 
    e^{-t_1 N}
    \det\left[ F_{N+k+i+j-1,k}(t_1,t_2,x) \right]_{\substack{1 \leq i \leq k \\ 1 \leq j \leq k}} \bigg|_{t_1=t_2=0} 
\end{equation}
where
\begin{equation}
    \label{eq:def F}
    F_{a,k}(t_1,t_2,x)
    =
    \frac{1}{2\pi i} \oint \frac{w^{a-1}}{(w-1)^k (w-|x|^2)^k}
    \exp\left(t_1/(w-1) + t_2/(w-|x|^2)\right) du,
\end{equation}
and the contour is a circle centred on the origin enclosing the points $1$ and $|x|^2$.
To clarify, the right hand side of the above is evaluated,
after carrying out the derivatives, at $t_1=t_2=0$.
Furthermore, if $|x| \neq 1$ and $a$ is a positive integer, then $F_{a,k}(t_1,t_2,x)$ is also equal to
\begin{align}
    \label{eq:formula for F}
    \sum_{0 \leq m+n+2k \leq a}
        \frac{t_1^m}{m!}\frac{t_2^n}{n!} &\Biggl(
            \frac{|x|^{2(a-n-k)}}{(|x|^2-1)^{m+k}}
            \sum_{l=0}^{n+k-1} {a-1 \choose n+k-1-l} { -m -k \choose l} \frac{|x|^{2l}}{(|x|^2-1)^{l}} \notag \\
            &+
            \frac{1}{(1-|x|^2)^{n+k}}
            \sum_{l=0}^{m+k-1} {a-1 \choose m+k-1-l} { -n -k \choose l} \frac{1}{(1-|x|^2)^{l}}
    \Biggr).
\end{align}
\end{theorem}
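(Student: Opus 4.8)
The plan is to run the machinery that proves Theorem \ref{theo:sumofdets}, but to begin the differentiation at a general pair of evaluation points and then insert a rescaling of the contour variables so that the two families of poles are moved to the real points $1$ and $|x|^2$. The first observation is purely formal: writing $\Lambda_X(s)=\sum_n c_n s^n$, conjugation of the coefficients gives $\overline{\Lambda_X'(x)}=\Lambda_{X^{\dagger}}'(\bar x)$, so that
\begin{equation}
\int_{U(N)}\abs{\Lambda_X'(x)}^{2k}\dX=\int_{U(N)}\Lambda_X'(x)^k\,\Lambda_{X^{\dagger}}'(\bar x)^k\,\dX .
\end{equation}
I would therefore apply Lemma \ref{lem:crs}, but instead of sending every parameter to $1$ I set $1/a_j=x$ for $1\le j\le k$ and $a_{j+k}=\bar x$ for $1\le j\le k$, take one derivative in each of the $2k$ parameters, and evaluate there. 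Exactly as in the passage following Lemma \ref{lem:crs}, differentiating the $\Lambda_X$-block contributes per derivative the logarithmic factor $\tfrac{N+k}{x}-\sum_m\tfrac{u_m}{u_m x-1}$ (the $+k$ coming from the $k$ factors $u_i x/(u_ix-1)$), while the $\Lambda_{X^{\dagger}}$-block contributes $\sum_m\tfrac1{u_m-\bar x}$.

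The key new step is the substitution $u_i=w_i/x$. Under it the first block's poles at $u_i=1/x$ move to $w_i=1$ and the second block's poles at $u_i=\bar x$ move to $w_i=x\bar x=|x|^2$, producing precisely the denominator $(w-1)^k(w-|x|^2)^k$ of \eqref{eq:def F}. A short computation gives $\tfrac{N+k}{x}-\sum_m\tfrac{u_m}{u_mx-1}=\tfrac1x\bigl(N-\sum_m\tfrac1{w_m-1}\bigr)$ and $\sum_m\tfrac1{u_m-\bar x}=x\sum_m\tfrac1{w_m-|x|^2}$, so the stray factors $x^{-k}$ and $x^{k}$ from the two blocks cancel against each other, and together with the Jacobian and the Vandermonde all remaining $x,\bar x$ dependence collects into powers of $|x|^2$ — consistent with the rotation invariance noted after \eqref{eq:1st}. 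I would then linearise the two $k$-th powers using \emph{two} parameters: $\bigl(N-\sum_m\tfrac1{w_m-1}\bigr)^k=(-1)^k\tfrac{d^k}{dt_1^k}e^{-t_1N}\prod_m e^{t_1/(w_m-1)}\big|_{t_1=0}$ and $\bigl(\sum_m\tfrac1{w_m-|x|^2}\bigr)^k=\tfrac{d^k}{dt_2^k}\prod_m e^{t_2/(w_m-|x|^2)}\big|_{t_2=0}$, where the extra $(-1)^k$ is what upgrades the sign $(-1)^{\binom{k}{2}}$ of Theorem \ref{theo:sumofdets} to $(-1)^{(k+1)k/2}$. Finally Andr\'eief's identity \eqref{Aint}, applied verbatim to the integrand against $\Delta^2(\mathbf w)$, folds the $k$-fold integral into the $k\times k$ determinant of the $F_{N+k+i+j-1,k}$, yielding the first displayed formula.

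For the explicit evaluation of $F_{a,k}$ when $|x|\neq1$ and $a\in\N$, I would expand both exponentials as Taylor series,
\begin{equation}
F_{a,k}(t_1,t_2,x)=\sum_{m,n\ge0}\frac{t_1^m}{m!}\frac{t_2^n}{n!}\;\frac1{2\pi i}\oint\frac{w^{a-1}}{(w-1)^{m+k}(w-|x|^2)^{n+k}}\,dw,
\end{equation}
reducing the claim to a rational residue computation. Because $|x|\neq1$ the poles $w=1$ and $w=|x|^2$ are distinct, so the contour integral is the sum of the two residues, each obtained from $\mathrm{Res}_{w=w_0}\tfrac{g(w)}{(w-w_0)^p}=\tfrac{g^{(p-1)}(w_0)}{(p-1)!}$ together with the general Leibniz rule applied to the product of $w^{a-1}$ with the reciprocal power supplied by the other pole. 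The residue at $w=|x|^2$ produces the first inner sum (the one carrying the powers of $|x|^2$) and the residue at $w=1$ produces the second; rewriting the falling factorials of $w^{a-1}$ and of the negative power as binomial coefficients is what yields the $\binom{a-1}{\cdot}$, $\binom{-m-k}{l}$ and $\binom{-n-k}{l}$ factors. The range $0\le m+n+2k\le a$ I would justify by a residue-at-infinity argument: when $m+n+2k>a$ the integrand is $O(w^{-2})$, so its residue at infinity vanishes and hence the two finite residues cancel, killing those terms.

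The most error-prone part of the first stage is the bookkeeping of powers of $x$ and $\bar x$, where one must verify that everything collapses to a function of $|x|^2$; but I expect the genuine obstacle to be the combinatorial matching in the second stage, namely showing that the two Leibniz expansions reproduce the displayed binomial sums \emph{exactly}, with the correct indices and the correct conversion of signed falling factorials into the negative binomial coefficients $\binom{-m-k}{l}$ and $\binom{-n-k}{l}$. Note finally that the hypothesis $|x|\neq1$ is precisely what keeps the two poles apart; when $|x|=1$ they coalesce into a single pole of order $2k$, which is why that case is handled separately in Theorem \ref{thm:x=1}.
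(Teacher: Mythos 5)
Your proposal is correct and follows essentially the same route as the paper's proof: Lemma \ref{lem:crs} with one derivative in each parameter evaluated at $x$ and $\ol{x}$, the substitution $w=ux$ to move the poles to $1$ and $|x|^2$, linearisation of the two $k$-th powers via auxiliary parameters $t_1,t_2$, Andr\'eief's identity to produce the $k\times k$ determinant of the $F_{N+k+i+j-1,k}$, and a two-residue computation (with the residue-at-infinity argument giving the truncation $m+n+2k\le a$) for the explicit formula for $F_{a,k}$. The only differences are organisational: you substitute before applying Andr\'eief and absorb the sign into the $e^{-t_1N}$ linearisation (correctly yielding $(-1)^{k(k+1)/2}$), whereas the paper substitutes inside the determinant entries and then flips the signs of $t_1,t_2$; likewise your Leibniz-rule residue evaluation is equivalent to the paper's Laurent-series coefficient extraction.
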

We derive the formulas in this theorem in the next subsection. We also give, in~\eqref{eq:F derivatives at 0,0}, another formula for the above expression in parentheses in terms of the $_2F_1$ hypergeometric function. 

The fact that the right hand side of this theorem depends on the norm of $x$ but not its argument again follows from the rotational invariance of Haar measure on $U(N)$.

However, we first note that,
if $|x|=1$, the theorem simplifies. To begin, setting $|x|=1$, we have
\begin{equation}
    F_{a,k}(t_1,t_2,1)
    =
    \frac{1}{2\pi i} \oint \frac{w^{a-1}}{(w-1)^{2k}}
    \exp\left((t_1+t_2)/(w-1)\right) dw.
\end{equation}
Notice that the dependence of the integrand on $t_1$ or $t_2$ appears in the exponential, and, when $x=1$, appears symmetrically. If we expand the determinant as a permutation sum whose summands are products of the entries of the matrix, we see that carrying out the differentiations with respect to $t_1$ and $t_2$ involves multiple applications of the product rule. 

Each time we differentiate a particular entry with respect to either variable, the effect is, on differentiating under the integral sign, to pull down, in the integrand, $1/(w-1)$ from the exponential. Furthermore, after carrying out all the differentiations, we set $t_1=t_2=0$.

Thus any specific multiple derivative of a given entry of the matrix in the determinant with respect to $t_1$ and $t_2$, followed by setting $t_1=t_2=0$,
can be achieved instead by letting $t=t_1+t_2$, and differentiating the determinant with respect to $t$ the same total number of times (as with respect to $t_1$ and $t_2$ combined) and setting $t=0$.

Also, note the presence of the factor $e^{-t_1 N}$ in front of the determinant which is only impacted by differentiation with respect to $t_1$, and also figures in the application of the product rule. Each differentiation with respect to $t_1$ of this factor pulls down one power of $-N$.

Hence, with $|x|=1$, we have
\begin{eqnarray}
    &&\frac{d^k}{dt_1^k}
    \frac{d^k}{dt_2^k} 
    e^{-t_1 N}
    \det\left[ F_{N+k+i+j-1,k}(t_1,t_2,1) \right]_{\substack{1 \leq i \leq k \\ 1 \leq j \leq k}} \bigg|_{t_1=t_2=0} \notag
    \\
    \label{eq:t version}
    &&= \sum_{h=0}^k {k \choose h} (-N)^{k-h} (d/dt)^{k+h}
    \det\left[ F_{N+k+i+j-1,k}(t) \right]_{\substack{1 \leq i \leq k \\ 1 \leq j \leq k}} \bigg|_{t=0},
\end{eqnarray}
where the entries in the second determinant are
\begin{equation}
    F_{a,k}(t)
    :=
    \frac{1}{2\pi i} \oint 
    \frac{w^{a-1}}{(w-1)^{2k}}
    \exp\left(t/(w-1)\right)
    dw,
\end{equation}
with the contour counter clockwise, centred on the origin and enclosing the point $u=1$.
This matches equation~\eqref{eq:moment as derivative of det}, once we include the
extra $(-1)^k$ in each summand and the  $(-1)^{k+1 \choose 2}$ in front of the right hand side of Theorem~\ref{thm:t1 t2}.

The binomial coefficient $ {k \choose h}$ arises from applying the product rule with respect to the $t_1$ variable, differentiating $k-h$ times, for $0 \leq h \leq k$. The $(-N)^{k-h}$ accounts for the number of times we differentiate $\exp(-t_1 N)$ with respect to $t_1$,
namely $k-h$.
The $(d/dt)^{k+h}$ comes about
from differentiating the first determinant $h$ times with respect to $t_1$ and $k$ times with respect to $t_2$ and, and consolidating the $t_1$ and $t_2$ variables in the entries as explained above.

Furthermore, the function $F_{a,k}(t)$ can be expressed in terms of the generalized Laguerre polynomials, whose generating function is given by:
\begin{equation}
    \frac{\exp(-zt/(1-z))}{(1 - z)^{\alpha + 1}} = \sum_{n=0}^{\infty} L_n^{(\alpha)}(t) z^n.
\end{equation}
The $ L_n^{(\alpha)}(t)$ can be hence written as a contour integral of the generating function, around a circle of radius $<1$ taken counter clockwise about $z=0$:
\begin{equation}
    L_n^{(\alpha)}(t)=
    \frac{1}{2\pi i} \oint 
    \frac{\exp(-zt/(1-z))}
    {(1 - z)^{\alpha + 1}}
    z^{-(n+1)}
    dz.
\end{equation}
Substituting $z=1/w$,
\begin{equation}
    L_n^{(\alpha)}(t)=
    \frac{1}{2\pi i} \oint 
    \frac{\exp(t/(1-w))}{(w - 1)^{\alpha + 1}}
    w^{n+\alpha} dw,
\end{equation}
where the contour is again counter clockwise along a circle centred on the origin of radius $>1$.
Therefore,
\begin{equation}
    F_{a,k}(t)
    =
    L_{a-2k}^{(2k-1)}(-t).
\end{equation}
Substituting this into \eqref{eq:t version} and then into Theorem~\ref{thm:t1 t2}, gives the following theorem.

\begin{theorem}
\label{thm:x=1}
Let $|x|=1$, i.e. on the unit circle, and $k$ be a non-negative integer. Then
\begin{eqnarray}
    &&
    \int_{U(N)} \abs{\Lambda_X'(x)}^{2k} \dX \nonumber \\
    &&\qquad  =
    (-1)^{k+1 \choose 2} \sum_{h=0}^k {k \choose h} (-N)^{k-h} (d/dt)^{k+h}
    \det_{k\times k} \left[L_{N-k+i+j-1}^{(2k-1)}(-t) \right]
    \bigg|_{t=0} \\
    && \qquad = 
    (-1)^{k} \sum_{h=0}^k {k \choose h} N^{k-h} (d/dt)^{k+h}
    \det_{k\times k} \left[L_{N+i-j}^{(2k-1)}(t) \right]
    \bigg|_{t=0}.
\end{eqnarray}
\end{theorem}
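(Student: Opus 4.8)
The plan is to assemble the first displayed equality directly from ingredients already in hand, and then to deduce the second equality from the first by a column reversal together with careful sign bookkeeping. For the first equality I would start from Theorem~\ref{thm:t1 t2} specialized to $|x|=1$, pulling the constant $(-1)^{(k+1)k/2}=(-1)^{\binom{k+1}{2}}$ out in front. The reduction of the two-variable derivative $\frac{d^k}{dt_1^k}\frac{d^k}{dt_2^k}e^{-t_1 N}(\cdots)\big|_{t_1=t_2=0}$ to single-variable form has already been carried out in~\eqref{eq:t version}, producing $\sum_{h=0}^k\binom{k}{h}(-N)^{k-h}(d/dt)^{k+h}$ acting on $\det[F_{N+k+i+j-1,k}(t)]$ evaluated at $t=0$. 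Substituting the Laguerre identity $F_{a,k}(t)=L^{(2k-1)}_{a-2k}(-t)$ with $a=N+k+i+j-1$ converts the entry into $L^{(2k-1)}_{N-k+i+j-1}(-t)$, which is precisely the first line of the theorem.

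For the second equality I would transform the determinant in two steps. First, reverse the order of the columns, that is, send $j\mapsto k+1-j$; this turns the index $N-k+i+j-1$ into $N+i-j$ while multiplying the determinant by the sign of the reversal permutation, namely $(-1)^{\binom{k}{2}}$. Second, I would replace the argument $-t$ by $t$: writing $G(t):=\det_{k\times k}[L^{(2k-1)}_{N+i-j}(t)]$, the determinant at argument $-t$ equals $G(-t)$, and since $(d/dt)^{k+h}G(-t)\big|_{t=0}=(-1)^{k+h}(d/dt)^{k+h}G(t)\big|_{t=0}$, each summand picks up a factor $(-1)^{k+h}$.

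It then remains to collect the signs. The front constant combines with the column-reversal sign via $(-1)^{\binom{k+1}{2}}(-1)^{\binom{k}{2}}=(-1)^{k^2}=(-1)^k$, using $\binom{k+1}{2}+\binom{k}{2}=k^2$ together with the fact that $k^2$ and $k$ have the same parity. Inside the sum, $(-N)^{k-h}(-1)^{k+h}=(-1)^{2k}N^{k-h}=N^{k-h}$. These reductions turn the first line exactly into the second line $(-1)^k\sum_{h=0}^k\binom{k}{h}N^{k-h}(d/dt)^{k+h}\det_{k\times k}[L^{(2k-1)}_{N+i-j}(t)]\big|_{t=0}$. The only genuine subtlety is the sign accounting, namely making the reversal sign $(-1)^{\binom{k}{2}}$, the argument-negation factor $(-1)^{k+h}$, and the prefactor cancel correctly; none of the individual steps is deep, but the bookkeeping is where an error would most easily creep in.
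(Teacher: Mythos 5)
Your proposal is correct and follows essentially the same route as the paper's own proof: specialize Theorem~\ref{thm:t1 t2} to $|x|=1$, invoke~\eqref{eq:t version} together with the identity $F_{a,k}(t)=L^{(2k-1)}_{a-2k}(-t)$ to get the first line, then obtain the second line by reversing the columns and tracking signs. Your sign bookkeeping, namely $(-1)^{\binom{k+1}{2}}(-1)^{\binom{k}{2}}=(-1)^{k^2}=(-1)^k$ for the prefactor and $(-N)^{k-h}(-1)^{k+h}=N^{k-h}$ inside the sum, is exactly the cancellation the paper performs.
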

The last equality follows by reversing the columns of the determinant in the line above, introducing an extra factor of $(-1)^{k(k-1)/2}$. Additionally, replacing $-t$ with $t$ in each entry of the determinant introduces an extra
$k+h$ powers of $-1$ when we carry out the $k+h$ derivatives with respect to $t$ ahead of setting $t=0$, and these
cancel with the $k-h$ powers of $-1$ that appear in $(-N)^{k-h}$. 

In terms of implementing this formula for the purpose of calculation, the following explicit expansion of the Laguerre polynomials is handy:
\begin{equation}
    L_n^{(\alpha)}(x) 
    =
    \begin{cases} 
        \sum_{i=0}^{n} (-1)^{i} \binom{n+\alpha}{n-i} x^{i}/i! & \text{if } n \geq 0, \\
       0 & \text{if } n < 0.
    \end{cases}
\end{equation}

Interestingly, a similar formula holds in the $N$ aspect. We again allow $x \in \C$, not necessarily on the unit circle.
We first write
\begin{equation}
    \label{eq:Lambda prime identity}
    \Lambda'(x) = \Lambda(x) \frac{\Lambda'(x)}{\Lambda(x)}
    = -\Lambda(x) \sum_{j=1}^N \frac{e^{-i \theta_j}}{1-xe^{-i \theta_j}}.
\end{equation}
Therefore, averaging over $U(N)$, using the Haar measure in terms of the eigenangles,
\begin{equation}
    \label{eq:haar}
    \frac{1}{N! (2\pi)^N}\prod_{1\leq j< l\leq N} |e^{i\theta_l}-e^{i\theta_j}|^2,
\end{equation} 
and applying \eqref{eq:Lambda prime identity}, we have, for positive integer $k$ and
$x \in \C$:
\begin{eqnarray}
    &&\int_{U(N)} |\Lambda_X'(x)|^{2k} ~\dX = \notag \\
    &&
    \label{eq:N integral}
    \frac{1}{N! (2\pi)^N}
    \int_{[0,2\pi]^N}
    |\Lambda_X(x)|^{2k}
    \left(
        \sum_{j=1}^N \frac{e^{-i \theta_j}}{1-xe^{-i \theta_j}}
    \right)^k
    \left(
        \sum_{j=1}^N \frac{e^{i \theta_j}}{1-\ol{x}e^{i \theta_j}}
    \right)^k 
    \prod_{1\leq j< l\leq N} |e^{i\theta_j}-e^{i\theta_l}|^2
    \prod_{j=1}^N d\theta_j. 
\end{eqnarray}
We write
\begin{equation}
    |\Lambda_X(x)|^2 =
    \prod_{j=1}^N
    (1-xe^{-i\theta_j})
    (1-\ol{x}e^{i\theta_j})
\end{equation}
and also note, for example, as before,
\begin{equation}
    \label{eq:k power as differentiations}
    \left(
        \sum_{j=1}^N \frac{e^{-i \theta_j}}{1-xe^{-i \theta_j}}
    \right)^k
    =
    \frac{d^k}{dt_1^k}
    \exp \left(
        t_1 
        \sum_{j=1}^N \frac{e^{-i \theta_j}}{1-xe^{-i \theta_j}}
    \right)
    \bigg|_{t_1=0}.
\end{equation}
The purpose of expressing the right hand side as derivatives of an exponential
is to make the integrand more separable so as to apply Andreief's identity.
Substituting the above two identities
into~\eqref{eq:N integral}, and something similar for the conjugate
of~\eqref{eq:k power as differentiations}, we have that~\eqref{eq:N integral}
equals
\begin{multline}
    \frac{d^k}{dt_1^k}
    \frac{d^k}{dt_2^k} 
    \frac{1}{N! (2\pi)^N}
    \int_{[0,2\pi]^N}
    \prod_{j=1}^N
    (1-xe^{-i\theta_j})^k
    (1-\ol{x}e^{i\theta_j})^k\\
    \exp \left(
        \sum_{j=1}^N \frac{t_1 e^{-i \theta_j}}{1-xe^{-i \theta_j}}
        +
        \sum_{j=1}^N \frac{t_2 e^{i \theta_j}}{1-\ol{x}e^{i \theta_j}}
    \right) 
    \prod_{1\leq j< l\leq N} |e^{i\theta_j}-e^{i\theta_l}|^2
    \prod_{j=1}^N d\theta_j,
\end{multline}
evaluated at $t_1=t_2=0$. Applying the Andreief identity~\eqref{Aint} (recognizing the above
double product as a product of a Vandermonde determinant and its conjugate) to express the $N$-dimensional integral as
an $N\times N$ determinant, the above becomes
\begin{equation} 
    \frac{d^k}{dt_1^k}
    \frac{d^k}{dt_2^k} 
    \det_{N\times N}
    \left[
    \frac{1}{2\pi}
    \int_0^{2\pi}
    (1-xe^{-i\theta})^k
    (1-\ol{x}e^{i\theta})^k
    \exp \left(
        \frac{t_1 e^{-i \theta}}{1-xe^{-i \theta}}
        +
        \frac{t_2 e^{i \theta}}{1-\ol{x}e^{i \theta}}
        +i \theta(j-l)
    \right)
    d\theta
    \right] \bigg|_{t_1=t_2=0}.
\end{equation}
Specializing again to $x=1$, this becomes
\begin{equation} 
    \label{eq:andreief x=1}
    \frac{d^k}{dt_1^k}
    \frac{d^k}{dt_2^k} 
    \det_{N\times N}
    \left[
    \frac{1}{2\pi}
    \int_0^{2\pi}
    (1-e^{-i\theta})^k
    (1-e^{i\theta})^k
    \exp \left(
        \frac{t_1 e^{-i \theta}}{1-e^{-i \theta}}
        +
        \frac{t_2 e^{i \theta}}{1-e^{i \theta}}
        +i \theta(j-l)
    \right)
    d\theta
    \right]
    \bigg|_{t_1=t_2=0}.
\end{equation}
We can rewrite the first factor in the integrand as
$(1-e^{-i\theta})^k = (-1)^k e^{-i k \theta} (1-e^{i \theta})^k$.
Furthermore, $e^{-i \theta}/(1-e^{-i \theta}) = -1/(1-e^{i \theta})$,
and $e^{i \theta}/(1-e^{i \theta}) = -1 + 1/(1-e^{i \theta})$. Using these, and pulling
out $(-1)^k e^{-t_2}$ from each row of the determinant,
~\eqref{eq:andreief x=1}
equals
\begin{eqnarray}
    \label{eq:andreief x=1 b}
    &&(-1)^{kN}
    \frac{d^k}{dt_1^k}
    \frac{d^k}{dt_2^k} 
    e^{-t_2 N}
    \det_{N\times N}
    \left[
    \frac{1}{2\pi}
    \int_0^{2\pi}
    (1-e^{i\theta})^{2k}
    \exp \left(
        \frac{(t_2-t_1)}{1-e^{i \theta}}
        +i \theta(j-l-k)
    \right)
    d\theta
    \right]
    \bigg|_{t_1=t_2=0} \\
    &&=
    (-1)^{kN}
    \frac{d^k}{dt_1^k}
    \frac{d^k}{dt_2^k} 
    e^{-t_2 N}
    \det_{N\times N}
    \left[ 
        L_{j-l+k}^{(-2k-1)}(t_2-t_1)
    \right]
    \bigg|_{t_1=t_2=0}.
\end{eqnarray}
As in \eqref{eq:t version}, we apply the product rule, but with respect to $t_2$. Furthermore, we can consolidate the $t_2-t_1$ as $t$, taking care
to include a factor $(-1)^k$ to account for the effect of the chain rule each of the $k$ times we  differentiate $t_2-t_1$ with respect to $t_1$. Below,
that factor cancels with part of the $(-1)^{k-h}$ that occurs when we
differentiate $\exp(-t_2N)$ $k-h$ times with respect to $t_2$.
We thus have
\begin{theorem}
\label{thm:x=1 N version}
For positive integer $k$ 
\begin{eqnarray}
    \label{eq:}
    &&
    \int_{U(N)} \abs{\Lambda_X'(1)}^{2k} \dX \nonumber \\
    &&\qquad  =
    (-1)^{kN} \sum_{h=0}^k {k \choose h} (-1)^{h} N^{k-h} (d/dt)^{k+h}
    \det_{N\times N} \left[L_{j-l+k}^{(-2k-1)}(t) \right]
    \bigg|_{t=0}.
\end{eqnarray}
As before, this formula also holds for the average of $\abs{\Lambda_X'(x)}^{2k}$ for any $|x|=1$, by rotational
invariance of Haar measure on $U(N)$.
\end{theorem}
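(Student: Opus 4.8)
The plan is to bypass the $k$-fold contour integral of Lemma~\ref{lem:crs} and instead work directly with the Weyl integration formula, exploiting the factorization of $\Lambda'$ through its logarithmic derivative. First I would write, as in~\eqref{eq:Lambda prime identity},
\[
\Lambda_X'(1) = -\Lambda_X(1) \sum_{j=1}^N \frac{e^{-i\theta_j}}{1-e^{-i\theta_j}},
\]
so that $\abs{\Lambda_X'(1)}^{2k}$ equals $\abs{\Lambda_X(1)}^{2k}$ times the $k$-th power of this sum multiplied by the $k$-th power of its complex conjugate. Substituting into the Haar measure~\eqref{eq:haar} produces an $N$-fold integral over the eigenangles in which the factor $\abs{\Lambda_X(1)}^{2k}=\prod_j (1-e^{-i\theta_j})^k(1-e^{i\theta_j})^k$ is already a product over $j$.

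The key device, exactly as in~\eqref{eq:k power as differentiations}, is to introduce auxiliary variables $t_1,t_2$ and write each $k$-th power of a sum as the $k$-th $t$-derivative of an exponential of that sum, evaluated at $t=0$. This converts both sums into products over $j$ of single-angle exponentials, so that the whole integrand apart from the Vandermonde factor separates as $\prod_j g(\theta_j)$. Recognizing $\prod_{i<l}\abs{e^{i\theta_i}-e^{i\theta_l}}^2$ as a Vandermonde determinant times its conjugate, I would then invoke Andreief's identity~\eqref{Aint} to collapse the $N$-fold integral into a single $N\times N$ determinant whose $(j,l)$ entry is the Fourier coefficient
\[
\frac{1}{2\pi}\int_0^{2\pi}(1-e^{-i\theta})^k(1-e^{i\theta})^k \exp\!\left(\frac{t_1 e^{-i\theta}}{1-e^{-i\theta}}+\frac{t_2 e^{i\theta}}{1-e^{i\theta}}+i\theta(j-l)\right)d\theta.
\]

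The bulk of the work is simplifying this entry. Using $(1-e^{-i\theta})^k=(-1)^k e^{-ik\theta}(1-e^{i\theta})^k$ together with $e^{-i\theta}/(1-e^{-i\theta})=-1/(1-e^{i\theta})$ and $e^{i\theta}/(1-e^{i\theta})=-1+1/(1-e^{i\theta})$, the integrand collapses to a single factor $(1-e^{i\theta})^{2k}$, a shifted frequency $i\theta(j-l-k)$, and an overall $(-1)^k e^{-t_2}$ that I would pull out of each of the $N$ rows, producing the prefactors $(-1)^{kN}$ and $e^{-t_2 N}$. The remaining Fourier integral is exactly the contour representation of the generalized Laguerre polynomial developed earlier in this section, so the surviving entry is $L_{j-l+k}^{(-2k-1)}(t_2-t_1)$, with the convention $L_n\equiv 0$ for $n<0$. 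Finally, because the $e^{-t_2 N}$ prefactor depends only on $t_2$, I would apply the product rule in $t_2$: differentiating $e^{-t_2 N}$ a total of $k-h$ times contributes $\binom{k}{h}(-N)^{k-h}$, while consolidating $t_2-t_1$ into the single variable $t$ and absorbing the chain-rule sign from each $t_1$-derivative turns the remaining $k+h$ derivatives into $(d/dt)^{k+h}$ and reorganizes the signs into the stated $(-1)^h$.

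I expect the main obstacle to be the sign and prefactor bookkeeping in these last steps: correctly tracking the $(-1)^k$ from $(1-e^{-i\theta})^k$ across all $N$ rows, reconciling the overall $(-1)^{kN}$ with the chain-rule signs, and confirming that the index shift $j-l+k$ and the negative parameter $-2k-1$ emerge correctly when matching the Fourier integral to the Laguerre generating function. A secondary point requiring care is justifying the consolidation of $t_1,t_2$ into $t=t_2-t_1$: each mixed derivative of an entry, evaluated at $t_1=t_2=0$, agrees with the corresponding $t$-derivative precisely because the entry depends on $t_1,t_2$ only through $t_2-t_1$, and the $e^{-t_2 N}$ prefactor, which breaks that symmetry, is exactly what the separate product rule in $t_2$ is designed to handle. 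The extension from $\Lambda'(1)$ to arbitrary $\abs{x}=1$ is then immediate from the rotational invariance of Haar measure.
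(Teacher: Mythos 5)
Your proposal is correct and follows essentially the same route as the paper's own derivation: starting from the Weyl integration formula and the logarithmic-derivative factorization \eqref{eq:Lambda prime identity}, introducing $t_1,t_2$ via \eqref{eq:k power as differentiations}, applying Andreief's identity to obtain the $N\times N$ Fourier-coefficient determinant, simplifying each entry to $L_{j-l+k}^{(-2k-1)}(t_2-t_1)$ with the $(-1)^{kN}e^{-t_2N}$ prefactor, and finishing with the product rule in $t_2$ and the consolidation $t=t_2-t_1$. Your sign bookkeeping (the chain-rule factor $(-1)^k$ combining with $(-1)^{k-h}$ from differentiating $e^{-t_2N}$ to give the stated $(-1)^h$) matches the paper exactly.
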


\subsection{Proof of Theorem~\ref{thm:t1 t2}}

As in Section \ref{sec:2}, we start with Lemma~\ref{lem:crs},
first replacing $a_j$ by $1/a_j$ for $1\leq j\leq k$, and then 
introduce the moments of $\Lambda'$ by differentiating the formula in that Lemma with respect to each of the $a_j$'s, $1 \leq j \leq 2k$. But unlike Section 2, where we then set all the $a_j$
equal to 1, here
we set $a_j=x$, and $a_{j+k}=\ol{x}$, for $1 \leq j \leq k$.

Instead of \eqref{eq:postder} we get
\begin{equation}
    \label{eq:postder x}
    x^{(N-1)}
    \left(
    \prod_{1\leq i \leq k}
    \frac{u_i}{u_i -1/x}
    \right)
    \left( N + \sum_{m=1}^k \frac{1}{1-u_m x}\right),
\end{equation}
and instead of \eqref{eq:postder2} we get
\begin{equation}
    \label{eq:postder2 x}
    -\frac{1}{\ol{x}}
    \left(
        \prod_{1 \leq i \leq k} \frac{u_i}{u_i-\ol{x}}
    \right)
    \left( \sum_{m=1}^k \frac{1}{1-u_m/\ol{x}} \right).
\end{equation}
Combining all the derivatives with respect to all the $a_j$'s gives
\begin{eqnarray}
    \label{eq:t1 t2 step1b}
    &&\int_{U(N)} |\Lambda'_X(x)|^{2k} dX =\\
    &&(-1)^k\frac{x^{kN}|x|^{-2k}}{k!(2\pi i)^k} \notag
    \oint
    \prod_{j=1}^k u_j^{N+k}
    \frac{\left(
        N
        +
        \sum_{i=1}^k
            \frac{1}
            {1-u_i x}
    \right)^k
    \left(
        \sum_{i=1}^k
            \frac{1}
            {1-u_i/\ol{x}}
    \right)^k
    }{\prod_{j=1}^k (u_i-1/x)^k (u_i-\ol{x})^k}
    \prod_{i\neq j}
    (u_i-u_j)
    \prod_{j=1}^k du_j
\end{eqnarray}
where each of the $k$-contours of this $k$-dimensional contour integral is around simple closed contours enclosing the points $1/x$ and $\ol{x}$.

We wish to make the integrand more separable so as to apply the Andreief identity, see (\ref{Aint}). To this end, we introduce parameters
$t_1$ and $t_2$, and notice that
\begin{equation}
    \frac{d^k}{dt_1^k} \exp\left(t_1 N + t_1 \sum_{i=1}^k \frac{1} {1-u_i x}  \right)
        \bigg|_{t_1=0}
    =
    \left(
    N
    +
    \sum_{i=1}^k
        \frac{1}
        {1-u_i x}
    \right)^k,
\end{equation}
and 
\begin{equation}
    \frac{d^k}{dt_2^k} \exp\left(t_2 \sum_{i=1}^k \frac{1} {1-u_i/\ol{x}}  \right)
        \bigg|_{t_2=0}
    =
    \left(
    \sum_{i=1}^k
        \frac{1}
        {1-u_i/\ol{x}}
    \right)^k.
\end{equation}
We substitute the left hand sides of these two formulas for the numerator of the displayed fraction in the integrand
in~\eqref{eq:t1 t2 step1b}.
We also note that $ \prod_{i\neq j} (u_i-u_j)$ is, up to sign, the square of a Vandermonde determinant, specifically
equal to $(-1)^{k(k-1)/2} \prod_{1 \leq j<i\leq k}(u_i-u_j)^2$. Other than this factor the
rest of the integrand separates (after carrying out the above two substitutions), and we can apply Andreief's identity (see (\ref{Aint})) to get
\begin{eqnarray}
    \label{eq:t1 t2 yipee}
    &&
    \int_{U(N)} |\Lambda'_X(x)|^{2k}  dX = \nonumber\\
    &&(-1)^{\frac{(k+1)k}{2}} \nonumber
    x^{kN}|x|^{-2k} 
    \frac{d^k}{dt_1^k}
    \frac{d^k}{dt_2^k} 
    e^{t_1 N} \\
    &&\,\,\,\,\,\,\, \det_{k \times k}\left[
        \frac{1}{2\pi i} \oint \frac{u^{N+k+i+j-2}}{(u-1/x)^k (u-\ol{x})^k}
        \exp\left(t_1/(1-ux) + t_2/(1-u/\ol{x})\right) du
    \right]_{\substack{1 \leq i \leq k \\ 1 \leq j \leq k}} \bigg|_{t_1=t_2=0}.
\end{eqnarray}
Let $a=N+k+i+j-1$, and also substitute $w=ux$ into the entry of the determinant, which thus becomes
\begin{equation}
    \label{eq:entry simplified}
    \frac{x^{2k-a}} {2\pi i}
    \oint \frac{w^{a-1}}{(w-1)^k (w-|x|^2)^k}
    \exp\left(-t_1/(w-1) - t_2 |x|^2/(w-|x|^2)\right) dw.
\end{equation}
But, in carrying out the derivatives with respect to $t_1$ and $t_2$, we can, by the chain rule,
replace $-t_1$ and $-t_2$ with $t_1$ and $t_2$ since an even number of derivatives are carried out ($k$ derivatives
with respect to each). Similarly we can replace $t_2 |x|^2$ above with $t_2$ by dropping the $|x|^{-2k}$ that appears
in~\eqref{eq:t1 t2 yipee}.

Additionally, $2k-a = -N+k-i-j+1$. We can pull out $x^{-N+k-i+1}$ from the $i$-th row of the determinant, and
$x^{-j}$ from the $j$-th column. Altogether, this pulls out, after using $\sum_{i=1}^k i = k(k+1)/2$ (and, likewise,
for $j$) and simplifying, $x^{-Nk}$ which cancels with the $x^{Nk}$ in~\eqref{eq:t1 t2 yipee}.
We have thus arrived at the first formula of Theorem~\ref{thm:t1 t2}.

Next, we derive formula~\eqref{eq:formula for F} for the function $F_{a,k}(t_1,t_2,X)$.
From~\eqref{eq:def F} we see that $F_{a,k}(t_1,t_2,X)$ an entire function of $t_1$ and $t_2$, thus 
we may expand it in a two dimensional Maclaurin series about the origin, valid for all
$t_1$ and $t_2$:
\begin{equation}
    \label{eq:F a k}
    F_{a,k}(t_1,t_2,x)
    = \sum_{m,n\geq 0} 
    F_{a,k}^{(m,n)}(0,0,x)
    \frac{t_1^m t_2^n}{m! n!},
\end{equation}
where, $F_{a,k}^{(m,n)}$ is the $m$-th derivative of $F_{a,k}(t_1,t_2,x)$ with respect to $t_1$ followed by the $n$-th derivative with respect to $t_2$. But 
\begin{equation}
   \label{eq:F m n}
    F_{a,k}^{(m,n)}(0,0,x) =
    \frac{1}{2\pi i} \oint \frac{w^{a-1}}{(w-1)^{m+k} (w-|x|^2)^{n+k}} dw.
\end{equation}
Now, as in the proof of Liouville's theorem in complex analysis, the contour intergral on the rhs vanishes if
$m+n+2k\geq a+1$, which we can see by replacing the contour with ever larger circles of radius $R$
centred on the origin, with circumference growing proportionately to $R$, whereas the integrand is at most $O(1/R^2)$.
Thus the sum over $m$ and $n$ involves finitely many terms,
and we can truncate the sum in~\eqref{eq:F a k} at $m+n+2k\leq a$. 

To evaluate the above contour integral we evaluate the residues at the points $1$ and $|x|^2$. Note that we are assuming $a$ to be a positive integer so the factor $w^{a-1}$ is entire.

We first evaluate the residue at the pole $|x|^2$. We need to determine the Laurent series for
each factor of the integrand about the point $|x|^2$. Now,
\begin{equation}
    w^{a-1} = (|x|^2+w-|x|^2)^{a-1}
    = 
    \sum_{l=0}^{a-1}
    {a-1 \choose l}
    |x|^{2(a-1-l)}
    (w-|x|^2)^l,
\end{equation}
and, if $|x| \neq 1$,
\begin{eqnarray}
    \frac{1}{(w-1)^{m+k}}
    &=&
    \frac{1}{((w-|x|^2)+(|x|^2-1))^{m+k}}
    =
    \frac{1}{(|x|^2-1)^{m+k}}
    \frac{1}{(1+(w-|x|^2)/(|x|^2-1))^{m+k}} \notag \\
    &=&
    \sum_{l=0}^\infty {-m-k \choose l}
    \frac{(w-|x|^2)^l}{(|x|^2-1)^{m+k+l}},
\end{eqnarray}
the last step being the binomial expansion for negative exponents. And, recalling complex analysis, we note, for convergence of this expansion, that when we compute the residue at a given point, we replace our contour by a small circle surrounding that point (in this case the point $|x|^2$), so that $w-|x|^2$ can be made as small as we wish on that contour.

Thus, because of the factor $(w-|x|^2)^{n+k}$ in the denominator of the integrand, we determine the residue of~\eqref{eq:F m n} at $|x|^2$ as the coefficient of $(w-|x|^2)^{n+k-1}$
in the series about $|x|^2$ of $w^{a-1}/(w-1)^{m+k}$, namely as equal to
\begin{equation}
    \label{eq:residue |x|^2}
    \frac{|x|^{2(a-1)}}{(|x|^2-1)^{m+k}}
    \sum_{l_1+l_2= n+k-1} {a-1 \choose l_1} { -m -k \choose l_2} \frac{|x|^{-2l_1}}{(|x|^2-1)^{l_2}},
\end{equation}
where $l_1$ and $l_2$ run over non-negative integers summing to $n+k-1$.

The residue at $1$ can similarly be evaluated, and equals
\begin{equation}
    \label{eq:residue 1}
    \frac{1}{(1-|x|^2)^{n+k}}
    \sum_{l_1+l_2= m+k-1} {a-1 \choose l_1} { -n -k \choose l_2} \frac{1}{(1-|x|^2)^{l_2}}.
\end{equation}
Summing the two residues and replacing $l_1$ with $n+k-1-l_2$ in the first sum, and $m+k-1-l_2$ in the second sum,
we get
\begin{eqnarray}
    \label{eq:F derivatives at 0,0}
    F_{a,k}^{(m,n)}(0,0,x)
    &=&
    \frac{|x|^{2(a-n-k)}}{(|x|^2-1)^{m+k}}
    \sum_{l=0}^{n+k-1} {a-1 \choose n+k-1-l} { -m -k \choose l} \frac{|x|^{2l}}{(|x|^2-1)^{l}} \notag \\
    &&+
    \frac{1}{(1-|x|^2)^{n+k}}
    \sum_{l=0}^{m+k-1} {a-1 \choose m+k-1-l} { -n -k \choose l} \frac{1}{(1-|x|^2)^{l}} \\
    &=& 
    {a - 1 \choose n + k - 1}
    \frac{|x|^{2(a-n-k)}}{(|x|^2-1)^{m+k}} 
    \, _2F_1\left(m + k, -n - k + 1; a - n - k + 1;|x|^2/\left(|x|^2-1\right)\right) \\
    &&+
    {a - 1 \choose m + k - 1}
    \frac{1}{(1-|x|^2)^{n+k}}
    \, _2F_1\left(n + k, -m - k + 1; a - m - k + 1;1/\left(|x|^2-1\right)\right)
\end{eqnarray}

We have thus arrived at formula~\eqref{eq:formula for F} of Theorem~\ref{thm:t1 t2}.

\subsection{An associated differential equation}
\label{subsection:diffeq}
 
Let 
\begin{equation}
    \label{eq:f_Nk}
    f_{N,k}(t): = 
    t
    \frac{
        \left(\det_{k\times k} \left[L_{N+i-j}^{(2k-1)}(t) \right]\right)'
    }
    {
        \det_{k\times k} \left[L_{N+i-j}^{(2k-1)}(t) \right]
    },
\end{equation}
i.e. $t$ times the logarithmic derivative of the displayed determinant. Then, $f_{N,k}(t)$ satisfies the differential equation.
\begin{eqnarray}
    &&
    t^2f''(t)^2+4tf'(t)^3-(4k^2-4Nt+t^2+4f(t))f'(t)^2
    \nonumber\\&&
    \qquad  -\left(2kN(2k+t)+(4N-2t)f(t)\right)f'(t)-\left(kN-f(t)\right)^2
    = 0.
    \label{eq:diffeq}
\end{eqnarray}
We initially found this differential equation for $f_{N,k}(t)$ experimentally.
In~\cite{kn:basor_et_al18}, a related function, with the entries of the determinant being $L_{N+i-j}^{(2k-1)}(-t)$ (i.e. $-t$ rather than $t$), is shown to satisfy an equivalent differential equation identified
as a $\sigma$-Painlev\'e V equation with three parameters. Specifically, 
our~\eqref{eq:diffeq} matches their equation (3-88),
with $f(t)$ equal to their $\tilde{\sigma}(t)-Nt/2$. Clarkson has pointed out to us that, while both functions satisfy the same Painlev\'e equation and initial conditions, they are different solutions. See~\cite{kn:cladun24} for a discussion about the non-uniqueness of solutions to this Painlev\'e equation.

The differential equation allows one, for example, to 
efficiently determine the moments of $\Lambda'(1)$ for
specific values of $k$ and arbitrary $N$, in comparison
to expanding the determinant in Theorem~\ref{thm:x=1} and differentiating, or summing the terms in Theorem~\ref{theo:sumofdets}.

For one, Theorem~\ref{thm:x=1} only requires us to determine the coefficients of powers of $t$ in the Maclaurin series of $\det_{k\times k} \left[L_{N+i-j}^{(2k-1)}(t) \right]$ up to terms of degree $2k$.
Writing the Maclaurin series of $f_{N,k}(t)$ up to this term as $\sum_{j=1}^{2k} c_j t^j$, with the coefficients $c_j$ depending on $N$ and $k$, the differential equation gives a recursion for the coefficients $c_j$. Note that $c_0=0$ since~\eqref{eq:f_Nk} has no constant term.

Substituting the Maclaurin series for $f_{N,k}(t)$
into the differential equation, and setting all the coefficients of powers of $t$ to 0, we successively find:
\begin{eqnarray}
    c_1 &=& -N/2 \notag \\
    c_2 &=& -{\frac {N \left( N+2\,k \right) }{4(2k-1)(2k+1)}} \notag \\
    c_3 &=& 0 \notag \\
    c_4 &=& {\frac { \left( 2N+2k+1 \right)  \left( 2N+2k-1 \right)  \left( N+2k \right) N}{ 16 \left( 2k-3 \right)  \left( 2k+3 \right)  \left( 2k-1 \right) ^{2} \left( 2k+1 \right) ^{2}}}\notag \\
    c_5 &=& 0 \notag \\
    c_6 &=& -{\frac { \left( 6{N}^{2}+12Nk+4{k}^{2}-1 \right)  \left( 2N+2k+1 \right)  \left( 2N+2k-1 \right)  \left( N+2k \right) N}{ 32 \left( 2k-5 \right)  \left( 2k+5 \right)  \left( 2k-1 \right) ^{3 } \left( 2k+1 \right) ^{3} \left( 2k-3 \right)  \left( 2k+3 \right) }} \notag \\
    c_7 &=& 0 \notag \\
    &\vdots&
\end{eqnarray}

To recover $\det_{k\times k} \left[L_{N+i-j}^{(2k-1)}(t) \right]$ from the series expansion of $f_{N,k}(t)$, we divide~\eqref{eq:f_Nk} by $t$ and then integrate with respect to $t$. This gives
\begin{equation}
   \log \det_{k\times k} \left[L_{N+i-j}^{(2k-1)}(t) \right] 
   = \sum_{j=1}^\infty c_j t^j/j + C
\end{equation}
where $C$ is the constant of integration. 
We can determine $C$ by setting $t=0$, specifically
\begin{eqnarray}
    C &=&
    \log \det_{k\times k} \left[L_{N+i-j}^{(2k-1)}(0) \right] \\
    &=& \log \det_{k\times k} \left[
    N+2k+i-j-1 \choose 2k-1 \right].
\end{eqnarray}
We can reverse the columns of the above determinant to get
\begin{eqnarray}
    C
    = \log \left((-1)^{k \choose 2} \det_{k\times k} \left[
    N+k+i+j-2 \choose 2k-1 \right] \right),
\end{eqnarray}
which we recognize as being equal to the log of the $2k$-th moment of $|\Lambda(1)|$
(see equation~\eqref{niceformdet}). This is consistent with Theorem~\ref{thm1} which includes
those moments as a factor on its right hand side. However, it is not immediately obvious from 
the differential equation that the function $f(N,k)$ is a polynomial in $N$ of degree $2k$. Rather the recursion of the differential equation only immediately yields that $f(N,k)$ is, for given $k$, a rational function of $N$.

We therefore get
\begin{equation}
   \det_{k\times k} \left[L_{N+i-j}^{(2k-1)}(t) \right] 
   =  
   \exp\left(\sum_{j=1}^\infty c_j t^j/j\right)
   \int_{U(N)} \abs{\Lambda_X(1)}^{2k}\dX.
\end{equation}
Finally, we can compose the Maclaurin series for $\exp$ with that of the exponent to get the Maclaurin series for
the determinant (with coefficients rational functions of $N$ and $k$), and substitute this into the right hand side
of Theorem~\ref{thm:x=1} to determine, for given $k$, the moments as functions of $N$.

For example, employing the  differential equation, we thus calculated the functions $f(N,7)$ and $f(N,8)$ that appear in Theorem \ref{thm1} in a few seconds. We found that
\begin{eqnarray}&&f(N,7)=N \big(395850216912899348 N^{13}+211532624477224855 N^{12}+150409183615071976
   N^{11}\nonumber\\&&\qquad+124529753766572861 N^{10}+108717221805362394 N^9+99500444626471665
   N^8\nonumber\\&&\qquad+94746015810816508 N^7+94787692493435963 N^6+100709159551410998
   N^5\nonumber\\&&\qquad+112720739347604080 N^4+123320249823386616 N^3+113230079581194576
   N^2\nonumber\\&&\qquad+70456770368487360 N+20564820256780800\big)/6249929305402823040000
   \end{eqnarray}
 and
\begin{eqnarray}&&
    f(N,8)=N \big(294731809494409081373 N^{15}+156236163525907760000
   N^{14}+111548545120295422636 N^{13}\nonumber\\&&\qquad+92004150627732094528 N^{12}+79833050367269223318
   N^{11}+72103908989822633280 N^{10}\nonumber\\&&\qquad+67069494832732475668 N^9+64329209879764227904
   N^8+63853927671987675845 N^7\nonumber\\&&\qquad+66311663923553088320 N^6+72711963466727700696
   N^5+82589042563096637568 N^4\nonumber\\&&\qquad+89406760833815044464 N^3+79567109646364454400
   N^2+47560703381244144000 N\nonumber\\&&\qquad+13348437875764992000\big)/18702760476120263262720000
\end{eqnarray}

Figure \ref{fig:78roots} shows a plot of the zeros of $f(N,7)$ and $f(N,8)$. Notice, as in our previous plots, that the arguments of the zeros are interlacing.


\begin{figure}[b]
\includegraphics{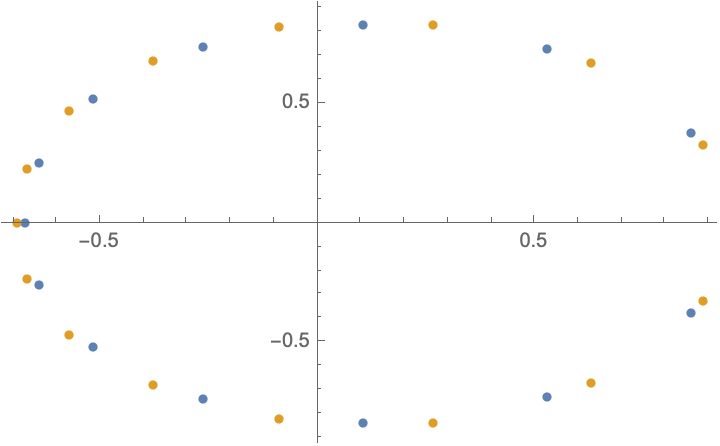}
\caption{Roots of $f(N,7)$ and $f(N,8)$}\label{fig:78roots}
\end{figure} 
 
\section{The case  $N=2$} \label{sect:four}
In the case $N=2$ we can work out more explicit formulas for the moments in terms of the $_3F_2$ hypergeometric function.

Consider
\begin{equation} \int_{U(2)} \Lambda_X'(x)^k \Lambda_{X^\dagger}'(\ol{x}) ^k \dX.
\end{equation}
For our first result in this section, we assume that $k$ is a positive integer and $x$ is complex. For a matrix $X\in U(2)$ with eigenvalues $e^{i\theta_1}$ and $e^{i\theta_2}$ we have
\begin{equation}\Lambda_X(x)=(1-xe^{-i\theta_1})(1-xe^{-i\theta_2})=1-x(e^{-i\theta_1}+e^{-i\theta_2})+x^2e^{-i(\theta_1+\theta_2)}
\end{equation}
 so that
 \begin{equation}
     \label{eq:charderiv}
     \Lambda_X'(x)=-e^{-i\theta_1}-e^{-i\theta_2}+2xe^{-i(\theta_1+\theta_2)}.
 \end{equation}
Using the fact that the joint probability density function for eigenvalues of matrices from $U(N)$ with Haar measure is 
\begin{equation}
    \frac{1}{N! (2\pi)^N}\prod_{1\leq j< k\leq N} |e^{i\theta_k}-e^{i\theta_j}|^2= \frac{1}{N! (2\pi)^N}|\Delta(e^{i\theta_1},\ldots,e^{i\theta_N})|^2,
\end{equation} 
our integral is
\begin{eqnarray}
\int_{U(2)} |\Lambda_X'(x)|^{2k} ~\dX&=&\frac{1}{8\pi^2} \int_{[0,2\pi]^2}(2x-(e^{i\theta_1}+e^{i\theta_2}))^k
(2\overline{x}-(e^{-i\theta_1}+e^{-i\theta_2}))^k |e^{i\theta_1}-e^{i\theta_2}|^2 ~d\theta_1 ~d\theta_2\nonumber \\
&=& \sum_{m,n=0}^k \binom{k}{m} \binom{k}{n} (2x)^m (2\overline{x})^n (-1)^{m+n}F(k-m,k-n)
\end{eqnarray}
where 
$$F(A,B):= \frac{1}{8\pi^2}\int_{[0,2\pi]^2} (e^{i\theta_1}+e^{i\theta_2})^A (e^{-i\theta_1}+e^{-i\theta_2})^B |e^{i\theta_1}-e^{i\theta_2}|^2 ~d\theta_1 ~d\theta_2.$$

Writing out $F(A,B)$, with binomial expansions for the quantities in brackets,
\begin{eqnarray}
    &&F(A,B)\nonumber \\
    &&=\frac{1}{8\pi^2}\int_{[0,2\pi]^2}\left(\sum_{a=0}^A \binom{A}{a}e^{i a \theta_1} e^{i(A-a)\theta_2}\right) \left(\sum_{b=0}^B \binom{B}{b}  e^{-i b\theta_1}e^{-i(B-b)\theta_2}\right)\left(2-e^{i\theta_1-i\theta_2}-e^{i\theta_2-i\theta_1}\right) d\theta_1 d\theta_2 \nonumber \\
    &&=\frac{1}{8\pi^2}\sum_{a=0}^A \sum_{b=0}^B \binom{A}{a} \binom{B}{b} \int_{[0,2\pi]^2}\left[ 2e^{i(a-b)\theta_1} e^{i(A-a-B+b)\theta_2} \right. -e^{i(a-b+1)\theta_1} e^{i(A-a-B+b-1)\theta_2} \nonumber \\
    &&\qquad \qquad\qquad\qquad\qquad\qquad\qquad\qquad\left. -e^{i(a-b-1)\theta_1}e^{i(A-a-B+b+1)\theta_2} \right]d\theta_1 d\theta_2.
\end{eqnarray}
For the $\theta_1$ integral to be non-zero, 
we require $a=b$ in the first term in the square brackets, $a=b-1$ in the second term and $a=b+1$ in the third term.
Each of these then requires $A=B$ for the $\theta_2$ integral to be non-zero.
This results in
\begin{eqnarray}
 \label{eq:F(A,A)}
 F(A,A) &=&  \sum_{a=0}^A \binom{A}{a}^2-\sum_{a=1}^A \binom{A}{a}\binom{A}{a-1}.
\end{eqnarray}
The first sum can be identified as the coefficient of $z^A$ in the squared binomial expansion of $(1+z)^A$, that is the coefficient of $z^A$ in the expansion of $(1+z)^{2A}$. So, the first sum equals $2A\choose A$. 
Furthermore, by writing ${A\choose a-1} =  {A \choose A-a+1}$, we recognize the second sum as the coefficient of $z^{A+1}$ in, as before, the square of the binomial expansion of $(1+z)^A$. Thus the second sum equals $2A \choose A+1$. However, ${2A \choose A} - {2A \choose A+1} = \frac{1}{A+1}{2A \choose A}$. In summary,
\begin{eqnarray}
F(A,B)=\left\{ \begin{array}{cc} \frac{1}{A+1} \binom{2A}{A} & \mbox{if $A=B$,} \\0 & \mbox{otherwise.}\end{array} \right.
\end{eqnarray}
Thus, we have
\begin{theorem}\label{thm:B1} For positive integer $k$ and any complex $x$ we have
\begin{eqnarray*}
\int_{U(2)} |\Lambda_X'(x)|^{2k} ~\dX&=&
\sum_{m=0}^k \frac{\binom{k}{m}^2 (4\abs{x}^2)^m\binom{2k-2m}{k-m}}{k-m+1}\\
&=&
\frac{\binom{2k}{k}}{k+1} {}_3F_2(-1-k,-k,-k;1,\frac 12 -k; |x|^2).
\end{eqnarray*}
\end{theorem}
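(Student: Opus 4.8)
The plan is to finish the reduction already carried out in the excerpt. At the point where the derivation pauses, the integral has been written as the double sum
$$\int_{U(2)} |\Lambda_X'(x)|^{2k}\,\dX = \sum_{m,n=0}^k \binom{k}{m}\binom{k}{n}(2x)^m(2\ol{x})^n(-1)^{m+n}F(k-m,k-n),$$
and the evaluation $F(A,B)=\frac{1}{A+1}\binom{2A}{A}$ when $A=B$, and $F(A,B)=0$ otherwise, has been established. For the first equality I would simply invoke this selection rule: the factor $F(k-m,k-n)$ vanishes unless $k-m=k-n$, i.e.\ $m=n$, so the double sum collapses onto its diagonal. On the diagonal one has $(2x)^m(2\ol{x})^m=(4|x|^2)^m$ and $(-1)^{m+n}=(-1)^{2m}=1$, while $F(k-m,k-m)=\frac{1}{k-m+1}\binom{2k-2m}{k-m}$, which reproduces the claimed single sum at once.

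For the second equality I would identify the single sum as a terminating ${}_3F_2$ by the ratio-of-consecutive-terms test, which is the cleanest route and avoids converting everything to factorials. Writing $c_m$ for the $m$-th summand, I would first check that the $m=0$ term equals $\binom{2k}{k}/(k+1)$, matching the stated prefactor. Then I would compute the ratio $c_{m+1}/c_m$: using $\binom{k}{m+1}/\binom{k}{m}=(k-m)/(m+1)$ together with the observation that $\frac{1}{k-m+1}\binom{2k-2m}{k-m}$ is the Catalan number $C_{k-m}$, whose recursion gives $C_{k-m-1}/C_{k-m}=\frac{k-m+1}{2(2k-2m-1)}$, the ratio simplifies to $\frac{2(k-m)^2(k-m+1)}{(m+1)^2(2k-2m-1)}$.

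On the other side, the series $\frac{\binom{2k}{k}}{k+1}\,{}_3F_2(-1-k,-k,-k;1,\tfrac12-k;|x|^2)$ has consecutive-term ratio $\frac{(-1-k+m)(-k+m)^2}{(1+m)(\tfrac12-k+m)(m+1)}$. I would rewrite $-1-k+m=-(k-m+1)$, $(-k+m)^2=(k-m)^2$, and $\tfrac12-k+m=-\tfrac12(2k-2m-1)$; the two minus signs cancel and the factor $-\tfrac12$ produces the $2$ in the numerator, so the hypergeometric ratio coincides exactly with $c_{m+1}/c_m$. Since the factor $(-k)_m$ forces the ${}_3F_2$ to terminate at $m=k$, matching the finite range of the sum, equal initial term plus equal successive ratios give term-by-term agreement, establishing the second equality.

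The computations throughout are routine, the genuine work — evaluating $F(A,B)$ — being already complete. The only mildly delicate point I anticipate is the bookkeeping for the half-integer Pochhammer symbol $(\tfrac12-k)_m$ when matching the term ratios, which I would handle by the factor-of-two rewriting above rather than by passing to factorials; no real obstacle is expected.
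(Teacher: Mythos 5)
Your proof is correct and follows essentially the same route as the paper: the first equality is exactly the paper's diagonal collapse via the selection rule $F(k-m,k-n)=0$ unless $m=n$, and the second equality is, as in the paper, a term-by-term comparison of the two terminating series in powers of $|x|^2$. The only cosmetic difference is that you match the initial term and consecutive-term ratios (using the Catalan recursion and the rewriting $\tfrac12-k+m=-\tfrac12(2k-2m-1)$), whereas the paper cross-multiplies and verifies the equivalent Pochhammer-to-factorial identity; both verifications are routine and check out.
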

To prove that the two right-hand sides in the statement of the theorem are equal, we use the
series for the 
 ${}_3F_2$  hypergeometric function:  
\begin{equation} \label{eq:F3def}
    {}_3F_2(a_1,a_2,a_3;b_1,b_2;z)=\sum_{n=0}^\infty \frac{(a_1)_n (a_2)_n (a_3)_n}{(b_1)_n (b_2)_n} \frac{z^n}{n!},
\end{equation}
where $(a)_n$ is the rising factorial (Pochhammer symbol) defined by 
\begin{eqnarray}
    (a)_0=1, \;\;\; (a)_n=a(a+1)\cdots(a+n-1), \; {\rm for}\;n\geq 1.
\end{eqnarray}
The idea is to compare the coefficients of like powers of $x$ in both expressions.
This comparison reduces to proving that
\begin{eqnarray}(2k)!(-1-k)_m(-k)^2_m (k-m)!^3(k-m+1)!=k!^3(k+1)!(2k-2m)!4^m(1/2-k)_m.
\end{eqnarray}
Also, it helps to turn the Pochhammer symbols into factorials:
\begin{equation}(-k)_m=(-1)^m k!/(k-m)!
\end{equation}
and
\begin{equation}
4^m (1/2-k)_m=(-1)^m \frac{(2k)!(k-m)!}{(2k-2m)! k!}.
\end{equation}

For our next result, we will also allow $k$ to be complex, but require that $|x|>1$.


For a matrix $X\in U(2)$ with eigenvalues $e^{i\theta_1}$ and $e^{i\theta_2}$ we have from (\ref{eq:charderiv}) that 
\begin{equation}
    \label{eq:with beta}
    \Lambda_X'(x)^k=
    e^{2 \pi i \beta k}
    (2x)^ke^{-ik(\theta_1+\theta_2)}
    \left(1-\frac{e^{i\theta_1}+e^{i\theta_2}}{2x }\right)^k,
\end{equation}
where the factor $\exp(2 \pi i \beta k)$ is to account, for complex exponentiation, for the fact that we
need to pay attention to the particular branch of the logarithm being used when exponentiating complex numbers with a non-integer exponent. We take the branch to be principal with argument lying
in $(-\pi,\pi]$. Therefore, the extra factor $\exp(2 \pi i \beta k)$ has $\beta \in \mathbb{Z}$,
depending on $\theta_1, \theta_2$ and $x$, selected to ensure that the imaginary part of the logarithm of the rhs above, before multiplying by $k$, lies in $(-\pi,\pi]$.
However, we will be multiplying by the $k$-th power of the conjugate $\Lambda_{X^\dagger}'(\ol{x})$ in~\eqref{eq:conjugate expression}, and that formula
similarly requires an extra factor but with opposite argument, i.e. $\exp(-2 \pi i \beta k)$.
These two factors thus cancel and we focus our attention away from it.

We have assumed that $|x|>1$; therefore we may expand the last factor into an absolutely convergent binomial series and have
\begin{equation}
    \Lambda_X'(x)^k=
    e^{2 \pi i \beta k}
    (2x)^ke^{-ik(\theta_1+\theta_2)}\sum_{m=0}^\infty \binom{k}{m}\left(-\frac{e^{i\theta_1}+e^{i\theta_2}}{2x }\right)^m.
\end{equation}
Now we use the ordinary binomial theorem and have 
\begin{equation}
    \Lambda_X'(x)^k=
    e^{2 \pi i \beta k}
    (2x)^ke^{-ik(\theta_1+\theta_2)}\sum_{m1,m2=0}^\infty \binom{k}{m_1+m_2}\frac{(m_1+m_2)!}{m_1!m_2!}(-1)^{m_1+m_2}\frac{(e^{m_1i\theta_1}e^{im_2\theta_2})}{(2x)^{m_1+m_2}}.
\end{equation}
This expression simplifies to
\begin{equation}
    \Lambda_X'(x)^k=
    e^{2 \pi i \beta k}
    (2x)^ke^{-ik(\theta_1+\theta_2)}\sum_{m1,m2=0}^\infty  \frac{\Gamma(k+1)}{\Gamma(k-m_1-m_2+1)m_1!m_2!}(-1)^{m_1+m_2}\frac{(e^{m_1i\theta_1}e^{im_2\theta_2})}{(2x)^{m_1+m_2}}.
\end{equation}
Similarly, we have
\begin{equation}
    \label{eq:conjugate expression}
    \Lambda_{X^\dagger}'(\ol{x})^k=
    e^{-2 \pi i \beta k}
    (2\ol{x})^ke^{ik(\theta_1+\theta_2)}
    \sum_{m3,m4=0}^\infty  
    \frac{\Gamma(k+1)}{\Gamma(k-m_3-m_4+1)m_3!m_4!}
    (-1)^{m_3+m_4}
    \frac{(e^{-m_3i\theta_1}e^{-im_4\theta_2})}{(2\ol{x})^{m_3+m_4}}.
\end{equation}
In preparation for computing the average over $U(2)$ we observe that if we integrate the product of the above two expressions over $[0,2\pi]^2$ we get
\begin{eqnarray}\frac{1}{(2\pi)^2} \int_{[0,{2\pi}]^2}\Lambda_X'(x)^k \Lambda_{X^\dagger}'(\ol{x}) ^k
d\theta_1d\theta_2&=&
(2|x|)^{2k} \Gamma(k+1)^2\sum_{m_1,m_2}\frac{(2|x|)^{-2m_1-2m_2}}{m_1!^2m_2!^2\Gamma(k+1-m_1-m_2)^2}\nonumber\\
&=& 4^k |x|^{2 k} \, _3F_2\left(\frac{1}{2},-k,-k;1,1;\frac{1}{|x|^2}\right).
\end{eqnarray}

Now we include the factor 
\begin{equation}\frac{|\Delta(e^{i\theta_1},e^{i\theta_2})|^2}{2}=\frac{(e^{i\theta_1}-e^{i\theta_2})(e^{-i\theta_1}-e^{-i\theta_2})}{2}=1-\frac{e^{i(\theta_1-\theta_2)}+e^{i(\theta_2-\theta_1)}}{2}
\end{equation}
in the integrand. We get
\begin{eqnarray}
2^{2 k} |x|^{2 k} \, _3F_2\left(\frac{1}{2},-k,-k;1,1;\frac{1}{|x|^2}\right)-2^{2 k-2} k^2 |x|^{2 k-2} \,
   _3F_2\left(\frac{3}{2},1-k,1-k;2,3;\frac{1}{|x|^2}\right).
   \end{eqnarray}
   Upon using (\ref{eq:F3def}) for all three of the  
   ${}_3F_{2}$ functions involved, we have
   \begin{theorem}\label{thm:b2} For all $k,x \in \C$ with  $|x|>1$ we have
   \begin{eqnarray*}
   \int_{U(2)} |\Lambda_X'(x)|^{2k} dX=2^{2 k} |x|^{2 k} \, _3F_2\left(\frac{1}{2},-k,-k;1,2;|x|^{-2}\right).
   \end{eqnarray*}
   \end{theorem}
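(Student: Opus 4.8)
The preceding lines have already evaluated the average as the difference of two ${}_3F_2$ functions, so what remains is purely a hypergeometric identity. Writing $z=|x|^{-2}$ and recalling that $|x|^{2k-2}=|x|^{2k}z$, I would first factor $2^{2k}|x|^{2k}$ out of all three terms, reducing the claim to the power series identity
\[
{}_3F_2\!\left(\tfrac12,-k,-k;1,1;z\right)-\frac{k^2z}{4}\,{}_3F_2\!\left(\tfrac32,1-k,1-k;2,3;z\right)={}_3F_2\!\left(\tfrac12,-k,-k;1,2;z\right),
\]
valid for all $k\in\C$. Since this is an identity of power series in $z$, it suffices to match the coefficient of each $z^n$.

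Expanding all three series via~\eqref{eq:F3def}, the $n=0$ coefficients are both equal to $1$, so I may take $n\ge1$. The prefactor $z$ on the middle function shifts its index, so its contribution to the coefficient of $z^n$ comes from the $z^{n-1}$ term of $\,{}_3F_2(\tfrac32,1-k,1-k;2,3;z)$. Thus the coefficient identity to verify is
\[
\frac{(1/2)_n(-k)_n^2}{(n!)^3}-\frac{k^2}{4}\,\frac{(3/2)_{n-1}(1-k)_{n-1}^2}{(2)_{n-1}(3)_{n-1}(n-1)!}=\frac{(1/2)_n(-k)_n^2}{(n!)^2(n+1)!}.
\]
The key step is to align the shifted Pochhammer symbols using the elementary relations $(3/2)_{n-1}=2(1/2)_n$, $(-k)_n=-k(1-k)_{n-1}$ (so $(-k)_n^2=k^2(1-k)_{n-1}^2$), $(2)_{n-1}=n!$, and $(3)_{n-1}=(n+1)!/2$. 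Substituting these into the middle term, the factors of $k^2$ and the powers of $2$ cancel and it collapses to $(1/2)_n(-k)_n^2/\bigl(n!\,(n+1)!\,(n-1)!\bigr)$.

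It then remains to check the purely numerical identity obtained after factoring out the common $(1/2)_n(-k)_n^2$ (which is a genuine shared multiplicand of all three terms, so no division is needed). Bringing the two $(n!)^{-2}$-type terms to one side gives $\bigl(\tfrac1{n!}-\tfrac1{(n+1)!}\bigr)=\tfrac{n}{(n+1)!}$, and combining this with $n/(n!)^2=1/\bigl(n!\,(n-1)!\bigr)$ reproduces exactly the middle term, completing the verification. I expect the only real obstacle to be bookkeeping: correctly pairing the index shift from the prefactor $z$ with the parameter shifts $3/2$ versus $1/2$, $1-k$ versus $-k$, and the denominator parameters $2,3$ versus $1,1,2$, while tracking the powers of $2$ and $k$. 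There is no analytic subtlety in the $|x|>1$ case, since the series converge absolutely there; the extension to $|x|=1$ for $\Re k>-1$ would follow from continuity of both sides as $|x|\downarrow1$.
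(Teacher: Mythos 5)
Your proposal is correct and follows exactly the route the paper takes: the paper's proof of this theorem is precisely the reduction to the difference of two $_3F_2$ series followed by the instruction to "use the series definition for all three $_3F_2$ functions involved," i.e.\ a termwise comparison of coefficients of $z^n=|x|^{-2n}$. Your Pochhammer manipulations ($(3/2)_{n-1}=2(1/2)_n$, $(-k)_n^2=k^2(1-k)_{n-1}^2$, $(2)_{n-1}=n!$, $(3)_{n-1}=(n+1)!/2$) and the final numerical identity $\tfrac{1}{(n!)^3}-\tfrac{1}{n!\,(n+1)!\,(n-1)!}=\tfrac{1}{(n!)^2(n+1)!}$ check out, so you have simply supplied the details the paper leaves to the reader.
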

\begin{remark} 
While our derivation was for $|x|>1$, we can extend it by continuity to $|x|=1$ when $\Re k > -1$. This is because of the standard fact that the sum defining the hypergeometric function  ${}_3F_2(a_1,a_2,a_3;b_1,b_2;z)$ converges at $z=1$ if
$\Re(b_1 + b_2 - a_1 -a_2 -a_3) > 0$. 
Here this condition reads $\Re k > -5/4$. Furthermore, on the left hand side, we can specialize to $x=1$, 
by rotational invariance. For given $k$, the integrand is bounded away from the origin (mod $2\pi$). Near 
the origin, on examining~\eqref{eq:with beta} and~\eqref{eq:conjugate expression},
we can compare the moment to the integral
$\int_{[0,2\pi]^2} (\theta_1+\theta_2)^{2k} d\theta_1 d\theta_2$, which converges and is continuous for $\Re{k} > -1$.
\end{remark}

Now let us compare our results.
Let $G_0(k,x)$ be the result of numerical integration of $|\Lambda_X'(x)|^{2k}$; let
$G_1(k,x)=\frac{\binom{2k}{k}}{k+1} {}_3F_2(-1-k,-k,-k;1,-\frac 12 -k; x^2)$; and let
$G_2(k,x)=2^{2 k} x^{2 k} \, _3F_2\left(\frac{1}{2},-k,-k;1,2;x^{-2}\right)$.
We compare the triples $(G_0,G_1,G_2)$ for various $k$ and $x$.
First of all, we observe that these are all the same if $k$ is a positive integer:
$$(k,x)=(3,5/4)\to (G_0,G_1,G_2)=(713.203,713.203,713.203)$$
$$(k,x)=(3,1/3)\to (G_0,G_1,G_2)=(14.8656,14.86,14.86)$$
Next, if $k$ is not an integer and $x>1$, then $G_0$ and $G_2$ agree
$$(k,x)=(5/4,9/5)\to (G_0,G_1,G_2)=(27.5617, 14.4-.04 i, 27.5617)$$
Finally, if $k$ is not an integer and $0<x<1$ then none of them agree:
$$(k,x)=(3/4,1/5)\to (G_0,G_1,G_2)=(1.01969, 1.0409, 1.15548 - 0.13579 i)$$
So, Carlson's theorem does {\it not} apply here, in that,
for $x>1$, the two functions $G_1$ and
$G_2$ agree for all positive integers $k$, yet do not always agree in $\Re k >0$,
implying that the needed growth conditions for Carlson's Theorem do not hold.
Indeed, the function ${}_3F_2(-1-k,-k,-k;1,-\frac 12 -k; x^2)$ seems to grow too quickly along
the negative imaginary axis.

It remains to find a formula when $k$ is not an integer and $0<x<1$.

\section{Radial distribution of the roots of $\Lambda_X'(x)$}
\label{sect:zerodist}

In this section we obtain a formula for a logarithmic  average of $\Lambda'_X(r)$ for $N=2$. 
\begin{theorem}
\label{thm:log lambda-prime}
For $0\leq r<1$ we have
\begin{eqnarray*}  \int_{U(2)} \log|\Lambda_X'(r)| ~dX
  =\frac{2 r \,
   _3F_2\left(\frac{1}{2},\frac{1}{2},\frac{1}{2};\frac{3}{2},\frac{3}{2};r^2\right)
   +r \sqrt{1-r^2} +\sin ^{-1}(r)}{\pi }-\frac{1}{2}.
\end{eqnarray*}
\end{theorem}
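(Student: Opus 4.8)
The plan is to push everything down to the two eigenangles of $X\in U(2)$ and then exploit the mean-value (Jensen) property of $\log|\cdot|$ on circles. From~\eqref{eq:charderiv}, specializing $x=r$ and factoring out $e^{-i(\theta_1+\theta_2)}$, one gets $|\Lambda_X'(r)| = |2r - e^{i\theta_1} - e^{i\theta_2}|$. Writing the $U(2)$ Haar density as in the text, this gives
\[
\int_{U(2)} \log|\Lambda_X'(r)|\,\dX
= \frac{1}{8\pi^2}\int_{[0,2\pi]^2} \log|2r - e^{i\theta_1} - e^{i\theta_2}|\;\bigl|e^{i\theta_1}-e^{i\theta_2}\bigr|^2\,d\theta_1\,d\theta_2 .
\]
I would then change variables to the sum and difference of the angles. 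Setting $\psi = \theta_1-\theta_2$ (a measure-preserving relabelling of the torus, keeping $\theta_1$) and using $e^{i\theta_1}+e^{i\theta_2} = 2\cos(\psi/2)\,e^{i(\theta_1-\psi/2)}$ together with $|e^{i\theta_1}-e^{i\theta_2}|^2 = 4\sin^2(\psi/2)$, the inner integral over $\theta_1$ becomes, for fixed $\psi$, an average of $\log|2r - 2\cos(\psi/2)\,e^{i\phi}|$ over a full period of $\phi=\theta_1-\psi/2$.

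The key tool is the elementary identity $\frac{1}{2\pi}\int_0^{2\pi}\log|A - Be^{i\phi}|\,d\phi = \log\max(|A|,|B|)$, valid for all $A,B$. Applying it with $A=2r$, $B=2\cos(\psi/2)$ collapses the double integral to the single integral
\[
\int_{U(2)} \log|\Lambda_X'(r)|\,\dX
= \frac{2}{\pi}\int_0^{\pi}\bigl(\log 2 + \log\max(r,|\cos\beta|)\bigr)\sin^2\beta\,d\beta ,
\]
after substituting $\beta=\psi/2$. By the symmetry $\beta\mapsto\pi-\beta$ this equals $\tfrac{4}{\pi}\int_0^{\pi/2}$, and on $[0,\pi/2]$ the maximum switches at $\beta=\alpha:=\arccos r$: it equals $\cos\beta$ for $\beta<\alpha$ and $r$ for $\beta>\alpha$. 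Splitting at $\alpha$, the $\log 2$ and $\log r$ contributions are elementary (via $\int\sin^2\beta$), so the whole computation reduces to the single nonelementary piece $M=\int_0^{\alpha}\log(\cos\beta)\sin^2\beta\,d\beta$.

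To treat $M$ I would integrate by parts with $u=\log\cos\beta$ and $dv=\sin^2\beta\,d\beta$; the boundary term produces a further multiple of $\log r$ which combines with the split-off $\log r$ term into a clean $\tfrac{\pi}{4}\log r$, and the leftover integral reduces to $\int_0^{\alpha}\beta\tan\beta\,d\beta$. Collecting everything yields the intermediate formula
\[
\int_{U(2)} \log|\Lambda_X'(r)|\,\dX
= \log(2r) + \frac{2}{\pi}\int_0^{\alpha}\beta\tan\beta\,d\beta - \frac{\alpha}{\pi} + \frac{r\sqrt{1-r^2}}{\pi}.
\]
The crux, and what I expect to be the main obstacle, is the identity $\int_0^{\arccos r}\beta\tan\beta\,d\beta = \int_0^r \frac{\arcsin t}{t}\,dt - \frac{\pi}{2}\log(2r)$: both sides diverge as $r\to0$, so the apparent $\log(2r)$ singularity is spurious and must cancel the explicit $\log(2r)$ above. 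Rather than evaluate $\int\beta\tan\beta$ via dilogarithms, I would prove this by differentiating in $r$. Leibniz's rule gives the left side $-\arccos(r)/r$ (using $\tan(\arccos r)=\sqrt{1-r^2}/r$), while the right side gives $\frac{\arcsin r}{r}-\frac{\pi}{2r}=-\frac{\arccos r}{r}$; the two then agree at the endpoint $r=1$ since $\int_0^1 \frac{\arcsin t}{t}\,dt = \frac{\pi}{2}\log 2$.

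Finally, substituting the identity cancels $\log(2r)$, and $-\alpha/\pi = -\tfrac12 + \arcsin(r)/\pi$, leaving $\int_{U(2)}\log|\Lambda_X'(r)|\,\dX = \frac{1}{\pi}\bigl(2\int_0^r\frac{\arcsin t}{t}\,dt + r\sqrt{1-r^2} + \arcsin r\bigr) - \tfrac12$. It remains to recognize $\int_0^r \frac{\arcsin t}{t}\,dt = r\,_3F_2(\tfrac12,\tfrac12,\tfrac12;\tfrac32,\tfrac32;r^2)$, which follows by integrating $\frac{\arcsin t}{t} = \sum_{n\ge0}\binom{2n}{n}\frac{t^{2n}}{4^n(2n+1)}$ term by term and matching against the $_3F_2$ series (using $(1/2)_n = (2n)!/(4^n n!)$ and $(3/2)_n = (2n+1)!/(4^n n!)$, which turn the general coefficient into $\binom{2n}{n}/(4^n(2n+1)^2)$). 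This produces the stated formula, and the sanity check $r\to0$ returns $-\tfrac12$.
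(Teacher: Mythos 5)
Your proof is correct, but it takes a genuinely different route from the paper's. The paper applies Jensen's formula to $\Lambda_X'$ on the disc $|z|\le r$, uses rotational invariance of Haar measure to identify the circular average of $\log|\Lambda_X'(re^{i\theta})|$ with the left-hand side, and is thereby led to compute the $U(2)$-average of the radial zero-counting function $N_{\Lambda'}(u)$ of the single critical point $(e^{i\theta_1}+e^{i\theta_2})/2$; the answer $\frac{2u\sqrt{1-u^2}+\cos^{-1}(1-2u^2)}{\pi}$ is then integrated against $du/u$, plus the separately computed constant $\int_{U(2)}\log|\Lambda_X'(0)|\,\dX=-\tfrac12$. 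You instead compute the eigenangle integral head-on: after passing to sum/difference angles, you invoke the mean-value identity $\frac{1}{2\pi}\int_0^{2\pi}\log|A-Be^{i\phi}|\,d\phi=\log\max(|A|,|B|)$ (which is Jensen's formula applied to a linear polynomial) to collapse one angular integration, and then finish with elementary calculus, the only delicate point being the antiderivative identity
\begin{equation*}
\int_0^{\arccos r}\beta\tan\beta\,d\beta=\int_0^r\frac{\arcsin t}{t}\,dt-\frac{\pi}{2}\log(2r),
\end{equation*}
which you correctly establish by matching derivatives on $(0,1)$ and values at $r=1$ (via $\int_0^1\frac{\arcsin t}{t}\,dt=\frac{\pi}{2}\log 2$). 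I verified your intermediate formula, the cancellation of the $\log(2r)$ and $\alpha\log r$ terms, and the $_3F_2$ identification; all are sound, noting only that your derivation assumes $0<r<1$ and the case $r=0$ follows by continuity (or the direct computation giving $-\tfrac12$). What each approach buys: the paper's route yields, as a byproduct, the explicit averaged radial distribution of the zeros of $\Lambda_X'$ inside the unit disc, which is the actual theme of that section and of independent interest, while your route is more self-contained, avoiding the Stieltjes-integral bookkeeping and the interchange of matrix averaging with zero counting. Amusingly, the two computations merge at the end: since $\cos^{-1}(1-2u^2)=2\arcsin u$ on $[0,1]$, the paper's integral $\frac{1}{\pi}\int_0^r\cos^{-1}(1-2u^2)\,\frac{du}{u}$ is literally your $\frac{2}{\pi}\int_0^r\frac{\arcsin t}{t}\,dt$, so both proofs produce the hypergeometric term from the same series.
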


One interest in a logarithmic average such as this is that it is intimately connected with the distribution of the zeros of $\Lambda_X'(z)$ inside the unit circle.
This question has been studied numerically  in  \cite{kn:dffhmp}. In \cite{kn:mezzadri03},   for large matrix size,  the tails of the distribution are explicitly determined. 

This question is also connected to the distribution of the zeros of the derivative of the Riemann zeta function to the right of the half-line, which in turn is at the heart of the method   developed by Levinson when he proved that at least one-third of the zeros of the Riemann zeta-function are on the critical line.

We are able to derive our theorem from Jensen's formula
together with a calculation about the radial distribution of the roots of $\Lambda'_X(r)$.
Jensen's formula asserts that
\begin{eqnarray}
\frac{1}{2\pi} \int_0^{2\pi} \log |f(re^{i\theta})|~d\theta= \log|f(0)|+\log\frac{r^n}{|z_1\dots z_n|}
\end{eqnarray}
for any function  $f$  which is analytic in the disc $|z|\le r$, with $f(0)\neq 0$, and zeros $z_1,\dots, z_n$ inside that disc, counted with multiplicity.
The last term in this expression can be written as a Stieltjes integral as
\begin{equation}\int_0^r \log\frac r u d N_f(u)
\end{equation}
where the radial distribution of the zeros is given by
\begin{equation}N_f(u)=\sum_{z_n, f(z_n)=0\atop 
|z_n|\le u}1
\end{equation}
i.e. it is  the counting function of the zeros of $f$ in $|z|\le u$.

Jensen's formula, integrated over $U(2)$, becomes 
\begin{eqnarray}
    \label{eq:doubleavg1}
    \int_{U(2)} \frac{1}{2\pi} \int_0^{2\pi} \log |\Lambda_X'(re^{i\theta})|~d\theta ~\dX
    &=&  \int_{U(2)} \log|\Lambda_X'(0)|~\dX +
    \int_{U(2)} \int_0^r \left(\log\frac r u \right) N'_{\Lambda'}(u)~du~dX
    \notag \\
    &=& 
    \int_{U(2)} \log|\Lambda_X'(0)|~\dX +
    \int_{U(2)} \int_0^r N_{\Lambda'}(u)/u~du~dX,
\end{eqnarray}
with the last equality by integration by parts.
In (\ref{eq:doubleavg1}), the invariance of the unitary group under rotation implies we can replace the inner integrand on the left hand side with $|\Lambda_X'(r)|^{2k}$.


By \eqref{eq:charderiv}, the first integral on the right hand side is
\begin{eqnarray}
    \int_{U(2)}
    \log|\Lambda_X'(0)|~\dX
    &=&\frac{1}{4\pi^2}
    \int_{[0,2\pi]^2}
    (\log| e^{i(\theta_1 -\theta_2)} + 1|)
    \left(1-\cos(\theta_1-\theta_2)\right)
    ~d\theta_1 ~d\theta_2,
\end{eqnarray}
where we have pulled out 
$|-e^{-i\theta_1}|=1$ from the absolute values without affecting it.
The integrand is therefore a function of
$\theta_1 - \theta_2$.
But the integrand is periodic with period $2\pi$.
For any given $\theta_2$, the inner integral with respect to $\theta_1$ thus evaluates the same on substituting $\theta_1 = \theta_2+\theta$. Therefore, the right hand side above reduces to a one-dimensional integral and we have
\begin{eqnarray}
    \label{eq:1 dim}
    \int_{U(2)}
    \log|\Lambda_X'(0)|~\dX
    &=&\frac{1}{2\pi}
    \int_0^{2\pi}
    (\log| e^{i\theta} + 1|)
    \left(1-\cos(\theta)\right)
    d\theta
    \notag \\
    &=&\frac{1}{2\pi}
    \int_0^{2\pi}
    \log| e^{i\theta} + 1| d\theta
    -\frac{1}{2\pi}
    \int_0^{2\pi}
    (\log| e^{i\theta} + 1|) \cos(\theta) d\theta.
\end{eqnarray}
The first integral on the right hand side can be evaluated using Gauss' Mean Value Theorem, for the function $\log(1+re^{i\theta})$, with $r<1$, which we may do since the function $\log(1+z)$ is analytic in the unit circle,
and then taking the real part which is $\log|1+re^{i\theta}|$.
Letting $r \to 1^{-}$, this gives a value, by Gauss' Mean Value Theorem, of $\log(1)=0$ for the first integral.

For the second integral
we introduce an extra factor of $1/2$ in front of the integral so as to square the absolute value inside the logarithm, and also use $|e^{i\theta} + 1|^2 = 2+2 \cos(\theta)$.
Integrating by parts we get
\begin{eqnarray}
    &&-\frac{1}{4\pi}
    \int_0^{2\pi}
    \log( 2 + 2\cos(\theta))
    \cos(\theta)
    d\theta
    =
    -\frac{1}{4\pi}
    \int_0^{2\pi}
    \frac{\sin(\theta)^2}
    {1 + \cos(\theta)}
    d\theta
    = -1/2,
\end{eqnarray}
with the last step on replacing $\sin(\theta)^2 = 1- \cos(\theta)^2$ $= (1-\cos(\theta))(1+\cos(\theta))$, and cancelling the last factor.

Putting this together, ~\eqref{eq:1 dim} becomes
\begin{eqnarray}
    \label{eq:2nd term}
    \int_{U(2)}
    \log|\Lambda_X'(0)|~\dX
    = -1/2.
\end{eqnarray}

Next we determine, for given $u$, the average over $U(N)$ of $N_{\Lambda_X'}(u)$ so as to swap order of integration in the last integral
in~\eqref{eq:doubleavg1}.
From (\ref{eq:charderiv}) above, we see that the zero of $\Lambda_X'(z)$ for a matrix $X\in U(2)$ is at $(e^{i\theta_1}+e^{i\theta_2})/2$.
Therefore,
\begin{eqnarray}
\int_{U(2)}N_{\Lambda_X'}(u)~dX=\frac{1}{8\pi^2} 
\int\limits_{\substack{[0, 2\pi]^2 \\ |e^{i\theta_1} + e^{i\theta_2}|\le 2u}} 
|e^{i\theta_1}-e^{i\theta_2}|^2 ~d\theta_1~d\theta_2.
\end{eqnarray}
As before, we can reduce the integral on the right hand side to a one-dimensional integral
by pulling out 
$e^{i\theta_2}$ from the absolute values in the above expression
and substituting $\theta_1 = \theta_2+\theta$ while exploiting periodicity of the exponential function, so that
\begin{eqnarray}
    \int_{U(2)} N_{\Lambda_X'}(u)~dX=\frac{1}{4\pi}
    \int\limits_{\substack{[0, 2\pi] \\ |1+e^{i\theta}|\le 2u}} 
    |e^{i\theta}-1|^2 ~d\theta.
\end{eqnarray}
The integrand simplifies as $2-2\cos(\theta)$.
Furthermore,
squaring the inequality $|1+e^{i\theta}| \le 2u$, and using $\cos(\theta)^2+ \sin(\theta)^2=1$, we have
\begin{eqnarray}
    1+\cos\theta\le 2u^2,
\end{eqnarray}
or
\begin{equation}
    \cos^{-1} (2u^2-1)\le  \theta \le \pi.
\end{equation}
The requirement $\theta \le \pi$ is on account of the $\cos^{-1}$ function, but our integral is over
$[0,2\pi]$. Hence we need to include an extra factor of 2 to take into account $\theta \in (\pi,2\pi]$.
Thus, we have 
\begin{eqnarray}
\int_{U(2)}N_{\Lambda_X'}(u)~dX&=&\frac{1}{2\pi} \int_{ \cos^{-1}(2u^2-1)}^\pi (2-2\cos\theta) ~d\theta\nonumber\\
&=& \frac{2 u\sqrt{1-u^2} +\cos^{-1}\left(1-2 u^2\right)}{\pi }.
\end{eqnarray}

After swapping the order of integration of the last double integral in~\eqref{eq:doubleavg1},
we substitute the above. 
But
\begin{equation}
    \frac{2}{\pi} \int_0^r \sqrt{1-u^2} du
    = \frac{r\sqrt{1-r^2} + \sin^{-1}(r)}{\pi}.
\end{equation}
Furthermore,
\begin{equation}
    \label{eq:arccos integral}
    \frac{1}{\pi}
    \int_0^r 
    \frac{\cos^{-1}(1-2u^2)}{u} du
\end{equation}
can be expressed in terms of the series
\begin{equation}
    \cos^{-1}(1-2u^2) =
    4 \sum_{n=0}^\infty \frac{1}{2n+1} {2n \choose n} \left( \frac{u}{2} \right)^{2n+1},
\end{equation}
so that \eqref{eq:arccos integral} equals
\begin{equation}
    \frac{4}{\pi} \sum_{n=0}^\infty
    \frac{1}{(2n+1)^2} {2n \choose n} \left( \frac{r}{2} \right)^{2n+1},
\end{equation}
which can also be expressed in terms of the $_3F_2$ hypergeometric function as
\begin{equation}
   \frac{2r}{\pi}
   {}_3F_2\left(\frac{1}{2},\frac{1}{2},\frac{1}{2};\frac{3}{2},\frac{3}{2};r^2\right),
\end{equation}
thus completing the proof of the theorem.

This calculation may give a (very!) small amount of insight into what is going on in the case of general $N\times N$ matrices.

\section{Acknowledgments}

Thanks to the American Institute of Mathematics and the NSF FRG grant DMS-1854398 for supporting several visits of the authors during the course of this work.



\section{Affiliations}

Emilia Alvarez, University of Bristol, Orcid ID 0000-0003-4809-0888

Brian Conrey, American Institute of Mathematics, Orcid ID 0000-0003-3310-6736

Michael Rubinstein, University of Waterloo

Nina Snaith, University of Bristol, Orcid ID 0000-0002-0657-1918

\newcommand{\etalchar}[1]{$^{#1}$}


\end{document}